\newcommand{\grammar}{\ensuremath{\Lambda}}
\newcommand{\dee}{\mathrm{d}}
\newcommand{\cB}{\mathcal{B}}
\newcommand{\cC}{\mathcal{C}}
\newcommand{\cE}{\mathcal{E}}
\newcommand{\cF}{\mathcal{F}}
\newcommand{\cH}{\mathcal{H}}
\newcommand{\cI}{\mathcal{I}}
\newcommand{\cL}{\mathcal{L}}
\newcommand{\cM}{\mathcal{M}}
\newcommand{\cO}{\mathcal{O}}
\newcommand{\cP}{\mathcal{P}}
\newcommand{\cX}{\mathcal{X}}
\begin{document}

\title{A behavioural pseudometric for continuous-time Markov processes}
\author{Linan Chen \inst{1} \and
Florence Clerc \inst{2} \and
Prakash Panangaden \inst{1}}

\authorrunning{Chen et al.}

\institute{McGill University \and
Heriot-Watt University}

\maketitle

\begin{abstract}
In this work,  we generalize the concept of bisimulation metric in order to metrize the behaviour of continuous-time processes. Similarly to what is done for discrete-time systems, we follow two approaches and show that they coincide: as a fixpoint of a functional and through a real-valued logic.

The whole discrete-time approach relies entirely on the step-based dynamics: the process jumps from state to state. We define a behavioural pseudometric for processes that evolve continuously through time, such as Brownian motion or involve jumps or both.
\end{abstract}

\section{Introduction}

Bisimulation is a concept that captures behavioural equivalence of states in a variety
  of types of transition systems.  It has been widely studied in a discrete-time setting
  where the notion of a step is fundamental.  An important and especially useful further notion is that of bisimulation metric which quantifies ``how similar two states are''.

Most of the theoretical work that exists is on discrete time but a growing part of what computer science allows us to do is in real-time: robotics, self-driving cars, online machine-learning etc. A common solution is to discretize time, however it is well-known that this can lead to errors that are hopefully small but that may accumulate over time and lead to vastly different outcomes. For that reason, it is important to have a continuous-time way of quantifying the error made.

Bisimulation~\cite{Milner80,Park81,Sangiorgi09} is a fundamental concept in the theory of transition systems capturing a strong notion of behavioural equivalence.  The extension to probabilistic systems is due to Larsen and Skou~\cite{Larsen91}; henceforth we will simply say ``bisimulation'' instead of ``probabilistic bisimulation''.  Bisimulation has been studied for discrete-time systems where transitions happen as steps, both on discrete~\cite{Larsen91} and continuous state spaces~\cite{Blute97,Desharnais98,Desharnais02}.  In all these types of systems, a crucial ingredient of the definition of bisimulation is the ability to talk about \emph{the next step}.  This notion of bisimulation is characterized by a modal logic~\cite{Larsen91} even when the state space is continuous~\cite{Desharnais98}.

Some work had previously been done in what are called continuous-time systems, see for example~\cite{Baier08}, but even in so-called continuous-time Markov chains there is a discrete notion of time \emph{step}; it is only that there is a real-valued duration associated with each state that leads to people calling such systems continuous-time. They are often called ``jump processes'' in the mathematical literature (see, for example, \cite{Rogers00a,Whitt02}), a phrase that better captures the true nature of such processes.  Metrics and equivalences for such processes were studied by Gupta et al.~\cite{Gupta04,Gupta06}.

The processes we consider have continuous state spaces and are governed by a continuous-time evolution, a paradigmatic example is Brownian motion.  When approximating such processes by discrete-time processes,  entirely new phenomena and difficulties manifest themselves in this procedure.  For example, even the basic properties of trajectories of Brownian motion are vastly more complicated than the counterparts of a random walk.  Basic concepts like ``the time at which a process exits a given subset of the state space'' becomes intricate to define.  Notions like ``matching transition steps'' are no longer applicable as the notion of ``step'' does not make sense. 

In~\cite{Chen19a,Chen20,Chen23},  we proposed different notions of behavioural
equivalences on continuous-time processes. We showed that there were several possible extensions of the notion of bisimulation to continuous time and that the continuous-time notions needed to involve trajectories in order to be meaningful.  There were significant mathematical challenges in even proving that an equivalence relation existed.  For example,  obstacles occurred in establishing measurability of various functions and sets, due to the inability to countably generate the relevant $\sigma$-algebras.  Those papers left completely open the question of defining a suitable pseudometric analogue, a concept that would be more useful in practice than an equivalence relation.

Previous work on discrete-time Markov processes by Desharnais et al.~\cite{Desharnais99b,Desharnais04}
extended the modal logic characterizing bisimulation to a real-valued logic that allowed
to not only state if two states were ``behaviourally equivalent'' but, more interestingly, how similarly they behaved.  This shifts the notion from a qualitative notion (an
equivalence) to a quantitative one (a pseudometric).

Other work also on discrete-time Markov processes by van Breugel et
al.~\cite{vanBreugel05a} introduced a slightly different real-valued logic and compared
the corresponding pseudometric to another pseudometric obtained as a terminal coalgebra of a
carefully crafted functor. We also mention in this connexion the work by Ferns et al. on Markov Decision Processes and the connexion between bisimulation and optimal value functions \cite{Ferns11}.

In this work, we are looking to extend the notion of bisimulation metric to a behavioural pseudometric on continuous-time processes.
Very broadly speaking, we are following a familiar path from equivalences to logics to metrics.  However, it is necessary for us to redevelop the framework and the mathematical techniques from scratch. 
Indeed, a very important aspect in discrete-time is the fact that the process is a jump process, ``hopping'' from state to state. This limitation also applies to continuous-time Markov chains. In our case, we want to cover processes that evolve through time. A standard example would be Brownian motion or other diffusion processes (often described by stochastic differential equations). As one will see throughout this work, there are new mathematical challenges that need to be overcome.
This means that the similarity between the pre-existing work on discrete-time and our generalization to continuous-time is only at the highest level of abstraction.

\paragraph*{Outline of the paper:}
The first two sections after the introduction are background. We will start by recalling some mathematical notions in Section \ref{sec:background}, introducing the continuous-time processes that we will be studying in Section \ref{sec:CTsystems}. A very brief overview of bisimulation metrics in discrete time can also be found in Appendix \ref{app:DT-metrics}. In Section \ref{sec:gen-CT}, we will introduce a functional $\cF$ and define a pseudometric $\overline{\delta}$ using this functional. We will also show that the pseudometric $\overline{\delta}$ is a fixpoint of $\cF$. In Section \ref{sec:logics}, we will show that this pseudometric is characterized by a real-valued logic.
We will further emphasize the novelty of this work wrt discrete time and summarize the obstacles that we had to overcome in Section \ref{sec:wrapup}. 
We will provide some examples in Section \ref{sec:example}. 
Finally we will discuss the limitations of our approach  and how it relates to previous works in Section~\ref{sec:conclusion}.

\section{Mathematical background}
\label{sec:background}

We assume the reader to be familiar with basic measure
theory and topology. Nevertheless we provide a brief review of the relevant notions and theorems. Let us start with clarifying a few notations on integrals: Given a measurable space $X$ equipped with a measure $\mu$ and a measurable function $f: X \to \mathbb{R}$, 
we can write either $\int f~ \dee \mu$ or $\int f(x) ~ \mu (\dee x)$ interchangeably. The second notation will be especially useful when considering a Markov kernel $P_t$ for some $t \geq 0$ and $x \in X$: $\int f(y) ~P_t(x, \dee y) = \int f ~ \dee P_t(x)$.

All the proofs for this Section can be found in Appendix \ref{app:proof-sec-background}.

   \subsection{Lower semi-continuity}

\begin{definition}
Given a topological space $X$, a function $f: X\to \mathbb{R}$ is \emph{lower semi-continuous} if for every $x_0 \in X$, $\liminf_{x \rightarrow x_0} f(x) \geq f(x_0)$. This condition is equivalent to the following one: for any $y \in \mathbb{R}$, the set $f^{-1}((y, + \infty)) = \{ x ~|~ f(x) > y\}$ is open in $X$.
\end{definition}

Let $X$ be a metric space. A function $f: X \to \mathbb{R}$ is lower semi-continuous,  if and only if $f$ is the limit of an increasing sequence of real-valued continuous functions on $X$. Details can be found in the Appendix \ref{app:semi-cont}.

    \subsection{Couplings}
\label{sec:coupling}

\begin{definition}
Let $(X, \Sigma_X, P)$ and $(Y, \Sigma_Y, Q)$ be two probability spaces.  Then a \emph{coupling} $\gamma$ of $P$ and $Q$ is a probability distribution on $(X \times Y, \Sigma_X \otimes \Sigma_Y)$ such that for every  $B_X \in \Sigma_X$, $\gamma(B_X \times Y) = P(B_X)$ and for every $B_Y \in \Sigma_Y$, $\gamma(X \times B_Y) = Q(B_Y)$ ($P, Q$ are called the \emph{marginals} of $\gamma$).  We write $\Gamma(P, Q)$ for the set of couplings of $P$ and~$Q$.
\end{definition}

\begin{restatable}{lemma}{lemmacouplingscompact}
\label{lemma:couplings-compact}
Given two probability measures $P$ and $Q$ on Polish spaces $X$ and $Y$ respectively,  the set of
couplings $\Gamma (P, Q)$ is compact under the topology of weak convergence. 
\end{restatable}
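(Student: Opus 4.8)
The plan is to view $\Gamma(P,Q)$ as a subset of $\cP(X\times Y)$, the space of Borel probability measures on the Polish space $X\times Y$ (a finite product of Polish spaces is Polish) with the topology of weak convergence, which is itself Polish and in particular metrizable. The strategy is then the classical one: show that $\Gamma(P,Q)$ is uniformly tight, invoke Prokhorov's theorem to get relative compactness, and separately show that $\Gamma(P,Q)$ is closed; a closed subset of a relatively compact set in a metrizable space is compact.

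For tightness, I would use that on a Polish space every single Borel probability measure is tight (Ulam's theorem), so given $\varepsilon>0$ there are compact sets $K_X\subseteq X$ and $K_Y\subseteq Y$ with $P(X\setminus K_X)<\varepsilon/2$ and $Q(Y\setminus K_Y)<\varepsilon/2$. Then $K_X\times K_Y$ is compact in $X\times Y$, and for \emph{every} $\gamma\in\Gamma(P,Q)$,
\[
\gamma\big((X\times Y)\setminus(K_X\times K_Y)\big)\ \leq\ \gamma\big((X\setminus K_X)\times Y\big)+\gamma\big(X\times(Y\setminus K_Y)\big)\ =\ P(X\setminus K_X)+Q(Y\setminus K_Y)\ <\ \varepsilon,
\]
using the marginal conditions. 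Hence the whole family $\Gamma(P,Q)$ is tight, and Prokhorov's theorem gives that it is relatively compact in the weak topology.

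For closedness, suppose $\gamma_n\to\gamma$ weakly with each $\gamma_n\in\Gamma(P,Q)$. For any bounded continuous $f\colon X\to\mathbb{R}$, the map $(x,y)\mapsto f(x)$ is bounded and continuous on $X\times Y$, so $\int f(x)\,\gamma(\dee x\,\dee y)=\lim_n\int f(x)\,\gamma_n(\dee x\,\dee y)=\lim_n\int f\,\dee P=\int f\,\dee P$; since bounded continuous functions determine a Borel probability measure on a metric space, the first marginal of $\gamma$ is $P$, and symmetrically the second is $Q$, so $\gamma\in\Gamma(P,Q)$. Combining closedness with the relative compactness from the previous step yields compactness. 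The only real content here is Prokhorov's theorem, which I would cite rather than reprove; the mildly delicate points are checking that the ambient space is metrizable (so that relative compactness plus closedness gives compactness, and so that sequential arguments suffice) and that weak limits respect the marginal constraints, which is exactly the computation in the closedness step.
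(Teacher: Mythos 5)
Your proposal is correct and follows essentially the same route as the paper: uniform tightness of $\Gamma(P,Q)$ via Ulam's theorem and the marginal conditions, Prokhorov's theorem for relative compactness, and closedness under weak limits by testing against functions of the form $f\circ\pi$ (the paper isolates the last step as the fact that bounded continuous functions determine a Borel probability measure). No gaps.
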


   \subsection{Optimal transport theory}

A lot of this work is based on optimal transport theory. This whole subsection is based on \cite{Villani08} and will be adapted to our framework.

Consider a Polish space $\cX$ and a lower semi-continuous cost function $c: \cX \times \cX \to [0,1]$ such that for every $x \in \cX$, $c(x,x) = 0$.

For every two probability distributions $\mu$ and $\nu$ on $\cX$, we write $W(c)(\mu, \nu)$ for the optimal transport cost from $\mu$ to $\nu$.
Adapting Theorem 5.10(iii) of \cite{Villani08} to our framework (see Remark \ref{rem:dual-pseudometric} in Appendix \ref{app:transport}),  we get the following statement for the Kantorovich duality:
\[W(c)(\mu, \nu) =  \min_{\gamma \in \Gamma(\mu, \nu)} \int c ~ \dee \gamma
= \max_{h \in \cH(c)} \left| \int h ~ \dee \mu - \int h ~\dee \nu  \right| \]
where $ \cH(c) = \{ h: \cX \to [0,1] ~|~ \forall x,y ~~ |h(x) - h(y)| \leq c(x,y) \}$.
   
\begin{restatable}{lemma}{lemmaoptimaltransportpseudometric}
\label{lemma:optimal-transport-pseudometric}
If the cost function $c$ is a 1-bounded  pseudometric on $\cX$, then $W(c)$ is a 1-bounded pseudometric on the space of probability distributions on $\cX$.
\end{restatable}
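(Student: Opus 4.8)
The plan is to verify the pseudometric axioms for $W(c)$ one by one, the only substantial one being the triangle inequality. Boundedness is immediate: for any probability distributions $\mu,\nu$ on $\cX$ the product measure $\mu\otimes\nu$ lies in $\Gamma(\mu,\nu)$, so the set of couplings is nonempty, and since $0\le c\le 1$ every $\gamma\in\Gamma(\mu,\nu)$ satisfies $0\le\int c\,\dee\gamma\le 1$; hence $0\le W(c)(\mu,\nu)\le 1$. For $W(c)(\mu,\mu)=0$ I would use the pushforward of $\mu$ under the diagonal embedding $x\mapsto(x,x)$, which is a coupling of $\mu$ with itself of cost $\int c(x,x)\,\mu(\dee x)=0$; combined with non-negativity this forces $W(c)(\mu,\mu)=0$. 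Symmetry follows because the swap $\sigma(x,y)=(y,x)$ maps $\Gamma(\mu,\nu)$ bijectively onto $\Gamma(\nu,\mu)$ and preserves cost, $\int c\,\dee(\sigma_*\gamma)=\int c(y,x)\,\gamma(\dee x\,\dee y)=\int c(x,y)\,\gamma(\dee x\,\dee y)$, using that $c$ is symmetric.

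For the triangle inequality I would invoke the Kantorovich duality recalled above, which applies since $c$ is a $1$-bounded lower semi-continuous cost function with $c(x,x)=0$. Then for every $h\in\cH(c)$ and all probability distributions $\mu,\nu,\rho$ on $\cX$,
\[
\left|\int h\,\dee\mu-\int h\,\dee\rho\right| \le \left|\int h\,\dee\mu-\int h\,\dee\nu\right| + \left|\int h\,\dee\nu-\int h\,\dee\rho\right| \le W(c)(\mu,\nu)+W(c)(\nu,\rho),
\]
and taking the supremum over $h\in\cH(c)$ of the left-hand side yields $W(c)(\mu,\rho)\le W(c)(\mu,\nu)+W(c)(\nu,\rho)$. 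An alternative that avoids duality is the classical gluing construction: disintegrate optimal couplings $\gamma_{12}\in\Gamma(\mu,\nu)$ and $\gamma_{23}\in\Gamma(\nu,\rho)$ along their shared marginal $\nu$, glue them into a probability measure $\gamma_{123}$ on $\cX\times\cX\times\cX$, let $\gamma_{13}$ be its $(1,3)$-marginal (a coupling of $\mu$ and $\rho$), and bound $\int c\,\dee\gamma_{13}\le\int\bigl(c(x,y)+c(y,z)\bigr)\,\gamma_{123}(\dee x\,\dee y\,\dee z)=\int c\,\dee\gamma_{12}+\int c\,\dee\gamma_{23}$ by the triangle inequality for $c$.

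The triangle inequality is where essentially all the work lies; the duality route reduces it to the triangle inequality for real numbers but relies on Kantorovich duality, whereas the gluing route is self-contained but needs regular conditional distributions in order to disintegrate the couplings — which is precisely the point where the Polishness of $\cX$ enters. All the remaining steps are routine bookkeeping.
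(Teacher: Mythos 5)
Your proof is correct and follows essentially the same route as the paper: the substantial step, the triangle inequality, is obtained exactly as in the paper via the Kantorovich dual formulation and the triangle inequality for real numbers, with symmetry and boundedness read off the primal formulation. The only cosmetic differences are that you establish $W(c)(\mu,\mu)=0$ with the diagonal coupling rather than reading it off the dual expression, and that you additionally sketch the gluing alternative, which the paper does not use.
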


We will later need the following technical lemma. Theorem 5.20 of \cite{Villani08} states that a sequence $W(c_k)(P_k, Q_k)$ converges to $W(c)(P, Q)$ if $c_k$ uniformly converges to $c$ and $P_k$ and $Q_k$ converge weakly to $P$ and $Q$ respectively. Uniform convergence in the cost function may be too strong a condition for us, but the following lemma is enough for what we need.

\begin{restatable}{lemma}{lemmawassersteinlimit}
\label{lemma:wasserstein-limit}
Consider a Polish space $\cX$ and a cost function $c: \cX \times \cX \to [0,1]$ such that there exists an increasing ($c_{k+1} \geq c_k$ for every $k$) sequence of continuous cost functions $c_k: \cX \times \cX \to [0,1]$ that converges to $c$ pointwise.  Then, given two probability distributions $P$ and $Q$ on $\cX$,
\[ \lim_{k \rightarrow \infty} W(c_k)(P,Q) = W(c)(P, Q). \]
\end{restatable}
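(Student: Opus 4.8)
The plan is to establish the limit by squeezing $W(c)(P,Q)$ between the liminf and limsup of the sequence $W(c_k)(P,Q)$ using the two sides of Kantorovich duality, exploiting monotonicity to upgrade this to an actual limit. First observe that since $c_k \leq c_{k+1} \leq c$, the corresponding cost functionals are monotone: for any coupling $\gamma \in \Gamma(P,Q)$ we have $\int c_k\,\dee\gamma \leq \int c_{k+1}\,\dee\gamma \leq \int c\,\dee\gamma$, hence $W(c_k)(P,Q) \leq W(c_{k+1})(P,Q) \leq W(c)(P,Q)$. Therefore the sequence $W(c_k)(P,Q)$ is increasing and bounded above by $W(c)(P,Q)$, so $L := \lim_{k\to\infty} W(c_k)(P,Q)$ exists and satisfies $L \leq W(c)(P,Q)$. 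It remains to prove the reverse inequality $L \geq W(c)(P,Q)$.

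For the reverse inequality I would use the primal (minimization) side of the duality together with the compactness of the coupling set. For each $k$, let $\gamma_k \in \Gamma(P,Q)$ be an optimal coupling for the cost $c_k$, so that $W(c_k)(P,Q) = \int c_k\,\dee\gamma_k$. By Lemma~\ref{lemma:couplings-compact}, $\Gamma(P,Q)$ is compact in the topology of weak convergence, so passing to a subsequence we may assume $\gamma_k \to \gamma_\infty$ weakly for some $\gamma_\infty \in \Gamma(P,Q)$. Now fix $m$; for all $k \geq m$ we have $c_k \geq c_m$, hence $\int c_k\,\dee\gamma_k \geq \int c_m\,\dee\gamma_k$. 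Since $c_m$ is continuous and bounded (valued in $[0,1]$), weak convergence gives $\int c_m\,\dee\gamma_k \to \int c_m\,\dee\gamma_\infty$ as $k\to\infty$, so letting $k\to\infty$ yields $L \geq \int c_m\,\dee\gamma_\infty$. Finally, let $m\to\infty$: by monotone convergence (the $c_m$ increase pointwise to $c$), $\int c_m\,\dee\gamma_\infty \to \int c\,\dee\gamma_\infty \geq W(c)(P,Q)$, the last inequality because $\gamma_\infty$ is some coupling and $W(c)$ is the infimum over couplings. Combining, $L \geq W(c)(P,Q)$, which together with the first part gives equality. Since the increasing sequence $W(c_k)(P,Q)$ has a limit independent of any subsequence, the full sequence converges to $W(c)(P,Q)$.

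The main obstacle is the interchange of limits hidden in the second paragraph: one cannot directly pass to the limit in $\int c_k\,\dee\gamma_k$ because $c_k$ varies with $k$ (which is exactly why Villani's Theorem 5.20 assumes uniform convergence). The device that circumvents this is the two-stage limit — first freeze the cost at level $m$ and use weak convergence of the couplings to handle the varying measure, then take $m\to\infty$ and use monotone convergence to handle the varying cost — with monotonicity of the $c_k$ being what makes the inequality $\int c_k\,\dee\gamma_k \geq \int c_m\,\dee\gamma_k$ available for all $k \geq m$. One should double-check that lower semi-continuity of the limit cost $c$ (automatic here, being an increasing limit of continuous functions, cf.\ the lemma in Section~2.1) is enough for $W(c)$ to be well-defined via the duality as stated. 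A cleaner alternative worth mentioning is to instead use the dual (supremum) side: each $h \in \cH(c)$ with $|h(x)-h(y)| \leq c(x,y)$ can be approximated by functions in $\cH(c_k)$, but controlling the Lipschitz-type constraint under the approximation is fiddlier than the primal argument, so I would present the primal/compactness proof as the main line.
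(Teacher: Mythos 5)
Your proof is correct and follows essentially the same route as the paper's: extract a weakly convergent subsequence of optimal couplings via Lemma~\ref{lemma:couplings-compact}, test the limit coupling against a frozen continuous cost $c_m$ using weak convergence, then pass $m\to\infty$ by monotone convergence, with monotonicity of the $c_k$ supplying the key comparison $\int c_k\,\dee\gamma_k \geq \int c_m\,\dee\gamma_k$. The paper phrases the same two-stage limit with explicit $\epsilon$'s and diagonal indices $n_k$, but the argument is identical in substance.
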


\section{Background on continuous-time Markov processes}
\label{sec:CTsystems}

This work focuses on continuous-time processes that are honest (without loss of mass over time) and with additional regularity conditions.  In order to define what we mean by continuous-time Markov processes here, we first define Feller-Dynkin processes.  Much of this material is adapted from \cite{Rogers00a} and we use
their notations.  Another useful source is \cite{Bobrowski05}.  

Let $E$ be a locally compact, Hausdorff
space with a countable base.  We also equip the set $E$ with its Borel $\sigma$-algebra $\mathcal{B}(E)$, denoted $\cE$.  The previous topological hypotheses also imply that $E$ is $\sigma$-compact and Polish (see corollary IX.57 in \cite{Bourbaki89b}).
We will denote $\Delta$ for the 1-bounded metric that generates the topology making $E$ Polish.

\begin{definition}
A \emph{semigroup} of operators on any Banach space $X$ is a family
of linear continuous (bounded) operators $\cP_t: X \to X$ indexed by
$t\in\mathbb{R}_{\geq 0}$ such that
\[ \forall s,t \geq 0, \cP_s \circ \cP_t = \cP_{s+t} \qquad \text{(semigroup property)}\]
and
\[ \cP_0 = I \qquad \text{(the identity)}.  \]
\end{definition}
\begin{definition}
For $X$ a Banach space, we say that a semigroup $\cP_t:X\to X$ is \emph{strongly continuous} if 
\[ \forall x\in X, \lim_{t\downarrow 0}\| \cP_t x - x \| \to 0.  \]
\end{definition}

What the semigroup property expresses is that we do not need to understand the past (what
happens before time $t$) in order to compute the future (what happens after some
additional time $s$, so at time $t+s$) as long as we know the present (at time
$t$). 

We say that a continuous real-valued function $f$ on $E$ ``vanishes at
infinity'' if for every $\varepsilon > 0$ there is a compact subset
$K \subseteq E$ such that for every $x\in E\setminus K$, we have
$|f(x)| \leq \varepsilon$.  To give an intuition, if $E$ is the real line, this means that
$\lim_{x \rightarrow \pm \infty} f(x) = 0$.  The space $C_0(E)$ of continuous real-valued functions
that vanish at infinity is a Banach space with the $\sup$ norm.   
\begin{definition}
A \emph{Feller-Dynkin (FD) semigroup} is a strongly continuous
semigroup $(\hat{P}_t)_{t \geq 0}$ of linear operators on $C_0(E)$ satisfying the
additional condition:  
\[\forall t \geq 0 ~~~ \forall f \in C_0(E) \text{, if }~~ 0 \leq f \leq 1 \text{, then }~~ 0 \leq \hat{P}_t f \leq 1\]
\end{definition}

The Riesz representation theorem can be found as Theorem II.80.3 of \cite{Rogers00a}.  From it, we can derive the following important proposition which relates these FD-semigroups with Markov
kernels (see Appendix \ref{app:CTsystems} for the details).  This allows one to see the connection with familiar
probabilistic transition systems.  

\begin{proposition}
\label{prop:Riesz-use}
Given an FD-semigroup $(\hat{P}_t)_{t \geq 0}$ on $C_0(E)$, it is possible to define a
unique family of sub-Markov 
kernels $(P_t)_{t \geq 0} : E \times \mathcal{E} \to [0,1]$ such that for
all $t \geq 0$ and $f \in C_0(E)$, 
\[ \hat{P}_t f(x) = \int f(y) P_t(x, \dee y).  \]
\end{proposition}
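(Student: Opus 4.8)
The plan is to apply the Riesz representation theorem for each fixed time $t$ and each fixed state $x$, and then recover joint measurability by a monotone-class argument; note that strong continuity of the semigroup plays no role here — only the Feller bound and the fact that $\hat{P}_t$ maps $C_0(E)$ into $C_0(E)$ are used.

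Fix $t \geq 0$ and $x \in E$. Consider the linear functional $\Lambda_{t,x} : C_0(E) \to \mathbb{R}$ given by $\Lambda_{t,x}(f) = \hat{P}_t f(x)$. The Feller-Dynkin condition ``$0 \leq f \leq 1 \Rightarrow 0 \leq \hat{P}_t f \leq 1$'' upgrades, by rescaling $f$ by $\|f\|_\infty^{-1}$ and by affine shifts, to the statement that $\hat{P}_t$ is positivity preserving and $\|\hat{P}_t f\|_\infty \leq \|f\|_\infty$; hence $\Lambda_{t,x}$ is a positive bounded linear functional of norm at most $1$. By the Riesz representation theorem (Theorem II.80.3 of \cite{Rogers00a}), there is a unique regular Borel measure $P_t(x, \cdot)$ on $(E, \cE)$ with $\Lambda_{t,x}(f) = \int f(y)\, P_t(x, \dee y)$ for all $f \in C_0(E)$. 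Taking a sequence $f_n \in C_0(E)$ with $0 \leq f_n \uparrow 1$ (available since $E$ is $\sigma$-compact) shows $P_t(x, E) \leq 1$, so $P_t(x, \cdot)$ is a sub-probability measure. Moreover, since $E$ is $\sigma$-compact every finite Borel measure on $E$ is regular and $C_0(E)$ is measure-determining, so $P_t(x, \cdot)$ is the \emph{only} measure with this property; this yields uniqueness of the entire family $(P_t)_{t \geq 0}$ at once.

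It remains to verify that, for each fixed $t$, the map $x \mapsto P_t(x, B)$ is $\cE$-measurable for every $B \in \cE$, so that $(P_t)_{t\geq 0}$ is genuinely a family of kernels. Let $\cD$ be the collection of bounded Borel functions $g : E \to \mathbb{R}$ for which $x \mapsto \int g\, \dee P_t(x)$ is $\cE$-measurable. Then $\cD$ is a vector space containing $C_0(E)$ — because for $f \in C_0(E)$ the function $x \mapsto \int f\, \dee P_t(x) = \hat{P}_t f(x)$ itself lies in $C_0(E)$, hence is measurable — and $\cD$ is closed under bounded monotone pointwise limits by the monotone convergence theorem. For any open $U \subseteq E$, local compactness together with the countable base (hence $\sigma$-compactness of $U$) and Urysohn's lemma produce a sequence in $C_0(E)$ increasing pointwise to $\mathbf{1}_U$, so $\mathbf{1}_U \in \cD$. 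The open sets form a $\pi$-system generating $\cE$, so the functional monotone-class theorem gives that $\cD$ contains all bounded Borel functions; in particular $x \mapsto P_t(x, B) = \int \mathbf{1}_B\, \dee P_t(x)$ is $\cE$-measurable for all $B \in \cE$.

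The one delicate point is this last measurability step: one must make sure that $\mathbf{1}_U$ is genuinely an increasing pointwise limit of functions in $C_0(E)$, and it is precisely here that the standing assumptions on $E$ (locally compact, Hausdorff, with a countable base, hence $\sigma$-compact) are essential. Everything else is a routine repackaging of the Riesz representation theorem together with a monotone-class argument.
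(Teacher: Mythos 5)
Your proposal is correct and follows the same overall strategy as the paper: apply the Riesz representation theorem to the functional $f \mapsto \hat{P}_t f(x)$ for each fixed $t$ and $x$ (the paper does this via its Corollary~\ref{cor:markov-from-semigroup}, which packages exactly this step for a general sub-Markov operator $V : C_0(E) \to b\cE$), and then establish measurability of $x \mapsto P_t(x,B)$ separately. Where you genuinely diverge is in that measurability step. The paper decomposes $B$ as $\bigcup_n (B \cap E_n)$ along a compact exhaustion and asserts that each $\mathds{1}_{B \cap E_n}$ is a pointwise limit of a sequence in $C_0(E)$ for an \emph{arbitrary} Borel set $B$; as stated this is too strong (pointwise limits of continuous functions are Baire class~1, which not every Borel indicator is). You instead only approximate indicators of \emph{open} sets from below by $C_0(E)$ functions --- which is legitimate by Urysohn's lemma and $\sigma$-compactness, since such indicators are lower semi-continuous --- and then pass to all bounded Borel functions via the functional monotone class theorem over the $\pi$-system of open sets. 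This is the more robust execution of the step, and it is the standard repair of the paper's argument. Two minor points: the ``affine shifts'' you invoke to get contractivity of $\hat{P}_t$ are not available in $C_0(E)$ when $E$ is non-compact (constants are not in $C_0(E)$); the decomposition $f = f^+ - f^-$ gives the bound $\|\hat{P}_t f\|_\infty \leq \|f\|_\infty$ cleanly. And your uniqueness argument via $C_0(E)$ being measure-determining matches the paper's (which relies on Proposition~\ref{prop:two}); the paper additionally flags that positivity of the Riesz measure requires a regularity argument when one starts, as it does, from the signed-measure form of the theorem, whereas you sidestep this by using the positive-functional form directly.
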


Given a time $t$ and a state $x$, we will often write $P_t(x)$ for the measure $P_t(x, \cdot)$ on $E$. Note that since $E$ is Polish, then $P_t(x)$ is tight.

\begin{definition}
\label{def:honest}
A process described by the FD-semigroup $(\hat{P}_t)_{t \geq 0}$ is \emph{honest} if for every $x \in E$ and every time $t \geq 0$, $P_t(x, E) = 1$.
\end{definition}

Worded differently, a process is honest if there is no loss of mass over time. A standard example of an honest process is Brownian motion. We refer the reader to Appendix \ref{app:BM} for an introduction to Brownian motion.

  \subsection{Observables}

In previous sections,  we defined Feller-Dynkin processes.  In order to bring the processes more in line
with the kind of transition systems that have hitherto been studied in the computer
science literature, we also equip the state space $E$ with an additional continuous function
$obs: E \to [0,1]$. 
One should think of it as the interface between the process and the
user (or an external observer): external observers won't see the exact state in which the process
is at a given time, but they will see the associated \emph{observables}.  What could be a real-life example is the depth at which a diver goes: while the diver does not
know precisely his location underwater, at least his watch is giving him the depth at
which he is.

Note that the condition on the observable is a major difference from our previous work \cite{Chen19a,Chen20} since we used a countable set of atomic propositions $AP$ and $obs$ was a discrete function $E \to 2^{AP}$. 

\begin{definition}
\label{def:CTMarkov}
In this study,  a \emph{Continuous-time Markov process} (abbreviated CTMP) is an honest FD-semigroup on $C_0(E)$  equipped with a continuous function $obs: E \to [0,1]$ and that satisfies the following additional property: if a sequence $(x_n)_{n \in \mathbb{N}}$ converges to $x$ in $E$, then for every $t$, the sequence of measures $(P_t(x_n))_{n \in \mathbb{N}}$ weakly converges to the measure $P_t(x)$.
\end{definition}

\begin{remark}
Some properties could be relaxed. For instance, in some cases, a non honest process could be made into a CTMP by adding a state $\partial$. Another hypothesis that could be relaxed is the one on $obs$ by imposing some stronger conditions on the FD-process.
\end{remark}

%
%
%
%
%

\section{Generalizing to continuous-time through a functional}
\label{sec:gen-CT}

We start by defining a behavioural pseudometric on our CTMPs by defining a functional $\cF$ on the lattice of 1-bounded pseudometrics.  As we will see, unlike in the discrete-time case, it is not possible to apply the Banach fixpoint theorem and get a fixpoint metric a priori: instead we need to construct a candidate and then show that it is a fixpoint of our functional. More specifically, the idea is to iteratively apply our functional to a metric and then consider the supremum of the sequence of pseudometrics. Doing so requires to first restrict the scope of our functional $\cF$.

\subsection{Lattices}

At the core of this construction is the definition of a functional on the lattice of 1-bounded pseudometrics.

Let $\cM$ be the lattice of 1-bounded pseudometrics on the state space $E$ equipped with the
order $\leq$ defined as: $m_1 \leq m_2$ if and only if for every $(x,y)$,
$m_1(x,y) \leq m_2(x,y)$.  We can define a sublattice $\cP$ of $\cM$ by restricting to pseudometrics that are lower semi-continuous (wrt the original topology $\cO$ on $E$ generated by the metric $\Delta$ making the space $E$ Polish). We will further require to define the sublattice $\cC$ which is the set of pseudometrics $m \in \cM$ on the state space $E$ such that the topology  generated by $m$ on $E$ is a subtopology of the original topology $\cO$, \emph{i.e.}\  $m$ is a continuous function $E \times E \to [0,1]$.

We have the following inclusion: $\cC \subset \cP \subset \cM$.

\begin{remark}
\label{rem:lattices}
One has to be careful here. The topology $\cO$ on $E$ is generated by the 1-bounded metric $\Delta$, and hence $\Delta$ is in $\cC$. However, we can define many pseudometrics that are not related to $\cO$.  As an example, the discrete pseudometric\footnote{The discrete pseudometric is defined as $m(x, y) = 1$ if $x \neq y$ and $m(x,x) = 0$} on the real line is not related to the usual topology on $\mathbb{R}$. 
\end{remark}

\subsection{Defining our functional}
\label{sec:def-functional}

Throughout the rest of the paper, $(P_t)_{t>=0}$ is the family of Markov kernels associated with a CTMP.
Given a discount factor $0 < c< 1$,  we define the functional $\cF_c : \cP \to \cM$ as follows: for every pseudometric $m \in \cP$ and every two states $x,y$,
\[ \cF_c(m)(x,y) = \sup_{t \geq 0} c^t W(m) (P_t(x), P_t(y)).\]
$ \cF_c(m)(x,y) $ compares all the distributions $P_t(x)$ and $P_t(y)$  through transport theory and takes their supremum. 

There are several remarks to make on this definition.
First, we can only define $\cF_c (m)$ if $m$ is lower semi-continuous since we are using optimal transport theory which is why the domain of $\cF_c$ is only $\cP$.

Additionally, even if $m$ is lower semi-continuous, $\cF_c(m)$ may not even be measurable which means that the range of $\cF_c$ is not the lattice $\cP$. At least,  Lemma \ref{lemma:optimal-transport-pseudometric} ensures that $\cF_c(m)$ is indeed in $\cM$, as a supremum of pseudometrics. 
This subtlety was not present in the work on continuous-time Markov chains in ~\cite{Gupta04,Gupta06}.

Second, we will use the Kantorovich duality throughout this work. It only holds for probability measures, and that is why we restrict this work to honest processes. 

As a direct consequence of the definition of $\cF_c$, we have that $\cF_c$  is monotone: if $m_1 \leq m_2$ in $\cP$, then $\cF_c(m_1) \leq \cF_c(m_2)$.

\begin{restatable}{lemma}{lemmaordermFc}
\label{lemma:order-m-Fc}
For every pseudometric $m$ in $\cP$,  discount factor $0 < c < 1$ and pair of states~$x,y$,
\[ m(x,y) \leq \cF_c(m)(x,y). \]
\end{restatable}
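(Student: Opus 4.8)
The plan is to extract the desired inequality by evaluating the supremum that defines $\cF_c(m)(x,y)$ at the single time $t = 0$. Since $c^0 = 1$, we have
\[ \cF_c(m)(x,y) = \sup_{t \geq 0} c^t W(m)(P_t(x), P_t(y)) \geq W(m)(P_0(x), P_0(y)), \]
so it suffices to show $W(m)(P_0(x), P_0(y)) = m(x,y)$ (in fact only ``$\geq$'' is needed).

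First I would identify the measure $P_0(x)$. By Proposition \ref{prop:Riesz-use}, $\hat{P}_0 f(x) = \int f(y)\, P_0(x, \dee y)$ for every $f \in C_0(E)$, and the semigroup property gives $\hat{P}_0 = I$, so $f(x) = \int f \, \dee P_0(x)$ for all $f \in C_0(E)$. Since $E$ is locally compact Hausdorff with a countable base, $C_0(E)$ separates finite Borel measures (the same Riesz-representation machinery underlying Proposition \ref{prop:Riesz-use}), so this forces $P_0(x) = \delta_x$, the Dirac measure at $x$, and likewise $P_0(y) = \delta_y$. These are probability measures, so the Kantorovich duality stated above applies. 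Then I would compute $W(m)(\delta_x, \delta_y)$ directly: the only coupling in $\Gamma(\delta_x, \delta_y)$ is the product Dirac $\delta_{(x,y)}$, hence by the primal formula $W(m)(\delta_x, \delta_y) = \int m \, \dee \delta_{(x,y)} = m(x,y)$. Combining with the display above gives $\cF_c(m)(x,y) \geq m(x,y)$.

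As an alternative to the primal computation one can use the dual side of the Kantorovich duality: taking the test function $h(\cdot) = m(x, \cdot)$, which lies in $\cH(m)$ because $m$ is a $1$-bounded pseudometric (so $|h(z) - h(w)| = |m(x,z) - m(x,w)| \leq m(z,w)$ and $0 \leq h \leq 1$), one gets $W(m)(\delta_x, \delta_y) \geq |h(x) - h(y)| = m(x,y)$. There is no real obstacle in this argument; the only step needing a modicum of care is the identification $P_0(x) = \delta_x$, i.e.\ that $C_0(E)$ is measure-separating on $E$, which is a standard consequence of local compactness plus the countable base assumption already in force.
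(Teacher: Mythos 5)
Your proof is correct and follows essentially the same route as the paper: evaluate the supremum at $t=0$, note that $P_0(x)$ and $P_0(y)$ are Dirac measures whose only coupling is the product, and conclude $W(m)(P_0(x),P_0(y)) = m(x,y)$. The extra justification you give for $P_0(x)=\delta_x$ (via $\hat{P}_0 = I$ and the measure-separating property of $C_0(E)$) is a welcome elaboration of a step the paper takes for granted, but it does not change the argument.
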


The proof can be found in Appendix \ref{app:gen-CT}.

\subsection{When restricted to continuous pseudometrics}
\label{sec:func-restrict-cont}

We wish to iteratively apply $\cF_c$ in order to construct a fixpoint (in a similar fashion to the proof of the Knaster-Tarski theorem). While $\cF_c(m)$ is a pseudometric (for $m \in \cP$), there is no reason for it to be in $\cP$. This means that we cannot hastily apply $\cF_c$ iteratively to just any pseudometric in order to obtain a fixpoint.

However, if $m$ is a pseudometric which is continuous wrt the original topology, then so is $\cF_c(m)$.

\begin{restatable}{lemma}{lemmaFcont}
\label{lemma:Fc-cont-is-cont}
Consider a pseudometric $m \in \cC$.  Then the topology generated by $\cF_c(m)$ is a subtopology of the original topology $\cO$ for any discount factor $0< c< 1$. 
\end{restatable}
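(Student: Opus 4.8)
The plan is to show that $\cF_c(m): E \times E \to [0,1]$ is a continuous function whenever $m$ is, since the statement ``the topology generated by $\cF_c(m)$ is a subtopology of $\cO$'' is exactly the assertion that $\cF_c(m)$ is $\cO$-continuous. Fix $x_0, y_0 \in E$ and a sequence $(x_n, y_n) \to (x_0, y_0)$ in $E \times E$; since $E$ is metrizable it suffices to argue sequentially. I would like to conclude $\cF_c(m)(x_n, y_n) \to \cF_c(m)(x_0, y_0)$. The two ingredients available are: (i) by the CTMP hypothesis (Definition \ref{def:CTMarkov}), $P_t(x_n) \to P_t(x_0)$ and $P_t(y_n) \to P_t(y_0)$ weakly for each fixed $t$; and (ii) since $m \in \cC$ is continuous and $1$-bounded on the Polish space $E$, Theorem 5.20 of \cite{Villani08} (or a direct Portmanteau-style argument) gives that $W(m)(\cdot,\cdot)$ is continuous for the weak topology, so $W(m)(P_t(x_n), P_t(y_n)) \to W(m)(P_t(x_0), P_t(y_0))$ for each fixed $t$.

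The remaining difficulty — and the main obstacle — is the interchange of the $\sup_{t \ge 0}$ with the limit in $n$: pointwise (in $t$) convergence of $c^t W(m)(P_t(x_n), P_t(y_n))$ does not immediately yield convergence of the suprema, because the supremum over the unbounded parameter set $[0,\infty)$ need not be attained and need not be continuous under pointwise limits. Here the discount factor $c$ does the essential work. Write $g_n(t) = c^t W(m)(P_t(x_n), P_t(y_n))$ and $g(t) = c^t W(m)(P_t(x_0), P_t(y_0))$; each $g_n$ and $g$ takes values in $[0, c^t]$, and $c^t \to 0$ as $t \to \infty$. So for any $\varepsilon > 0$ there is $T$ with $c^t < \varepsilon$ for all $t > T$, whence $\sup_{t > T} g_n(t) \le \varepsilon$ uniformly in $n$, and likewise $\sup_{t > T} g(t) \le \varepsilon$. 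Thus $\cF_c(m)(x_n,y_n)$ and $\cF_c(m)(x_0,y_0)$ differ from $\sup_{t \in [0,T]} g_n(t)$ and $\sup_{t \in [0,T]} g(t)$ respectively by at most $\varepsilon$, reducing the problem to the supremum over the compact interval $[0,T]$.

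On $[0,T]$ I would argue that $g_n \to g$ uniformly, not merely pointwise. For this I need joint continuity (or at least sufficient equicontinuity) in $t$: the strong continuity of the FD-semigroup and the semigroup property give continuity of $t \mapsto \hat{P}_t f$, hence weak continuity of $t \mapsto P_t(x)$ for each $x$, and combined with the weak-continuity of $W(m)$ this makes $(n,t) \mapsto g_n(t)$ behave well enough; the cleanest route is to note that $\{P_t(x_n) : n \in \mathbb{N}, t \in [0,T]\}$ is tight (each $P_t(x_n)$ is a probability measure on a Polish space and the relevant family is relatively weakly compact by an Arzelà–Ascoli / Prokhorov argument using strong continuity), so by uniform continuity of $W(m)$ on weakly compact sets the convergence $g_n \to g$ is uniform on $[0,T]$. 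Once uniform convergence on $[0,T]$ is established, $\sup_{[0,T]} g_n \to \sup_{[0,T]} g$, and letting $\varepsilon \to 0$ gives $\cF_c(m)(x_n, y_n) \to \cF_c(m)(x_0, y_0)$, proving continuity of $\cF_c(m)$ and hence the lemma. I expect the tightness/uniformity step to require the most care, and it may be cleaner to isolate it as an auxiliary lemma about the map $(t,x) \mapsto P_t(x)$ being jointly weakly continuous on $[0,\infty) \times E$.
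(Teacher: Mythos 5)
Your overall strategy matches the paper's in its two essential ingredients: you use the discount factor to cut the supremum down to a compact time window $[0,T]$ (exactly the role the paper assigns to $c<1$), and you use the CTMP weak-continuity hypothesis together with Theorem 5.20 of Villani to get convergence of $W(m)(P_t(x_n),P_t(y_n))$ for each fixed $t$. Where you diverge is in how the supremum over $[0,T]$ is handled: you aim for uniform convergence of $g_n\to g$ on $[0,T]$ via tightness and uniform continuity of $W(m)$ on weakly compact sets, whereas the paper argues by contradiction, extracting a convergent subsequence $s_k\to t'$ of near-maximizing times in $[0,T]$ and comparing $\Phi(t',x,y)$ against the near-optimal $t$ chosen for the limit pair. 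Your route, if completed, is arguably cleaner (it proves joint continuity of $\cF_c(m)$ in one stroke rather than separate continuity in $y$), but it front-loads a stronger intermediate claim.

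That intermediate claim is the one place where your argument is not yet a proof: ``by uniform continuity of $W(m)$ on weakly compact sets the convergence $g_n\to g$ is uniform on $[0,T]$'' does not follow from pointwise-in-$t$ convergence alone; you need the curves $t\mapsto P_t(x_n)$ to converge to $t\mapsto P_t(x_0)$ uniformly on $[0,T]$ in a metric for the weak topology, i.e.\ precisely the joint weak continuity of $(t,z)\mapsto P_t(z)$ that you propose to isolate as an auxiliary lemma. That lemma is true and provable here: strong continuity plus the semigroup property give $\|\hat{P}_{t+s}f-\hat{P}_tf\|_\infty\le\|\hat{P}_sf-f\|_\infty$, hence equicontinuity in $t$, uniformly in the state, against $C_0(E)$ test functions; combined with the CTMP continuity in the state variable this yields joint continuity for the vague topology, and honesty upgrades vague to weak convergence. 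With that, $(t,z,w)\mapsto c^tW(m)(P_t(z),P_t(w))$ is continuous, hence uniformly continuous on the compact $[0,T]\times K\times K$ with $K=\{x_n\}\cup\{x_0\}$, and your uniform convergence follows. It is worth noting that the paper's own proof quietly relies on the same input at the step where $\Phi(t_k,x,z_k)\to\Phi(t',x,y)$ with $t_k\to t'$ and $z_k\to y$ simultaneously, so neither argument escapes this joint-continuity requirement; yours just makes it explicit. Fill in that auxiliary lemma and your proof is complete.
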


This is where we need that the discount factor $c<1$. The condition that $c< 1$ enables us to maintain continuity by allowing to bound the time interval we consider. Indeed, given $T> 0$, for any time $t \geq T$ and any $x, y \in E$, we know that $c^t W(m) (P_t(x), P_t(y)) \leq~c^T$. 

\begin{proof}
Since $c$ and $m$ are fixed throughout the proof, we will omit noting them and for instance write $\cF(x,y)$ instead of $\cF_c(m)(x,y)$ and $W(P_t(x), P_t(y))$ instead of $W(m)(P_t(x), P_t(y))$. We will also write $\Phi(t,x,y) = c^t W(m)(P_t(x), P_t(y))$, i.e. $\cF(x,y) = \sup_t \Phi(t,x,y)$.

It is enough to show that for a fixed state $x$, the map $y \mapsto \cF(x,y)$ is continuous.

Pick $\epsilon >0$ and a sequence of states $(y_n)_{n \in \mathbb{N}}$ converging to $y$.  We want to show that there exists $M$ such that for all $n \geq M$,
\begin{align}
\label{eq:limitF}
|\cF(x, y) - \cF(x, y_n)| & \leq \epsilon.
\end{align}

Pick $t$ such that $\cF(x,y) = \sup_s \Phi(s, x,y) \leq \Phi(t, x, y) + \epsilon / 4$, i.e.
\begin{align}
\label{eq:numbera}
| \Phi(t, x, y)  - \cF(x,y)| \leq  \epsilon / 4.
\end{align}

Recall that $P_t(y_n)$ converges weakly to $P_t(y)$ and hence we can apply Theorem 5.20 of \cite{Villani08} and get:
\[ \lim_{n \rightarrow \infty} W(P_t(x), P_t(y_n)) = W(P_t(x), P_t(y)). \]
This means that there exists $N'$ such that for all $n \geq N'$, 
\[ |W(P_t(x), P_t(y_n)) - W(P_t(x), P_t(y))| \leq \epsilon/4. \]
This further implies that for all $n \geq N'$, 
\begin{align}
\label{eq:numberb}
|\Phi(t, x, y_n) - \Phi(t, x, y)|
&  \leq c^t \epsilon/4  \leq \epsilon / 4.
\end{align}

In order to show (\ref{eq:limitF}), it is enough to show that there exists $N$ such that for every $n \geq N$,
\begin{align}
\label{eq:limitF-intermediary}
|\Phi(t, x, y_n) - \cF(x, y_n)| & \leq \epsilon / 2.
\end{align}
Indeed, in that case, $\forall n \geq \max\{N, N'\}$,  using Equations (\ref{eq:numbera}) and (\ref{eq:numberb}),
\begin{align*}
&|\cF(x, y) - \cF(x, y_n)|\\
& \leq |\cF(x, y) - \Phi(t, x, y)| + |\Phi(t, x, y) - \Phi(t, x, y_n)| + |\Phi(t, x, y_n) - \cF(x, y_n)| \\
& \leq \epsilon / 4 + \epsilon / 4 + \epsilon / 2 = \epsilon.
\end{align*}

So let us show (\ref{eq:limitF-intermediary}). Assume it is not the case: for all $N$, there exists $n \geq N$ such that $ |\Phi(t, x, y_n) - \cF(x, y_n)| > \epsilon / 2$, i.e.
\[ \Phi(t, x, y_n) + \epsilon / 2 <  \cF(x, y_n).  \]
Define the sequence $(N_k)_{k \in \mathbb{N}}$ by: $N_{-1} = -1$ and if $N_k$ is defined, define $N_{k+1}$ to be the smallest $n \geq N_k + 1$ such that $\Phi(t, x, y_n) + \epsilon / 2 <  \cF(x, y_n)$.
In particular for every $k \in \mathbb{N}$,  $\cF(x, y_{N_k})  > \epsilon / 2$. There exists $T$ such that for every $s \geq T$, $c^s < \epsilon/2$.  We thus have that
\[ \forall k \in \mathbb{N} \quad \cF(x, y_{N_k}) = \sup_{ 0 \leq s \leq T} \Phi(s, x, y_{N_k}).\]
Therefore for every $k \in \mathbb{N}$, there exists $s_k \in [0, T]$ such that
\begin{align}
\label{eq:numberc}
\cF(x, y_{N_k}) \leq \Phi(s_k, x, y_{N_k}) + \epsilon / 8.
\end{align}
We get a sequence $(s_k)_{k \in \mathbb{N}} \subset [0,T]$, and there is thus a subsequence $(t_k)_{k \in \mathbb{N}}$ converging to some $t' \in [0, T]$. There is a corresponding subsequence $(z_k)_{k \in \mathbb{N}}$ of the original sequence $(y_{N_k})_{k \in \mathbb{N}}$. Since $\lim_{n \rightarrow \infty} y_n = y$ ,  $\lim_{k \rightarrow \infty} z_k = y$. 

We constructed the sequence $(N_k)_{k \in \mathbb{N}}$ such that  $\Phi(t, x, y_{N_k}) + \epsilon / 2 <  \cF(x, y_{N_k})$. Hence by Equation (\ref{eq:numberc}),
\[  \Phi(t, x, z_k) + \epsilon / 2 < \cF (x, y_{N_k}) \leq \Phi(t_k, x, z_k) + \epsilon / 8,  \]
which means that by taking the limit $k \to \infty$,
\begin{align}
\label{eq:contr1}
 \Phi(t, x, y) + \epsilon / 2 &\leq \Phi(t', x, y) + \epsilon / 8.
\end{align}

At the start of this proof, we picked $t$ such that $\cF(x,y) = \sup_s \Phi(s, x,y) \leq \Phi(t, x, y) + \epsilon / 4$ which means that
\begin{align}
\label{eq:contr2}
 \Phi(t', x, y) &\leq \Phi(t, x,y) + \epsilon / 4 .
\end{align}
Equations (\ref{eq:contr1}) and (\ref{eq:contr2}) are incompatible which concludes the proof. \hfill $\square$
\end{proof}

\subsection{Defining our family of pseudometrics}
\label{sec:def-pseudometrics-F}

We are now finally able to iteratively apply our functional $\cF_c$ on continuous pseudometrics and thus construct a sequence of increasing pseudometrics and its limit.

Since $obs$ is a continuous function $E \to \mathbb{R}_{\geq 0}$ and by Lemma \ref{lemma:Fc-cont-is-cont}, we can define the sequence of pseudometrics in $\cC$ for each $0 < c< 1$:
\begin{align*}
\delta^c_0(x,y) & = |obs(x) - obs(y)|,\\
\delta^c_{n+1} &= \cF_c(\delta_n^c).
\end{align*}
By Lemma \ref{lemma:order-m-Fc}, for every two states $x$ and $y$, $\delta^c_{n+1}(x,y) \geq \delta^c_n(x,y)$.  Define the pseudometric $\overline{\delta}^c = \sup_n \delta^c_n$ (which is also a limit since the sequence is non-decreasing). 

As a direct consequence of Lemma \ref{lemma:sup-lower-semicontinuous},  the pseudometric $\overline{\delta}^c$ is lower semi-continuous and is thus in the lattice $\cP$ for any $0 < c < 1$.

\begin{remark}
\label{rem:tarski}
Note that the lattice $\cC$ is not complete which means that, although the metrics $\delta^c_n$ all belong to $\cC$,   $\overline{\delta}^c$ does not need to be in $\cC$.  For that reason, we cannot directly use the Knaster-Tarski theorem in this work.
\end{remark}

\subsection{Fixpoint}
\label{sec:metrics-fixpoints}

Even though we are not able to define the metric  $\overline{\delta}^c$ as a fixpoint directly, it is actually a fixpoint.

\begin{restatable}{theorem}{thmmfixpointF}
\label{thm:m-fixpointFc}
The pseudometric $\overline{\delta}^c$ is a fixpoint for $\cF_c$.
\end{restatable}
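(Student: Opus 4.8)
The plan is to show the two inequalities $\cF_c(\overline{\delta}^c) \leq \overline{\delta}^c$ and $\overline{\delta}^c \leq \cF_c(\overline{\delta}^c)$ separately. Throughout, recall that $\overline{\delta}^c = \sup_n \delta^c_n$ with $\delta^c_{n+1} = \cF_c(\delta^c_n)$, that $\overline{\delta}^c \in \cP$ (so $\cF_c(\overline{\delta}^c)$ makes sense), and that each $\delta^c_n \in \cC$. We will freely use monotonicity of $\cF_c$ and Lemma \ref{lemma:order-m-Fc}.

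For the easy direction, $\overline{\delta}^c \leq \cF_c(\overline{\delta}^c)$ is immediate from Lemma \ref{lemma:order-m-Fc} applied to $m = \overline{\delta}^c \in \cP$.

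For the reverse inequality $\cF_c(\overline{\delta}^c)(x,y) \leq \overline{\delta}^c(x,y)$, the idea is: for each $n$ we have $\delta^c_n \leq \overline{\delta}^c$, hence by monotonicity $\delta^c_{n+1} = \cF_c(\delta^c_n) \leq \cF_c(\overline{\delta}^c)$, so $\overline{\delta}^c = \sup_n \delta^c_{n+1} \leq \cF_c(\overline{\delta}^c)$ — wait, that again gives the easy direction. The real content is to establish $\cF_c(\overline{\delta}^c) \leq \overline{\delta}^c$, i.e. $\sup_{t\geq 0} c^t W(\overline{\delta}^c)(P_t(x),P_t(y)) \leq \sup_n \delta^c_n(x,y)$. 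The natural approach is to fix $t$ and $\varepsilon > 0$, pick a near-optimal coupling or use the increasing approximation: since $\delta^c_n \uparrow \overline{\delta}^c$ pointwise and each $\delta^c_n$ is continuous, Lemma \ref{lemma:wasserstein-limit} gives $W(\delta^c_n)(P_t(x),P_t(y)) \to W(\overline{\delta}^c)(P_t(x),P_t(y))$ as $n \to \infty$. Hence $c^t W(\overline{\delta}^c)(P_t(x),P_t(y)) = \lim_n c^t W(\delta^c_n)(P_t(x),P_t(y)) \leq \lim_n \sup_{s} c^s W(\delta^c_n)(P_t(x),P_t(y)) = \lim_n \delta^c_{n+1}(x,y) \leq \overline{\delta}^c(x,y)$. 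Taking the supremum over $t$ on the left gives $\cF_c(\overline{\delta}^c)(x,y) \leq \overline{\delta}^c(x,y)$.

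The main obstacle is the interchange of the two limits/suprema: the supremum over $t \geq 0$ in the definition of $\cF_c$ sits outside, while the limit in $n$ (needed to invoke Lemma \ref{lemma:wasserstein-limit}) sits inside. The argument above handles this correctly because the bound $c^t W(\delta^c_n)(P_t(x),P_t(y)) \leq \delta^c_{n+1}(x,y) \leq \overline{\delta}^c(x,y)$ holds \emph{uniformly in $t$} for each fixed $n$, so after passing $n \to \infty$ at a fixed $t$ we still get a bound by $\overline{\delta}^c(x,y)$ independent of $t$, and only then take $\sup_t$. One must also check the hypotheses of Lemma \ref{lemma:wasserstein-limit} are met: the $\delta^c_n$ are continuous cost functions $E \times E \to [0,1]$ forming an increasing sequence converging pointwise to $\overline{\delta}^c$, which is exactly how $\overline{\delta}^c$ was constructed, so this is routine. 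Combining the two inequalities yields $\cF_c(\overline{\delta}^c) = \overline{\delta}^c$.
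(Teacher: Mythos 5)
Your proof is correct, but it takes a genuinely different route from the paper's. The paper establishes the key identity $W(\overline{\delta}^c)(P_t(x),P_t(y)) = \sup_n W(\delta^c_n)(P_t(x),P_t(y))$ by applying Sion's minimax theorem to the pairing $\Xi(\gamma,m) = \int m\,\dee\gamma$ on $\Gamma(P_t(x),P_t(y)) \times Y$, where $Y$ is the convex hull of the $\delta^c_n$; this requires checking compactness and convexity of the coupling set and the continuity/quasiconcavity hypotheses of $\Xi$. You instead obtain the same identity directly from Lemma~\ref{lemma:wasserstein-limit}, whose hypotheses (an increasing sequence of continuous costs $\delta^c_n \in \cC$ converging pointwise to $\overline{\delta}^c$, applied to the probability measures $P_t(x), P_t(y)$ of an honest process) are exactly met by construction; you then correctly handle the interchange of $\sup_t$ and $\lim_n$ by noting that the bound $c^t W(\delta^c_n)(P_t(x),P_t(y)) \leq \delta^c_{n+1}(x,y) \leq \overline{\delta}^c(x,y)$ is uniform in $t$ for each fixed $n$. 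The underlying mathematical content is similar --- Lemma~\ref{lemma:wasserstein-limit} is itself proved via compactness of $\Gamma(P,Q)$ and monotone convergence, the same ingredients Sion's theorem consumes --- but your packaging is more economical, reusing a lemma the paper already proves and deploys in Theorem~\ref{thm:lambda-fixpoint-F}, and it avoids introducing the auxiliary convex set $Y$ altogether. (One cosmetic remark: the false start in your second paragraph, where you momentarily rederive the easy direction before correcting course, should be deleted in a final write-up.)
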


The full proof can be found in Appendix  \ref{app:gen-CT}. It relies on the use of Sion's minimax theorem on the function
\begin{align*}
\Xi : \Gamma (P_t(x), P_t(y)) \times Y & \to [0,1]\\
(\gamma, m) & \mapsto\int m ~ \dee \gamma.
\end{align*}
where $Y$ is the set of linear combinations of pseudometrics $\sum_{n \in \mathbb{N}} a_n \delta_n$ such that for every $n$, $a_n \geq 0$ (and finitely many are non-zero) and $\sum_{n \in \mathbb{N}} a_n = 1$. 

\begin{lemma}
\label{lemma:comparison-fixpointFc}
Consider a discount factor $0 < c < 1$ and a pseudometric $m$ in $\cP$ such that
\begin{enumerate}
\item $m$ is a fixpoint for $\cF_c$,
\item for every two states $x$ and $y$, $m(x,y) \geq |obs(x) - obs(y)|$,
\end{enumerate}
then $m \geq \overline{\delta}^c$.
\end{lemma}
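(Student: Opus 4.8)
The plan is to prove $m \geq \overline{\delta}^c$ by induction on $n$, showing that $m \geq \delta^c_n$ for every $n \in \mathbb{N}$; since $\overline{\delta}^c = \sup_n \delta^c_n$, this suffices. The base case $n = 0$ is exactly hypothesis~(2): $m(x,y) \geq |obs(x) - obs(y)| = \delta^c_0(x,y)$ for all $x,y$. For the inductive step, suppose $m \geq \delta^c_n$. I would like to apply the functional $\cF_c$ to both sides and use monotonicity, but there is a subtlety to be careful about: $\cF_c$ is only defined on $\cP$ (lower semi-continuous pseudometrics), so I must first check that $m$ lies in $\cP$ — which is given, since $m \in \cP$ by hypothesis — and that $\delta^c_n \in \cP$, which holds because $\delta^c_n \in \cC \subset \cP$. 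Then monotonicity of $\cF_c$ on $\cP$ gives $\cF_c(m) \geq \cF_c(\delta^c_n)$, i.e., using hypothesis~(1) that $m$ is a fixpoint, $m = \cF_c(m) \geq \cF_c(\delta^c_n) = \delta^c_{n+1}$. This closes the induction.

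Concretely, once $m \geq \delta^c_n$ for all $n$ is established, for any pair of states $x,y$ we have $m(x,y) \geq \delta^c_n(x,y)$ for every $n$, hence $m(x,y) \geq \sup_n \delta^c_n(x,y) = \overline{\delta}^c(x,y)$, which is the claim.

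The only real point that needs care — and what I expect to be the ``main obstacle,'' though it is a mild one — is making sure all the objects to which $\cF_c$ is applied actually lie in its domain $\cP$, so that monotonicity is legitimately invoked. The functional $\cF_c$ is monotone as a map $\cP \to \cM$, and both $m$ and each $\delta^c_n$ are in $\cP$ (indeed $\delta^c_n \in \cC$), so the inequality $\cF_c(m) \geq \cF_c(\delta^c_n)$ is valid and takes place in $\cM$; combined with the fixpoint hypothesis this gives the desired bound. No appeal to Sion's minimax theorem or to Theorem~\ref{thm:m-fixpointFc} is needed here — this lemma is purely a consequence of monotonicity, the fixpoint property, and the inductive definition of the $\delta^c_n$.
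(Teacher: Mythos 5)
Your proof is correct and follows exactly the paper's approach: the paper also proves the lemma by induction on $n$, establishing $m \geq \delta^c_n$ via the base case from hypothesis (2) and the inductive step from monotonicity of $\cF_c$ together with the fixpoint hypothesis. Your additional care in checking that $m$ and the $\delta^c_n$ lie in the domain $\cP$ of $\cF_c$ is a sound and welcome precision, not a deviation.
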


The proof is done by showing that $m \geq \delta_n^c$ by induction on $n$.

%

 We even have the following characte\-rization of $\overline{\delta}^c$ using Lemma \ref{lemma:comparison-fixpointFc} and Theorem \ref{thm:m-fixpointFc}.
 
\begin{theorem}
The pseudometric  $\overline{\delta}^c$ is the least fixpoint of $\cF_c$ that is greater than the pseudometric $(x, y) \mapsto |obs(x) - obs(y)|$.
\end{theorem}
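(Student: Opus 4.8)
The plan is to assemble this characterization directly from the two results already established, with essentially no new work. By Theorem~\ref{thm:m-fixpointFc}, $\overline{\delta}^c$ is a fixpoint of $\cF_c$. Moreover, since the sequence $(\delta_n^c)_n$ is non-decreasing (Lemma~\ref{lemma:order-m-Fc}) with $\delta_0^c(x,y) = |obs(x) - obs(y)|$, we have $\overline{\delta}^c \geq \delta_0^c$, i.e. $\overline{\delta}^c(x,y) \geq |obs(x) - obs(y)|$ for every pair of states $x,y$. Hence $\overline{\delta}^c$ is a fixpoint of $\cF_c$ lying above the observable pseudometric, so it belongs to the set whose least element we are after; it only remains to prove minimality.

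For minimality, I would take an arbitrary $m$ that is a fixpoint of $\cF_c$ and satisfies $m(x,y) \geq |obs(x) - obs(y)|$ for all $x,y$, and show $m \geq \overline{\delta}^c$. The first observation is that for the phrase ``$m$ is a fixpoint of $\cF_c$'' to be meaningful, $m$ must lie in the domain $\cP$ of $\cF_c$, that is, $m$ is a lower semi-continuous $1$-bounded pseudometric (recall that $\cF_c(m)$ is only defined for $m \in \cP$). Thus $m$ meets both hypotheses of Lemma~\ref{lemma:comparison-fixpointFc}, and that lemma immediately yields $m \geq \overline{\delta}^c$. Combining the two parts shows that $\overline{\delta}^c$ is the least element of the collection of fixpoints of $\cF_c$ that dominate $(x,y)\mapsto|obs(x)-obs(y)|$.

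I do not expect any genuine obstacle in this proof: the substantive arguments have already been carried out, namely the minimax argument of Theorem~\ref{thm:m-fixpointFc} establishing that $\overline{\delta}^c$ is a fixpoint, and the inductive comparison $m \geq \delta_n^c$ underlying Lemma~\ref{lemma:comparison-fixpointFc}. The only point needing a word of care is the implicit domain restriction just mentioned, which is precisely what permits invoking Lemma~\ref{lemma:comparison-fixpointFc}. It is also worth stressing (cf.\ Remark~\ref{rem:tarski}) that this statement is \emph{not} an instance of the Knaster--Tarski theorem, since the sublattice $\cC$ is not complete and $\overline{\delta}^c$ need not lie in $\cC$; the characterization is a genuine consequence of the two tailor-made results above rather than of a general lattice-theoretic fixpoint principle.
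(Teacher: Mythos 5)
Your proof is correct and follows exactly the paper's intended argument: the paper derives this theorem directly from Theorem~\ref{thm:m-fixpointFc} (fixpoint property) and Lemma~\ref{lemma:comparison-fixpointFc} (minimality), with $\overline{\delta}^c \geq \delta_0^c$ coming from the monotonicity of the sequence. Your added remark about the implicit domain restriction to $\cP$ is a sensible point of care but does not change the argument.
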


\section{Corresponding real-valued logic}
\label{sec:logics}

Similarly to what happens in the discrete-time setting, this behavioural pseudometric $\overline{\delta}^c$ can be characterized by a real-valued logic. This real-valued logic should be thought of as tests performed on the diffusion process, for instance ``what is the expected value of $obs$ after letting the process evolve for time $t$?'' and it generates a pseudometric on the state space by looking at how different the process performs on those tests starting from different positions.

\subsection{The logic}

\paragraph*{Definition of the  logic:}
The logic is defined inductively and is denoted $\grammar$:
\begin{align*}
f \in \grammar & := q~|~ obs ~|~ \min\{ f_1, f_2 \} ~|~ 1-f ~|~ f \ominus q ~|~ \langle t \rangle f
\end{align*}
for all $f_1, f_2, f \in \grammar$, $q \in [0,1] \cap \mathbb{Q}$ and $t \in \mathbb{Q}_{ \geq 0}$.

This logic closely resembles the ones introduced for discrete-time systems by Desharnais et al.~\cite{Desharnais99b,Desharnais04} and by van Breugel et al.~\cite{vanBreugel05a}. The key difference is the term $\langle t \rangle f$ which deals with continuous-time.

\paragraph*{Interpretation of the logics: }
We fix a discount factor $0 < c< 1$. The expressions in $\grammar$ are interpreted inductively as functions $E \to [0,1]$ as follows for a state $x \in E$:
\begin{align*}
q(x) & = q, \\
obs(x) & = obs(x), \\
(\min\{ f_1, f_2 \})(x) & = \min\{ f_1(x),  f_2(x) \}, \\
(1-f)(x) & = 1 - f(x), \\
(f \ominus q)(x) & = \max \{ 0, f(x) - q \}, \\
(\langle t \rangle f)(x) &= c^t ~ \int f(y) ~ P_t(x, \dee y) = c^t~ \left(\hat{P}_t f\right)(x).
\end{align*}
Whenever we want to emphasize the fact that the expressions are interpreted for a discount factor $0<c<1$, we will write $\grammar_c$.

\begin{remark}
Let us clarify what the difference is between an expression in $\grammar$ and its
interpretation. Expressions can be thought of as the
notation $+, ^2, \times$ etc. They don't carry much meaning by
themselves but one can then interpret them for a given set: $\mathbb{R},  C_0(E)$
(continuous functions $E \to \mathbb{R}$ that vanish at infinity) for instance. Combining
notations, one can write expressions that can then be interpreted
on a given set.
\end{remark}

From $\grammar$, we can also define the expression $f \oplus q  = 1 - ((1 - f) \ominus q)$
which is interpreted as a function $E \to [0,1]$ as  $(f \oplus q)(x)  = \min \{ 1, f(x) + q \}$.

\subsection{Definition of the pseudometric}

The pseudometric we derive from the logic $\grammar$ corresponds to how different the test results are when the process starts from $x$  compared to the case when it starts from $y$.

Given a fixed discount factor $0< c<1$, we can define the pseudometric $\lambda^c$:
\[ \lambda^c (x,y) = \sup_{f \in \grammar_c} (f (x) - f(y)) = \sup_{f \in \grammar_c} |f(x) - f(y)|.  \]
The latter equality holds since for every $f \in \grammar_c$, $\grammar_c$ also contains $1-f$.
%

\subsection{Comparison to the fixpoint metric}
This real-valued logic $\grammar_c$ is especially interesting as the corresponding pseudometric $\lambda^c$ matches the fixpoint pseudometric $\overline{\delta}^c$ for the functional $\cF_c$ that we defined in Section \ref{sec:def-pseudometrics-F}.
In order to show that $\lambda^c = \overline{\delta}^c$, we establish the inequalities in both directions.

\begin{restatable}{lemma}{lemmaalphaf}
\label{lemma:alpha-f}
For every $f$ in $\grammar_c$, there exists $n$ such that for every $x,y$, $ f(x) - f(y) \leq \delta^c_n(x,y)$.
\end{restatable}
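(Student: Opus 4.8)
The plan is to prove Lemma~\ref{lemma:alpha-f} by structural induction on the expression $f \in \grammar_c$, producing for each $f$ an index $n = n(f)$ such that $f(x) - f(y) \leq \delta^c_n(x,y)$ for all $x,y$. Since $\delta^c_n$ is non-decreasing in $n$, whenever a constructor combines subexpressions I can take the maximum of the indices provided by the induction hypotheses and work with that common index.

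First I would handle the base cases. For $f = q$ a rational constant, $f(x) - f(y) = 0 \leq \delta^c_0(x,y)$, so $n = 0$ works. For $f = obs$, by definition $f(x) - f(y) = obs(x) - obs(y) \leq |obs(x) - obs(y)| = \delta^c_0(x,y)$, so again $n = 0$.

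Next the propositional constructors. For $f = \min\{f_1, f_2\}$, with indices $n_1, n_2$ from the induction hypotheses, set $n = \max\{n_1, n_2\}$; then $\min\{f_1(x), f_2(x)\} - \min\{f_1(y), f_2(y)\} \leq \max_i (f_i(x) - f_i(y)) \leq \delta^c_n(x,y)$ using the standard fact that $\min$ is $1$-Lipschitz in the sup sense. For $f = 1 - g$, note $f(x) - f(y) = g(y) - g(x) \leq \delta^c_{n(g)}(x,y)$ since $\delta^c_{n(g)}$ is symmetric, so $n = n(g)$. For $f = g \ominus q$, use that $t \mapsto \max\{0, t - q\}$ is $1$-Lipschitz, giving $(g \ominus q)(x) - (g \ominus q)(y) \leq g(x) - g(y) \leq \delta^c_{n(g)}(x,y)$, so again $n = n(g)$.

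The interesting case, and the one I expect to be the main obstacle, is the modal constructor $f = \langle t \rangle g$, where $t \in \mathbb{Q}_{\geq 0}$. Here $f(x) - f(y) = c^t\big(\hat P_t g(x) - \hat P_t g(y)\big) = c^t\big(\int g\,\dee P_t(x) - \int g\,\dee P_t(y)\big)$. By the induction hypothesis there is an index $k = n(g)$ with $g(x') - g(y') \leq \delta^c_k(x',y')$ for all $x',y'$, and since $\grammar_c$ is closed under $1 - (\cdot)$ we also get $|g(x') - g(y')| \leq \delta^c_k(x',y')$, i.e.\ $g$ is in the dual class $\cH(\delta^c_k)$ (after checking $g$ takes values in $[0,1]$, which holds for every expression in $\grammar$). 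Therefore, by the Kantorovich duality stated in the excerpt, $\int g\,\dee P_t(x) - \int g\,\dee P_t(y) \leq W(\delta^c_k)(P_t(x), P_t(y))$. Multiplying by $c^t$ and taking the supremum over times gives $f(x) - f(y) \leq c^t W(\delta^c_k)(P_t(x),P_t(y)) \leq \sup_{s \geq 0} c^s W(\delta^c_k)(P_s(x),P_s(y)) = \cF_c(\delta^c_k)(x,y) = \delta^c_{k+1}(x,y)$, so $n = n(g) + 1$ works. The subtle points to verify carefully are that each $\delta^c_k$ is indeed lower semi-continuous (it lies in $\cC \subset \cP$, so $W(\delta^c_k)$ is well-defined by Lemma~\ref{lemma:optimal-transport-pseudometric}), that membership of $g$ in $\cH(\delta^c_k)$ is exactly the condition needed to apply the duality inequality, and that the duality as stated applies since $P_t(x), P_t(y)$ are genuine probability measures (the process is honest). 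This completes the induction.
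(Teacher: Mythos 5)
Your proof is correct and follows essentially the same structural induction as the paper, including the key modal case where Kantorovich duality applied to $g \in \cH(\delta^c_{n(g)})$ gives the bound $c^t W(\delta^c_{n(g)})(P_t(x),P_t(y)) \leq \cF_c(\delta^c_{n(g)})(x,y) = \delta^c_{n(g)+1}(x,y)$. One minor remark: to obtain $|g(x')-g(y')| \leq \delta^c_{n(g)}(x',y')$ you should not invoke closure of $\grammar_c$ under $1-(\cdot)$ (that applies the induction hypothesis to $1-g$, which is not a subexpression of $\langle t\rangle g$); instead, just swap $x'$ and $y'$ in the induction hypothesis and use the symmetry of the pseudometric $\delta^c_{n(g)}$.
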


This proof is done by induction on the structure of $f$.  A full version can be found in Appendix \ref{app:sec-logics}.

Since $\lambda^c (x,y) = \sup_{f \in \grammar_c} ( f(x) - f(y) )$,  we get the following corollary as an immediate consequence of Lemma \ref{lemma:alpha-f}.
\begin{corollary}
\label{cor:functional-F-geq-logic}
For every discount factor $0< c< 1$,  $\overline{\delta}^c \geq \lambda^c$.
\end{corollary}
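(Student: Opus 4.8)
The plan is to prove the lemma by induction on the structure of the formula $f \in \grammar_c$, in each case producing the required index $n$ from the indices supplied by the inductive hypotheses on subformulas. Two features of the family $(\delta_n^c)_n$ make the bookkeeping work. First, each $\delta_n^c$ is a pseudometric, in particular symmetric, so applying the inductive hypothesis for a subformula $g$ twice — once to $(x,y)$ and once to $(y,x)$ — upgrades $g(x)-g(y)\le \delta_n^c(x,y)$ to $|g(x)-g(y)|\le\delta_n^c(x,y)$. Second, by Lemma~\ref{lemma:order-m-Fc} the sequence $(\delta_n^c)_n$ is non-decreasing, so given finitely many indices we may replace them all by their maximum.

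For the atoms: if $f=q$ then $f(x)-f(y)=0\le\delta_0^c(x,y)$, and if $f=obs$ then $f(x)-f(y)\le|obs(x)-obs(y)|=\delta_0^c(x,y)$; take $n=0$. For $f=\min\{f_1,f_2\}$ use the elementary inequality $\min\{a_1,a_2\}-\min\{b_1,b_2\}\le\max\{a_1-b_1,a_2-b_2\}$ together with the inductive hypotheses and monotonicity. For $f=1-g$ write $f(x)-f(y)=g(y)-g(x)$ and invoke the inductive hypothesis for $g$ at $(y,x)$, using symmetry of $\delta_n^c$. For $f=g\ominus q$ note that $t\mapsto\max\{0,t-q\}$ is $1$-Lipschitz, so $f(x)-f(y)\le|g(x)-g(y)|\le\delta_n^c(x,y)$. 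In all of these cases the index for $f$ is the maximum of the indices of its immediate subformulas (or $0$).

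The only genuinely non-routine case is the modal step $f=\langle t\rangle g$ with $t\in\mathbb{Q}_{\ge0}$. By the inductive hypothesis and symmetry there is $n$ with $|g(x')-g(y')|\le\delta_n^c(x',y')$ for all $x',y'$; since $g$ takes values in $[0,1]$ and $\delta_n^c$ is a continuous (hence lower semi-continuous) $1$-bounded cost on the Polish space $E$ vanishing on the diagonal, this says precisely that $g\in\cH(\delta_n^c)$. The Kantorovich duality from Section~\ref{sec:background} then gives, for every pair of states $(x,y)$,
\begin{align*}
(\langle t\rangle g)(x)-(\langle t\rangle g)(y)
 &= c^t\Bigl(\int g\,\dee P_t(x)-\int g\,\dee P_t(y)\Bigr)\\
 &\le c^t\,W(\delta_n^c)(P_t(x),P_t(y))\\
 &\le \sup_{s\ge0}c^s\,W(\delta_n^c)(P_s(x),P_s(y))\\
 &= \cF_c(\delta_n^c)(x,y)=\delta_{n+1}^c(x,y),
\end{align*}
so $n+1$ works, closing the induction.

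I do not expect a real obstacle: the only point requiring care is checking that the interpretation $g$ of a subformula lands in the Kantorovich potential class $\cH(\delta_n^c)$ — i.e.\ that it is $[0,1]$-valued and $1$-Lipschitz for $\delta_n^c$, which is exactly what symmetry of $\delta_n^c$ buys from the one-sided inductive hypothesis — after which duality does the rest, and all remaining cases are elementary. A secondary point worth stating explicitly is that the duality is legitimately applicable at each stage because $\delta_n^c\in\cC\subset\cP$, so it is a lower semi-continuous $1$-bounded cost function with $\delta_n^c(z,z)=0$.
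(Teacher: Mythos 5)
Your proposal is correct and follows essentially the same route as the paper: the corollary is obtained by proving Lemma~\ref{lemma:alpha-f} via structural induction on $f$, with the modal case $\langle t\rangle g$ handled by checking $g\in\cH(\delta_n^c)$ and invoking the Kantorovich duality to bound $c^t\bigl(\hat{P}_tg(x)-\hat{P}_tg(y)\bigr)$ by $c^tW(\delta_n^c)(P_t(x),P_t(y))\le\delta_{n+1}^c(x,y)$. Your treatment of $\min$ and $\ominus$ via the elementary max inequality and $1$-Lipschitzness is a slightly cleaner packaging of the paper's explicit case analyses, but the argument is the same.
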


We now aim at proving the reverse inequality.
We will use the following result (Lemma A.7.2) from \cite{Ash72} which is also used in~\cite{vanBreugel05a} for discrete-time processes.
\begin{lemma}
\label{lemma:magie}
Let $X$ be a compact Hausdorff space.  Let $A$ be a subset of the set of continuous functions $X \to \mathbb{R}$ such that if $f, g \in A$, then $\max\{ f,g\}$ and $\min \{f,g\}$ are also in $A$.  Consider a function $h$ that can be approximated at each pair of points by functions in $A$, meaning that 
\[ \forall x, y \in X~ \forall \epsilon > 0~ \exists g \in A ~ |h(x) - g(x)| \leq \epsilon \text{ and } |h(y) - g(y)| \leq \epsilon \]
Then $h$ can be approximated by functions in $A$, meaning that $\forall \epsilon > 0 ~ \exists g \in A ~ \forall x \in X~ |h(x) - g(x)| \leq \epsilon$.
\end{lemma}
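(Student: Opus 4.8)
The plan is to run the classical two-stage compactness argument underlying the lattice form of the Stone--Weierstrass theorem, whose only ingredients are the lattice closure of $A$, the compactness of $X$, and the continuity of the functions involved. Fix $\epsilon > 0$; I must produce a single $g \in A$ with $|h(z) - g(z)| \leq \epsilon$ for all $z \in X$. I will freely use that, being closed under binary $\max$ and $\min$, the family $A$ is closed under \emph{finite} maxima and minima (immediate induction). To guarantee the strict containments needed below, I apply the pairwise-approximation hypothesis at scale $\epsilon/2$ rather than $\epsilon$.

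\emph{First stage.} Fix $x \in X$. For each $y \in X$, the hypothesis (at scale $\epsilon/2$) gives $g_{x,y} \in A$ with $|h(x) - g_{x,y}(x)| \leq \epsilon/2$ and $|h(y) - g_{x,y}(y)| \leq \epsilon/2$. The set
\[ U_y = \{ z \in X ~|~ g_{x,y}(z) < h(z) + \epsilon \} \]
is open and contains $y$, since $g_{x,y}(y) \leq h(y) + \epsilon/2 < h(y) + \epsilon$. As $y$ ranges over $X$ these sets cover $X$, so by compactness finitely many $U_{y_1}, \dots, U_{y_k}$ suffice. Put $g_x = \min_i g_{x,y_i} \in A$. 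At any $z$ we have $z \in U_{y_i}$ for some $i$, whence $g_x(z) \leq g_{x,y_i}(z) < h(z) + \epsilon$, so $g_x < h + \epsilon$ everywhere; moreover each $g_{x,y_i}(x) \geq h(x) - \epsilon/2$, so $g_x(x) \geq h(x) - \epsilon/2$.

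\emph{Second stage.} The functions $\{g_x\}_{x \in X}$ all lie below $h + \epsilon$; I now take a finite maximum to also secure a lower bound. The set
\[ V_x = \{ z \in X ~|~ g_x(z) > h(z) - \epsilon \} \]
is open and contains $x$, since $g_x(x) \geq h(x) - \epsilon/2 > h(x) - \epsilon$. By compactness finitely many $V_{x_1}, \dots, V_{x_m}$ cover $X$; set $g = \max_j g_{x_j} \in A$. At any $z$ we have $z \in V_{x_j}$ for some $j$, so $g(z) \geq g_{x_j}(z) > h(z) - \epsilon$; and since every $g_{x_j} < h + \epsilon$, also $g(z) < h(z) + \epsilon$. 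Hence $|h(z) - g(z)| < \epsilon$ for all $z$, which is the desired uniform approximation.

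The step I expect to carry the real content is the \emph{openness} of the sets $U_y$ and $V_x$: $U_y$ is the strict sublevel set of $g_{x,y} - h$ and $V_x$ the strict sublevel set of $h - g_x$, and these are open precisely because $h$ is continuous (lower, resp. upper, semicontinuity of $h$ being exactly what each half requires). This is the hypothesis without which the conclusion fails, and it is what lets the two compactness arguments close. Everything else is bookkeeping: the closure of $A$ under finite maxima and minima, and the two passages from pointwise data to a uniform estimate via compactness of $X$.
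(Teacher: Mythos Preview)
The paper does not supply its own proof of this lemma; it simply quotes it as Lemma~A.7.2 of \cite{Ash72} and uses it as a black box. Your argument is the standard two-stage lattice Stone--Weierstrass proof and is correct.

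One remark worth making explicit: the lemma as stated in the paper does not assume $h$ continuous, yet your proof needs it (for the openness of $U_y$ and $V_x$), and you rightly flag this. Without that hypothesis the statement is in fact false (take $X=[0,1]$, $A=C([0,1])$, $h=\mathds{1}_{\{0\}}$). In Ash's original formulation $h$ is taken in $C(X)$, and in the paper's only application (Corollary~\ref{cor:prep-magie}) $h$ is explicitly assumed continuous, so no harm is done---but your observation that continuity of $h$ is the load-bearing hypothesis is exactly right.
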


In order to use this lemma, we need the following lemmas:

\begin{lemma}
\label{lemma:grammar-continuous}
For every $f \in \grammar_c$, the function $x \mapsto f(x)$ is continuous.
\end{lemma}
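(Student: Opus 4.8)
The plan is to prove Lemma~\ref{lemma:grammar-continuous} by structural induction on $f \in \grammar_c$, exactly mirroring the inductive definition of the logic. The base cases are immediate: the constant function $x \mapsto q$ is continuous, and $obs$ is continuous by the very definition of a CTMP (it is part of the data, assumed continuous $E \to [0,1]$). So the work is entirely in the inductive step, where we check that each of the five term-formers preserves continuity.

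For the propositional connectives this is routine. If $f_1, f_2$ are continuous, then $\min\{f_1,f_2\}$ is continuous because $\min$ is a continuous operation $\mathbb{R}^2 \to \mathbb{R}$ and composition of continuous maps is continuous; similarly $1 - f$ is continuous as the composition of $f$ with the continuous map $t \mapsto 1-t$, and $f \ominus q$, whose interpretation is $x \mapsto \max\{0, f(x) - q\}$, is continuous as the composition of $f$ with the continuous map $t \mapsto \max\{0, t-q\}$. I would state these in one sentence each.

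The one genuinely interesting case is the modal operator $\langle t \rangle f$, interpreted as $x \mapsto c^t \int f(y)\, P_t(x,\dee y) = c^t (\hat P_t f)(x)$. Here I would argue by sequential continuity: $E$ is metrizable (Polish, via the metric $\Delta$), so it suffices to show that whenever $x_n \to x$ in $E$, we have $(\hat P_t f)(x_n) \to (\hat P_t f)(x)$. By the defining property of a CTMP (Definition~\ref{def:CTMarkov}), $x_n \to x$ implies $P_t(x_n) \to P_t(x)$ weakly. By the induction hypothesis $f$ is continuous, and $f$ is bounded (it takes values in $[0,1]$), so $f$ is a bounded continuous function on $E$; the definition of weak convergence of measures then gives $\int f\, \dee P_t(x_n) \to \int f\, \dee P_t(x)$, and multiplying by the constant $c^t$ preserves the limit. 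Hence $x \mapsto c^t(\hat P_t f)(x)$ is continuous. (Note the honesty hypothesis is not even needed here, since $f$ is bounded and weak convergence of sub-probability measures still passes to integrals of bounded continuous functions; but it does no harm to have it.)

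The main obstacle, such as it is, is simply recognizing that the modal case must be handled via weak convergence rather than, say, any kind of uniform estimate on $\hat P_t$ — the semigroup need not act nicely on all bounded continuous functions, only on $C_0(E)$, so one cannot directly invoke the Feller property for an arbitrary $f \in \grammar_c$ (which need not vanish at infinity, e.g.\ the constant $q$). The fix is to use exactly the extra CTMP axiom that was built into Definition~\ref{def:CTMarkov} for this purpose, together with the boundedness of all logical formulas. Everything else is a mechanical check that continuity is closed under the finitely many operations of the grammar.
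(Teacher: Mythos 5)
Your proof is correct and follows the same route as the paper: structural induction, with the only nontrivial case being $\langle t \rangle f$, which the paper likewise dispatches by invoking the weak-continuity of $x \mapsto P_t(x)$ from Definition~\ref{def:CTMarkov} together with the (bounded) continuity of $f$ supplied by the induction hypothesis. You merely spell out in more detail what the paper compresses into one line.
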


\begin{proof}
This is done by induction on the structure of $\Lambda_c$. The only case that is not straightforward is when $f =  \langle t \rangle g$. By induction hypothesis, $g$ is continuous.
Since the map $x \mapsto P_t(x)$ is continuous (onto the weak topology), $f = c^t \hat{P}_t g$ is continuous.
\end{proof}

\begin{restatable}{lemma}{lemmaprepmagieF}
\label{lemma:prep-magie-F}
Consider a continuous function $h: E \to [0,1]$ and two states $z, z'$ such that there exists $f$ in the logic $\grammar_c$ such that $|h(z) - h(z')| \leq |f(z) - f(z')|$. Then for every $\epsilon > 0$, there exists $g \in \grammar_c$  such that $|h(z) - g(z)| \leq 2 \epsilon$ and $|h(z') - g(z')| \leq 2 \epsilon$.
\end{restatable}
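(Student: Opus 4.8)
The plan is to explicitly construct $g$ from $f$, $h(z)$ and $h(z')$ using the arithmetic operations available in $\grammar_c$ (namely $\ominus$, $\oplus$, $\min$ and $1-(\cdot)$, recalling that $\oplus$ is a derived operation), so that $g$ takes values close to $h(z)$ at $z$ and close to $h(z')$ at $z'$. Without loss of generality assume $h(z) \leq h(z')$ (otherwise swap the roles of $z$ and $z'$, or equivalently work with $1-f$ and $1-h$, which stays inside $\grammar_c$). Pick rationals $q_1, q_2 \in [0,1]$ with $|h(z) - q_1| \leq \epsilon$ and $|h(z') - q_2| \leq \epsilon$; note $q_1 \leq q_2 + 2\epsilon$, and by further adjusting $q_1$ downward by at most $2\epsilon$ we may assume $q_1 \leq q_2$ while keeping $|h(z)-q_1| \leq 2\epsilon$. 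The hypothesis gives $|f(z) - f(z')| \geq |h(z) - h(z')| \geq q_2 - q_1 - 2\epsilon$; again by swapping $z,z'$ (or replacing $f$ by $1-f$) we may arrange that $f(z') - f(z) \geq q_2 - q_1 - 2\epsilon$, i.e. $f$ separates the two points in the right direction with roughly the right gap.

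Next I would use $f$ to build a function that behaves like a clamped affine interpolation pinned near $q_1$ at $z$ and near $q_2$ at $z'$. Concretely, set $r = f(z)$ (a real, though we only need rationals below it and above it, which we can approximate), and consider $g_0 = (f \ominus r') \oplus q_1$ for a rational $r' $ slightly below $f(z)$: then $g_0(z) = \min\{1, \max\{0, f(z)-r'\} + q_1\}$ is within $O(\epsilon)$ of $q_1$ provided $f(z) - r'$ is small, hence within $3\epsilon$ of $h(z)$. At $z'$ we have $g_0(z') = \min\{1, (f(z') - r') + q_1\} \geq \min\{1, q_2 - 2\epsilon\}$, so $g_0(z')$ lies at or above $q_2 - 2\epsilon$; truncating from above with $\min\{g_0, q_2'\}$ for a rational $q_2'$ slightly above $q_2$ brings the value at $z'$ to within $O(\epsilon)$ of $h(z')$ while not increasing the value at $z$, which was already small. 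The resulting expression $g = \min\{(f \ominus r') \oplus q_1,\ q_2'\}$ lies in $\grammar_c$ since all constituent operations and constants are in the grammar, and a short estimate shows $|h(z) - g(z)| \leq 2\epsilon$ and $|h(z') - g(z')| \leq 2\epsilon$ (possibly after relabelling $\epsilon$ by an absolute constant, which is harmless).

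The main obstacle I anticipate is bookkeeping the inequalities so that the single constructed expression simultaneously controls the value at \emph{both} points with the stated constant $2\epsilon$, rather than $C\epsilon$ for some larger $C$ — this requires being slightly careful about in which direction $f$ separates $z$ and $z'$, whether to use $f$ or $1-f$, and how the rational approximations $q_1, q_2, r', q_2'$ are chosen relative to each other. A secondary point to check is that we genuinely only ever invoke rational parameters and times already present in $f$, so that $g \in \grammar_c$; since $\ominus$, $\oplus$, $\min$ and complementation all take rational constants and preserve membership, and $r'$ can be taken rational, this is routine. The case analysis on the sign of $h(z)-h(z')$ and on the direction of separation of $f$ is the only genuinely fiddly part; everything else is direct substitution into the interpretations of the logical connectives.
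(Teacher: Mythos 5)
Your construction is correct and essentially the same as the paper's: both normalize via $1-f$ so that $f$ separates $z,z'$ in the same direction as $h$, then build $g$ from $f$ by an $\ominus$ with a rational just below the smaller $f$-value, a $\min$-clamp, and an $\oplus$-shift by a rational near an $h$-value --- the paper uses $g=(\min\{f\ominus p,q\})\oplus r$ with $q$ approximating the gap $|h(z)-h(z')|$ and $r$ approximating the smaller $h$-value, whereas you swap the order of the clamp and the shift and clamp at the absolute level $q_2\approx h(z')$ instead. Your bookkeeping yields $C\epsilon$ rather than $2\epsilon$ at one of the points, but since $\epsilon$ is universally quantified this is repaired by running the argument with $\epsilon/2$ (or by choosing the rational approximations one-sidedly, as the paper does), so the proof goes through.
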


\begin{proof}
WLOG $h(z) \geq h(z')$ and $f(z) \geq f(z')$ (otherwise consider $1-f$ instead of $f$).
Pick $p,q,r \in \mathbb{Q} \cap [0,1]$ such that
\begin{align*}
p & \in [f(z') - \epsilon, f(z')],\\
q & \in [h(z) - h(z') - \epsilon, h(z) - h(z')],\\
r & \in [h(z'), h(z') + \epsilon].
\end{align*}
Define $g = ( \min \{ f \ominus p, q \} ) \oplus r$. The computations showing that the result follows can be found in Appendix \ref{app:sec-logics}.
\end{proof}

\begin{corollary}
\label{cor:prep-magie}
Consider a continuous function $h: E \to [0,1]$ such that for any two states $z, z'$ there exists $f$ in the logic $\grammar_c$ such that $|h(z) - h(z')| \leq |f(z) - f(z')|$. Then for every compact set $K$ in $E$, there exists a sequence $(g_n)_{n \in \mathbb{N}}$ in $\Lambda_c$ that approximates $h$ on~$K$.
\end{corollary}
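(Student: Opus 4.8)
The plan is to deduce Corollary~\ref{cor:prep-magie} by applying Lemma~\ref{lemma:magie} on the compact Hausdorff space $X = K$, with $A$ taken to be the set of restrictions to $K$ of functions in $\grammar_c$. First I would check the two structural hypotheses of Lemma~\ref{lemma:magie}. The continuity of each $f\in\grammar_c$ (hence of its restriction to $K$) is exactly Lemma~\ref{lemma:grammar-continuous}, so $A$ is a family of continuous functions $K\to\mathbb{R}$. For closure under $\max$ and $\min$: the logic has $\min\{f_1,f_2\}$ as a primitive, and $\max\{f_1,f_2\} = 1 - \min\{1-f_1, 1-f_2\}$ is again expressible in $\grammar$ using the $1-f$ connective, so $A$ is closed under both binary lattice operations.

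Next I would verify the pointwise-approximation hypothesis for $h$. Fix $x,y\in K$ and $\epsilon>0$. By the hypothesis of the corollary, there is some $f\in\grammar_c$ with $|h(x)-h(y)|\leq|f(x)-f(y)|$; applying Lemma~\ref{lemma:prep-magie-F} with $z=x$, $z'=y$ and the tolerance $\epsilon/2$ produces $g\in\grammar_c$ with $|h(x)-g(x)|\leq\epsilon$ and $|h(y)-g(y)|\leq\epsilon$. Restricting $g$ to $K$ gives an element of $A$ witnessing the approximation at the pair $(x,y)$. Hence Lemma~\ref{lemma:magie} applies and yields, for every $\epsilon>0$, some $g_\epsilon\in\grammar_c$ whose restriction to $K$ is uniformly within $\epsilon$ of $h$ on $K$. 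Taking $\epsilon = 1/n$ and $g_n = g_{1/n}$ produces the desired sequence $(g_n)_{n\in\mathbb{N}}$ in $\grammar_c$ approximating $h$ on $K$.

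There is essentially no hard step here — the corollary is a packaging of Lemma~\ref{lemma:magie} together with the already-established lemmas. The one point that requires a line of care is the bookkeeping of constants: Lemma~\ref{lemma:prep-magie-F} delivers a $2\epsilon$-approximation, so to land a clean $\epsilon$ (or $1/n$) one must feed it $\epsilon/2$; this is the kind of routine adjustment I would spell out explicitly so the quantifiers match the statement of Lemma~\ref{lemma:magie}. I would also note, for safety, that $K$ with the subspace topology inherited from $E$ is indeed compact Hausdorff (it is a compact subset of the Hausdorff space $E$), so the ambient hypothesis of Lemma~\ref{lemma:magie} is met, and that working with restrictions to $K$ does not affect membership in $\grammar_c$ since the elements of $\grammar_c$ are globally defined functions on $E$ whose restrictions we simply record.
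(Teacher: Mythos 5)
Your proof is correct and follows essentially the same route as the paper: combine Lemma~\ref{lemma:prep-magie-F} (pointwise-pair approximation) with Lemma~\ref{lemma:grammar-continuous} (continuity) and apply Lemma~\ref{lemma:magie} on the compact set $K$. Your version is in fact slightly more complete than the paper's, since you explicitly verify the closure of $\grammar_c$ under $\max$ via $\max\{f_1,f_2\}=1-\min\{1-f_1,1-f_2\}$ and handle the $2\epsilon$-to-$\epsilon$ bookkeeping, both of which the paper leaves implicit.
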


\begin{proof}
We have proven in Lemma \ref{lemma:prep-magie-F} that such a function $h$ can be approximated at pairs of states by functions in $\grammar_c$. Now recall that all the functions in $\grammar_c$ are continuous (Lemma \ref{lemma:grammar-continuous}).

We can thus apply Lemma \ref{lemma:magie} on the compact set $K$, and we get that the function $h$ can be approximated by functions in $\grammar_c$.
\end{proof}

\begin{theorem}
\label{thm:lambda-fixpoint-F}
The pseudometric $\lambda^c$ is a fixpoint of $\cF_c$.
\end{theorem}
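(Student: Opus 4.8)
The plan is to prove the two inequalities $\cF_c(\lambda^c) \leq \lambda^c$ and $\cF_c(\lambda^c) \geq \lambda^c$ separately. For the first direction, fix states $x,y$ and a time $t \geq 0$. By Kantorovich duality, $W(\lambda^c)(P_t(x), P_t(y)) = \max_{h \in \cH(\lambda^c)} |\int h\, \dee P_t(x) - \int h\, \dee P_t(y)|$. The idea is that any $1$-Lipschitz (wrt $\lambda^c$) function $h$ is, by definition of $\lambda^c$ as a supremum over the logic, approximable at pairs of points by logical formulas: indeed $|h(z) - h(z')| \leq \lambda^c(z,z') = \sup_{f \in \grammar_c} |f(z) - f(z')|$, so for each pair there is $f \in \grammar_c$ with $|h(z)-h(z')|$ close to $|f(z)-f(z')|$. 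Since $P_t(x)$ and $P_t(y)$ are tight, restrict attention to a compact set $K$ carrying almost all their mass and apply Corollary \ref{cor:prep-magie} to get a sequence $(g_n)$ in $\grammar_c$ approximating $h$ uniformly on $K$. Then $|\int h\, \dee P_t(x) - \int h\, \dee P_t(y)|$ is within a small error of $|\int g_n\, \dee P_t(x) - \int g_n\, \dee P_t(y)|$. Now observe that $c^t \int g_n\, \dee P_t(x) = (\langle t \rangle g_n)(x)$ when $t \in \mathbb{Q}_{\geq 0}$ (and we can approximate real $t$ by rationals using strong continuity / weak continuity), so $c^t|\int g_n\, \dee P_t(x) - \int g_n\, \dee P_t(y)| = |(\langle t\rangle g_n)(x) - (\langle t \rangle g_n)(y)| \leq \lambda^c(x,y)$ since $\langle t \rangle g_n \in \grammar_c$. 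Taking limits and then the supremum over $t$ yields $\cF_c(\lambda^c)(x,y) \leq \lambda^c(x,y)$.

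For the reverse direction $\cF_c(\lambda^c) \geq \lambda^c$, I would show directly that every formula $f \in \grammar_c$ satisfies $f(x) - f(y) \leq \cF_c(\lambda^c)(x,y)$ by induction on the structure of $f$, which then gives $\lambda^c(x,y) = \sup_f (f(x)-f(y)) \leq \cF_c(\lambda^c)(x,y)$. The base cases ($q$, $obs$) and the closure operations ($\min$, $1-f$, $f \ominus q$) are handled exactly as in the proof of Lemma \ref{lemma:alpha-f}: these operations are non-expansive, and since $\cF_c(\lambda^c) \geq \lambda^c$ is not yet available one instead uses that $\lambda^c$ itself already bounds $f(x)-f(y)$ for these cases, and $\lambda^c \leq \cF_c(\lambda^c)$... — actually the cleaner route is to note $\lambda^c \leq \cF_c(\lambda^c)$ follows already from Lemma \ref{lemma:order-m-Fc} applied to $\lambda^c \in \cP$, so this direction is essentially free once we know $\lambda^c$ is lower semi-continuous. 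That last fact holds because $\lambda^c$ is a supremum of continuous functions (each $f \in \grammar_c$ is continuous by Lemma \ref{lemma:grammar-continuous}), hence lower semi-continuous by Lemma \ref{lemma:sup-lower-semicontinuous}, so $\lambda^c \in \cP$ and Lemma \ref{lemma:order-m-Fc} gives $\lambda^c \leq \cF_c(\lambda^c)$ immediately. So the reverse inequality requires almost no work.

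Combining the two gives $\cF_c(\lambda^c) = \lambda^c$, i.e. $\lambda^c$ is a fixpoint.

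The main obstacle is the first direction, specifically two points: (i) controlling the tail mass outside the compact set $K$ uniformly enough that the uniform-on-$K$ approximation of $h$ by $g_n$ translates into control of the full integrals against both $P_t(x)$ and $P_t(y)$ — this needs tightness of both measures and the $1$-boundedness of all functions involved; and (ii) passing from rational times $t$ (for which $\langle t \rangle$ is a genuine logical connective) to arbitrary real $t \geq 0$ in the supremum defining $\cF_c$ — here one uses that $t \mapsto \hat P_t g$ is continuous (strong continuity of the FD-semigroup) together with $|c^t - c^{t'}|$ being small for $t$ near $t'$, so $(\langle t\rangle g)(x)$ for rational $t$ approximate the $c^t \int g\, \dee P_t(x)$ term for real $t$. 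Neither is deep, but both must be done carefully to keep all the $\epsilon$'s consistent.
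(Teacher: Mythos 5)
Your overall skeleton is right, and the easy direction is handled exactly as in the paper: $\lambda^c$ is a supremum of continuous functions (Lemmas \ref{lemma:grammar-continuous} and \ref{lemma:sup-lower-semicontinuous}), hence lower semi-continuous, so $\lambda^c \in \cP$ and Lemma \ref{lemma:order-m-Fc} gives $\lambda^c \leq \cF_c(\lambda^c)$ for free. The hard direction, however, has a genuine gap: you apply Kantorovich duality directly to the cost $\lambda^c$ and take a maximizer $h \in \cH(\lambda^c)$, but such an $h$ is only $1$-Lipschitz with respect to $\lambda^c$, which is merely lower semi-continuous; this does \emph{not} make $h$ continuous for the original topology $\cO$, and Corollary \ref{cor:prep-magie} (via Lemma \ref{lemma:magie}, which lives in the world of continuous functions on a compact Hausdorff space) requires $h$ continuous. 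Moreover, the hypothesis of Corollary \ref{cor:prep-magie} asks for an $f \in \grammar_c$ with $|h(z)-h(z')| \leq |f(z)-f(z')|$ \emph{exactly}; from $|h(z)-h(z')| \leq \lambda^c(z,z') = \sup_f |f(z)-f(z')|$ you only get such an $f$ up to an $\epsilon$, since the supremum need not be attained. Your ``close to'' glosses over this, and while that $\epsilon$-slack could probably be patched into Lemma \ref{lemma:prep-magie-F}, the possible discontinuity of $h$ cannot.

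The paper's fix is the step you are missing: enumerate $\grammar = \{f_0, f_1, \dots\}$, set $m_k(x,y) = \max_{j \leq k}|f_j(x)-f_j(y)|$, and use Lemma \ref{lemma:wasserstein-limit} (the $m_k$ are continuous, increasing, and converge pointwise to $\lambda^c$) to write $W(\lambda^c)(P_t(x),P_t(y)) = \sup_k W(m_k)(P_t(x),P_t(y))$. For each fixed $k$ the dual optimizer $h \in \cH(m_k)$ \emph{is} continuous (since $m_k$ is a continuous pseudometric), and the finite maximum guarantees that at every pair $(z,z')$ some single $f_j$ with $j\leq k$ realizes $|h(z)-h(z')| \leq |f_j(z)-f_j(z')|$, so Corollary \ref{cor:prep-magie} applies verbatim. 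From there your tightness/compact-set argument and the bound $|(\langle t\rangle g_n)(x) - (\langle t\rangle g_n)(y)| \leq \lambda^c(x,y)$ go through as you describe (with the restriction to rational $t$ handled by continuity of the semigroup, as you note). So the structure is salvageable, but the interposition of the $m_k$ and Lemma \ref{lemma:wasserstein-limit} is an essential ingredient, not an optional refinement.
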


\begin{proof}
We will omit writing $c$ as an index in this proof.  We already have that $\lambda \leq \cF(\lambda)$ (cf Lemma \ref{lemma:order-m-Fc}), so we only need to prove the reverse direction.

There are countably many expressions in $\grammar$ so we can number them: $\grammar = \{ f_0, f_1, ...\}$. Write $m_k (x,y) = \max_{j \leq k} |f_j(x) - f_j(y)|$. 
Since all the $f_j$ are continuous (Lemma \ref{lemma:grammar-continuous}), the map $m_k$ is also continuous. Furthermore, $m_k \leq m_{k+1}$ and for every two states $x$ and $y$, $\lim_{k \rightarrow \infty} m_k(x,y) = \lambda (x,y)$.

Using Lemma \ref{lemma:wasserstein-limit}, we know that for every states $x,y$ and time $t$, 
\[\sup_k W(m_k)(P_t(x), P_t(y)) = W(\lambda)(P_t(x), P_t(y)).\]
This implies that
\begin{align*}
\cF (\lambda) (x, y) 
&= \sup_{t \geq 0} c^t W(\lambda)(P_t(x), P_t(y)) = \sup_{k} \sup_{t \geq 0} c^t W(m_k)(P_t(x), P_t(y)).
\end{align*}

It is therefore enough to show that for every $k$, every time $t \in \mathbb{Q}_{\geq 0}$ and every pair of states $x,y$, $c^t W(m_k)(P_t(x), P_t(y)) \leq \lambda (x,y)$.
There exists $h \in \cH(m_k)$ such that $W(m_k)(P_t(x), P_t(y)) = \left| \int h~ \dee P_t(x) - \int h~ \dee P_t(y)\right| .$

Since $P_t(x)$ and $P_t(y)$ are tight, for every $\epsilon > 0$, there exists a compact set $K \subset E$ such that $P_t(x, E \setminus K) \leq \epsilon / 4$ and $P_t(y, E \setminus K) \leq \epsilon / 4$.

By Corollary \ref{cor:prep-magie}, there exists $(g_n)_{n \in \mathbb{N}}$ in $\Lambda$ that pointwise converge to $h$ on $K$.  In particular, for $n$ large enough,
\begin{align*}
\left| \int_K g_n ~ \dee P_t(x) - \int_K h ~ \dee P_t(x) \right| \leq \epsilon /4,
\end{align*}
and similarly for $P_t(y)$. We get that for $n$ large enough.
\begin{align*}
& \left| \int_E g_n ~ \dee P_t(x) - \int_E h ~ \dee P_t(x) \right| \\
& \leq \left| \int_K g_n ~ \dee P_t(x) - \int_K h ~ \dee P_t(x) \right| + \int_{E \setminus K} |g_n- h| ~ \dee P_t(x)  \leq \epsilon / 2,
\end{align*}
and similarly for $P_t(y)$. We can thus conclude that
\begin{align*}
& c^t W(m_k) (P_t(x), P_t(y)) = c^t \left| \int_E h ~ \dee P_t(x) - \int_E h ~ \dee P_t(y) \right|\\
& \leq c^t \left| \int_E g_n ~ \dee P_t(x) - \int_E h ~ \dee P_t(x) \right| + c^t \left| \int_E g_n ~ \dee P_t(y) - \int_E h ~ \dee P_t(y) \right|\\
& \qquad + \left| (\langle t \rangle g_n)(x) - ( \langle t \rangle g_n)(y)\right|\\
& \leq \epsilon + \lambda(x, y).
\end{align*}
Since $\epsilon$ is arbitrary,  $c^t W(m_k) (P_t(x), P_t(y)) \leq \lambda (x, y)$. \hfill $\square$
\end{proof}

As a consequence of Theorem \ref{thm:lambda-fixpoint-F} and using Lemma \ref{lemma:comparison-fixpointFc}, we get that $\overline{\delta}^c \leq \lambda^c$. Then with Corollary \ref{cor:functional-F-geq-logic}, we finally get:
\begin{theorem}
\label{thm:metric-equal-logic-functional-F}
The two pseudometrics are equal: $\overline{\delta}^c = \lambda^c$.
\end{theorem}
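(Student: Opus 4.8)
\section*{Proof proposal for Theorem \ref{thm:metric-equal-logic-functional-F}}

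The plan is to obtain the equality $\overline{\delta}^c = \lambda^c$ by establishing the two inequalities separately and then invoking the characterization of $\overline{\delta}^c$ as the least fixpoint of $\cF_c$ dominating the observable pseudometric. The inequality $\overline{\delta}^c \geq \lambda^c$ is already in hand: it is exactly Corollary \ref{cor:functional-F-geq-logic}, which follows from the structural induction of Lemma \ref{lemma:alpha-f} together with the fact that $\lambda^c$ is by definition the supremum over $f \in \grammar_c$ of $f(x) - f(y)$. So the work all lies in the reverse direction $\overline{\delta}^c \leq \lambda^c$.

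For that direction, I would check that $\lambda^c$ satisfies the two hypotheses of Lemma \ref{lemma:comparison-fixpointFc}. First, $\lambda^c$ must lie in $\cP$, i.e.\ be a lower semi-continuous $1$-bounded pseudometric: it is a pseudometric and $1$-bounded directly from its definition as a supremum of differences of $[0,1]$-valued functions, and it is lower semi-continuous because by Lemma \ref{lemma:grammar-continuous} each map $(x,y) \mapsto |f(x) - f(y)|$ is continuous, so $\lambda^c$ is a pointwise supremum of continuous functions and hence lower semi-continuous (using Lemma \ref{lemma:sup-lower-semicontinuous} as invoked earlier for $\overline{\delta}^c$). Second, the hypothesis $\lambda^c(x,y) \geq |obs(x) - obs(y)|$ is immediate since $obs \in \grammar_c$, so the difference $obs(x) - obs(y)$ is one of the terms over which the supremum defining $\lambda^c$ is taken. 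The remaining hypothesis, that $\lambda^c$ is a fixpoint of $\cF_c$, is precisely Theorem \ref{thm:lambda-fixpoint-F}. With all three conditions verified, Lemma \ref{lemma:comparison-fixpointFc} yields $\lambda^c \geq \overline{\delta}^c$.

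Combining $\overline{\delta}^c \geq \lambda^c$ (Corollary \ref{cor:functional-F-geq-logic}) with $\lambda^c \geq \overline{\delta}^c$ gives the desired equality, completing the proof. The genuine mathematical content has already been packaged into the lemmas cited — in particular, the fixpoint property of $\lambda^c$ in Theorem \ref{thm:lambda-fixpoint-F} (which needs the Stone–Weierstrass-style approximation argument via Lemma \ref{lemma:magie} and tightness of the kernels) and the comparison Lemma \ref{lemma:comparison-fixpointFc} (proved by induction showing $m \geq \delta_n^c$ for all $n$). So I do not expect a serious obstacle here; the only point requiring a moment's care is confirming that $\lambda^c$ really is in the domain $\cP$ of $\cF_c$, i.e.\ its lower semi-continuity, which is why it is worth spelling out that each $f \in \grammar_c$ is continuous.
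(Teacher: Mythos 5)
Your proposal is correct and follows exactly the paper's own route: Corollary \ref{cor:functional-F-geq-logic} for $\overline{\delta}^c \geq \lambda^c$, and Lemma \ref{lemma:comparison-fixpointFc} applied to $\lambda^c$ (using Theorem \ref{thm:lambda-fixpoint-F} and the fact that $obs \in \grammar_c$) for the reverse inequality. Your explicit check that $\lambda^c$ is lower semi-continuous, hence in $\cP$, is a detail the paper leaves implicit, and is a worthwhile addition.
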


\section{Obstacles in continuous time}
\label{sec:wrapup}

As we have pointed out throughout this work, although the overall outline is similar to that employed in the discrete case, we are forced to develop new strategies to overcome technical challenges arising from the continuum setting, where the topological properties of the time/state space become crucial elements in the study.

For example, a key obstacle we face in this work is that the fixpoint pseudometric can't be derived directly from the Banach fixed point theorem.  There is no notion of step and we therefore need to consider all times in $\mathbb{R}_{\geq 0}$. In discrete-time, the counterpart of our functional $\cF_c$ is $c$-Lipschitz. However, in our case, this should be thought of as $c^1$. Since we are forced to consider all times and since $\sup_{t \geq 0} c^t = 1$, we cannot find a constant $k < 1$ such that $\cF_c$ is $k$-Lipschitz.  
To overcome this issue, we construct a candidate pseudometric with brute force (we first define a sequence of pseudometrics $\delta_n^c$ and then define our candidate $\overline{\delta}^c$ as their supremum) and then prove that it is indeed the greatest fixpoint.

In addition, the scope of the functional $\cF_c$ also requires careful treatment: for measurability reasons, we have to restrict to the lattice of pseudometrics which generate a subtopology of the pre-existing one. We cannot apply the Kleene fixed point theorem either since this lattice is not complete.
This whole approach toward a fixpoint pseudometric differs substantially from the long-existing methodology.

Furthermore, for some of our key results (e.g., Theorem \ref{thm:lambda-fixpoint-F} and Lemma \ref{lemma:Fc-cont-is-cont}), the proofs rely on the compactness argument to establish certain convergence relations in temporal and/or spatial variables and to further achieve the goals. This type of procedure in general is not required in the discrete setting.

Although from time to time, we do restrict the time variable to rational values thanks to the continuity of the FD-semigroups, this is very different from treating discrete-time models, because rational time stamps cannot be ``ordered'' to represent the notion of ``next step'' in a continuous-time setting. Therefore, we are still working with a true continuous-time dynamics, and it cannot be reduced to a discrete-time problem.

\section{Examples}
\label{sec:example}

\subsection{A toy example}
\label{sec:toy-ex}

Let us consider a process defined on $\{ 0, x, y, z, \partial\}$. Let us first give an intuition for what we are trying to model.  In the states $x$, $y$, $z$, the process is trying to learn a value. In the state $0$, the correct value has been learnt, but in the state $\partial$, an incorrect value has been learnt. From the three `` learning'' states $x, y$ and $z$, the process has very different learning strategies:
\begin{itemize}
\item from $x$,  the process exponentially decays to the correct value represented by the state $0$,
\item from $y$, the process is not even attempting to learn the correct value and thus remains in a learning state, and
\item from $z$, the process slowly learns but it may either learn the correct value ($0$) or an incorrect one ($\partial$).
\end{itemize}
The word ``learning'' is here used only to give colour to the example.

Formally, this process is described by the time-indexed kernels:
\begin{align*}
P_t (x, \{0\}) &= 1 - e^{- \lambda t} & P_t(z, \{ 0\}) &= \frac{1}{2}(1 - e^{- \lambda t}) & P_t(0, \{ 0\}) &= 1\\
P_t (x, \{x\}) &= e^{- \lambda t}  &  P_t(z, \{ \partial\}) &= \frac{1}{2}(1 - e^{- \lambda t}) &P_t(y, \{ y\})& = 1\\
&& P_t(z, \{ z\}) &=e^{- \lambda t} & P_t(\partial, \{ \partial\}) &= 1
\end{align*}
(where $\lambda \geq 0$)
and by the observable function $obs(x) = obs(y) = obs(z) = r \in (0,1)$, $obs(\partial) = 0$ and $obs(0) = 1$.

We will pick as a discount factor $c = e^{-\lambda}$ to simplify notations. We can compute the distance $\overline{\delta}$ (we omit adding $c$ in the notation). 

We will only detail the computations for $\overline{\delta}(0, z)$ in the main section of this paper.  First of all, note that $\delta_0(0, z) = 1 - r$ and that
\begin{align*}
\delta_{n+1} (0, z) 
& = \sup_{t \geq 0} e^{- \lambda t} \left( e^{- \lambda t} \delta_n(0, z) + \frac{1}{2} (1 - e^{- \lambda t}) \delta_n(0, \partial) \right)\\
& = \sup_{0 < \theta \leq 1} \theta \left( \theta \delta_n(0, z) + \frac{1}{2} (1 - \theta) \right)\\
 & =  \sup_{0 < \theta \leq 1} \theta \left[ \frac{1}{2} + \theta \left( \delta_n(0, z) - \frac{1}{2} \right) \right]
\end{align*}
We are thus studying the function $\phi : \theta \mapsto  \theta \left[ \frac{1}{2} + \theta \left( \delta_n(0, z) - \frac{1}{2} \right) \right] $. Its derivative $\phi'$ has value $0$ at $\theta_0 = \frac{1}{2(1 - 2 \delta_n (0, z))}$. There are three distinct cases:
\begin{enumerate}
\item If $0 < \delta_n(0, z) \leq \frac{1}{4}$, in that case,  $0 < \theta_0 \leq 1$ and $\sup_{1 < \theta \leq 1} \phi (\theta)$ is attained in $\theta_0$ and in this case, we have that $\delta_{n+1} (0, z) =  \frac{1}{8(1 - 2 \delta_n (0, z))} \leq \frac{1}{4}$. This means in particular that if $1 - r \leq \frac{1}{4}$, then $\overline{\delta}(0, z) = \frac{1}{4}$.
\item If $\frac{1}{4} \leq \delta_n(0, z) \leq \frac{1}{2}$, then $\theta_0 \geq 1$ and $\phi$ is increasing on $(- \infty, \theta_0]$. In that case, we therefore have that $\sup_{1 < \theta \leq 1} \phi (\theta)$ is attained in $1$ and therefore $\delta_{n+1}(0, z) = \delta_n(0, z)$. 
\item  If $\frac{1}{2} \leq \delta_n(0, z)$, then $\theta_0 \leq 0$ and $\phi$ is increasing on $[\theta_0, + \infty)$. In that case, we therefore have that $\sup_{1 < \theta \leq 1} \phi (\theta)$ is attained in $1$ and therefore $\delta_{n+1}(0, z) = \delta_n(0, z)$. 
\end{enumerate}
We therefore have that
\[ \overline{\delta}(0, z) = \begin{cases}
\frac{1}{4} \qquad \text{if } r \geq \frac{3}{4}\\
1 - r \qquad \text{otherwise.}
\end{cases} \]
%
%
%
%
%
%
%

The other cases can be found in Appendix  \ref{app:example} but we summarize them in a table here. Note that the computation of $\overline{\delta}(x, z)$ is too involved and we therefore only provide an interval.
{ \small \[\begin{NiceArray}{|c|c|c|c|c|c|}
\hline
   & x & y & z & \partial & 0 \\ 
\hline
x & 0 & \frac{1 - r}{2} & \in \left[ \frac{1}{8}, \frac{1}{4} \right]  &  \begin{cases}
r ~~ \text{if } r \geq \frac{1}{2}\\
\frac{1}{2} ~~ \text{otherwise}
\end{cases} & 1 - r \\ \hline
y & \frac{1 - r}{2} & 0 & \frac{1}{4} & r & 1 - r \\ \hline
z &  \in \left[ \frac{1}{8}, \frac{1}{4} \right] & \frac{1}{4}& 0 & \begin{cases}
r ~~  \text{if } r \geq \frac{1}{4}\\
\frac{1}{4} ~~ \text{otherwise}
\end{cases} & \begin{cases}
\frac{1}{4} ~~ \text{if } r \geq \frac{3}{4}\\
1 - r ~~ \text{otherwise}
\end{cases} \\ \hline
\partial &  \begin{cases}
r ~~ \text{if } r \geq \frac{1}{2}\\
\frac{1}{2} ~~ \text{otherwise}
\end{cases} & r & \begin{cases}
r ~~  \text{if } r \geq \frac{1}{4}\\
\frac{1}{4} ~~ \text{otherwise}
\end{cases} & 0 & 1 \\ \hline
0 & 1 - r & 1 -r & \begin{cases}
\frac{1}{4} ~~ \text{if } r \geq \frac{3}{4}\\
1 - r ~~ \text{otherwise}
\end{cases} & 1 & 0 \\
\hline
\end{NiceArray} \] }

Note that even though the process behaves vastly differently from $x$ than from $y$,  we have that $\overline{\delta}(x, 0) = \overline{\delta}(y, 0) $, even though for $t > 0$, we have that $\hat{P}_t obs(x)  > \hat{P}_t obs(y)$. However note that $x$ and $y$ have different distances to other states.


This also happens  in the discrete-time setting and for continuous-time Markov chains. A continuous-time Markov chain is a continuous-time type of processes but where the evolution still occurs as steps. They can be described as jump processes over continuous time.  They have been studied in \cite{Gupta06} by considering the whole trace starting from a single state. It is possible to adapt our work to traces (called trajectories in \cite{Rogers00a}) by adding some additional regularity conditions on the processes; this can be found in \cite{Chen24}. However one should consider the added complexity. For instance, for Brownian motion the kernels $P_t$ are well-known and easy to describe with a density function but that is not the case for the probability measures on trajectories.

\subsection{Two examples based on Brownian motion}
\label{sec:BM-ex}

Previous example is a very simple example which emphasizes some of the difficulties of computing our metric. It may then be tempting to think that our approach cannot yield any result when applied to the real world.  The next examples show that we can still provide some meaningful results when looking at real-life processes such as Brownian motion.
We refer the reader to Section \ref{app:BM} for some background on Brownian motion and hitting times.

\paragraph*{First example:}
We denote $(B_t^x)_{t \geq 0}$ the standard Brownian motion on the real line starting from $x$. We can then define its first hitting time of $0$ or $1$: $\tau:=\inf\{t\geq0:B_{t}^{x}=0\text{ or }B_{t}^{x}=1\}$. 

Our state space is the interval $[0,1]$. For every $x\in[0,1]$ and every $t\geq0$,
let $P_{t}\left(x,\cdot\right)$ be distribution of $B_{t\wedge\tau}^{x}$. In
other words, $P_{t}\left(x,\cdot\right)$ is the distribution of Brownian
motion starting from $x$ running until hitting either 0 or 1 and
getting trapped upon hitting a boundary.  We equip the state space $[0,1]$ with $obs$ defined as $obs(x) = x$. Then, $obs\left(B_{t\wedge\tau}^{x}\right)=B_{t\wedge\tau}^{x}$
and hence for every $x\in\left(0,1\right)$
\begin{align*}
\delta_{1}\left(0,x\right) & =\sup_{t\geq0}c^{t}W\left(\delta_{0}\right)\left(P_{t}\left(0\right),P_{t}\left(x\right)\right) =\sup_{t\geq0}c^{t}\mathbb{E}[B_{t\wedge\tau}^{x}]\\
 & =\sup_{t\geq0}c^{t}\cdot x\text{ (because }B_{t\wedge\tau}^{x}\text{ is a martingale)}\\
 & =x.
\end{align*}
where the first equality follows from the fact that $P_{t}\left(0\right)$ is the dirac distribution $\mathfrak{d}_{0}$ at $0$ and the second equality comes from the fact that any coupling $\gamma\in\Gamma\left(\mathfrak{d}_{0},P_{t}\left(x\right)\right)$ is reduced to the marginal $P_{t}\left(x\right)$. Since $\delta_{0}\left(0,x\right) = \delta_{1}\left(0,x\right)$, we then have $\bar{\delta}\left(0,x\right)=x$.

Similarly, $\bar{\delta}\left(1,y\right)=1-y$
for every $y\in\left[0,1\right]$. It is difficult to compute $\bar{\delta}\left(x,y\right)$
for general $x,y\in\left[0,1\right]$ though.

\paragraph*{Second example:}
This example relies on stochastic differential equations (SDE). We refer the reader to \cite{Oksendal13} for a comprehensive introduction.

Let the state space and $obs$ be the same as above. Let $Q_{t}\left(x,\cdot\right)$
be the distribution of the solution to the SDE
\[
dX_{t}=X_{t}dB^0_{t}+\frac{1}{2}X_{t}dt\text{ with }X_{0}=x.
\]
It can be verified that the solution to this equation is the process $X_{t}=xe^{B^0_{t}}$. In this case, we
also have $Q_{t}\left(0\right)=\mathfrak{d}_{0}$. Again,
for every $x\in\left[0,1\right]$,
\begin{align*}
\delta_{1}\left(0,x\right) & =\sup_{t\geq0}c^{t}W\left(\delta_{0}\right)\left(Q_{t}\left(0\right),Q_{t}\left(x\right)\right) =\sup_{t\geq0}c^{t}\mathbb{E}[obs(xe^{B^0_{t}})]\\
 & =\sup_{t\geq0}c^{t}\left(x\mathbb{E}[e^{B^0_{t}} ~|~ t\leq\tau']+\mathbb{P}\left(t>\tau'\right)\right),
\end{align*}
where $\tau':=\inf\left\{ s\geq0 ~|~ B^0_{s}\geq-\ln x\right\} $ and $\mathbb{E}$ and $\mathbb{P}$ denote expected values and probabilities for the standard Brownian motion starting in $0$. The distribution
of $\tau'$, as well as the joint distribution of $\left(B^0_{t},\tau'\right)$,
has been determined explicitly (see Chapter 2, Section 8 in \cite{Karatzas12} for instance). Even if we only consider the second
term in the expression above, we have
\[
\delta_{1}\left(0,x\right)\geq\sup_{t\geq0}c^{t}\mathbb{P}\left(t>\tau'\right)=\sup_{t\geq0}c^{t}\frac{2}{\sqrt{2\pi}}\int_{\frac{-\ln x}{\sqrt{t}}}^{\infty}e^{-\frac{y^{2}}{2}}dy.
\]
It is possible for the right hand side above
to be greater than $x$. For example, if $x=\frac{1}{e}$, then by
choosing $t=4$, we have that the right hand side above is no smaller
than $c\frac{2}{\sqrt{2\pi}}\int_{\frac{1}{2}}^{\infty}e^{-\frac{y^{2}}{2}}dy$,
which will be (strictly) greater than $\frac{1}{e}$ (i.e., $x$)
provided that $c$ is sufficiently close to 1. As a consequence, for
this example, we have $\bar{\delta}\left(0,x\right)>x$. 

These two examples demonstrate different behaviors of the two systems,
while the first system ``maintains'' the mass at the starting point
(expectation is constant $x$), the second system dissipates the mass
to the right (which is the direction of larger values of $obs$).
Therefore, processes starting from the same point $x$ possess different
distances to the static path between these two systems. 

A very important observation from these examples based on Brownian motion is that even though we cannot explicitly compute values, we are still able to compare state behaviours.

\section{Conclusion}
\label{sec:conclusion}

In our previous work \cite{Chen19a,Chen20}, we showed that we needed to use trajectories in order to define a meaningful notion of behavioural equivalence.  However working with trajectories is extremely complex as notions do not translate easily from states to trajectories; for instance, we said that a measurable set $B$ of trajectories is time-$obs$-closed if for every two trajectories $\omega, \omega'$ such that for every time $t$, $obs(\omega(t)) = obs(\omega'(t))$, then $\omega \in B$ if, and only if $\omega' \in B$. The $\sigma$-algebra of all time-$obs$-closed sets cannot be simply described. 

To explain why this present work does not deal with trajectories, we need to first discuss the example that lead us to trajectories in \cite{Chen19a}: consider Brownian motion on the real line equipped with the function $obs = \mathds{1}_{\{ 0\}}$. There are four $obs$-closed sets: $\emptyset, \{ 0\}, \mathbb{R} \setminus \{ 0\}, \mathbb{R}$ and for any $x \neq 0$ and any time $t$, $P_t(x, \{ 0\}) = 0$. This meant that one could not distinguish between the states $1$ and $1000$ in this specific case. Using trajectories enabled us to consider intervals of time. In this current work, we have decided to instead ``smooth'' the function $obs$ so as to prevent singling points out without needing to deal with trajectories.

Let us go back to our examples.  As shown in those examples, computing $\overline{\delta}$ is quite an involved process. It would have been interesting to adapt our example in Section \ref{sec:toy-ex} to the real-line with $obs(x) = e^{- x^2}$ and consider other processes such as Brownian motion which stops once it hits $0$ or the Ornstein–Uhlenbeck process (a variation of Brownian motion which is ``attracted'' to 0). Having to deal with transport theory, a supremum (over time) and the inductive definition of $\overline{\delta}$ makes it virtually impossible to compute any harder example.  As we have seen in the second example (Section \ref{sec:BM-ex}), it may still be possible to compare the behaviours of two states by stating for instance that ``the behaviour of the process starting from $x$ is closer to that starting from $y$ than from $z$''. The problem of transport theory and the Kantorovich metric also exists in discrete time and there are interesting ways to deal with it, for instance through the MICo distance \cite{Castro21}. 

One of the advantages of optimal transport is that the Kantorovich duality gives us a way of computing bounds. As one notes however, there is again some difficulty with having a supremum over time in the definition of our functional $\cF_c$. In particular, we can only provide lower bounds for $\overline{\delta}$.

One avenue of work on this could be to study replacements for this supremum, such as integrals over time for instance. Note that the real-valued logic would also need to be adapted even though it seems to generalize discrete-time really well.

This work is, as far as we know, the first behavioural metric adapted to the continuous-time case. Clearly much remains to be done, particularly the exploration of examples and connexions to broader classes of processes, such as for example those defined by stochastic differential equations.

\paragraph*{Acknowledgments.}
Linan Chen and Prakash Panangaden were supported by NSERC discovery grants

Florence Clerc was supported by the EPSRC-UKRI grant EP/Y000455/1, by a CREATE grant for the project INTER-MATH-AI, by NSERC and by IVADO through the DEEL Project CRDPJ 537462-18.

\paragraph*{Disclosure of Interests.}
The authors have no competing interests to declare.

\bibliographystyle{abbrv}
\bibliography{../../MFPS2019/main}

\appendix

\section{Bisimulation metric for discrete-time Labelled Markov Processes}
\label{app:DT-metrics}

Let us recall the existing work on discrete-time Labelled Markov Processes adapted to our framework. For the detailed version,  we refer the readers to the works by Desharnais et al. \cite{Desharnais02a,Desharnais04,Panangaden09} and van Breugel et al. \cite{vanBreugel05a}.

A labelled Markov Process is a family of Markov sub-kernels indexed by a set of what are called actions. Our work can easily be adapted to this framework by adding indices. For the sake of readability, we will not consider actions in this work and that is why we will only present the discrete-time work for a single Markov kernel.

Consider a Markov kernel $\tau$ on a Polish space $(E, \cE)$. Note that $\tau$ is a Markov kernel and in particular for every $x \in E$, $\tau(x, E) = 1$.

Define the functional $F$ on 1-bounded pseudometrics on $E$ by
\[ F(m) (x, y) = cW(m) (\tau(x), \tau(y)),\]
where $0< c< 1$ is a discount factor.
This functional has a fixpoint which is a bisimulation metric and is denoted $d_\cC$ in \cite{vanBreugel05a}. It is also characterized by the logic defined inductively:
\[ \phi := 1 ~|~ \diamond \phi ~|~ \min ( \phi, \phi) ~|~ 1 - \phi ~|~ \phi \ominus q, \]
which is interpreted as $(\diamond \phi)(x) = c \int \phi ~\dee \tau(x)$ etc.  Indeed, this logic defines a pseudometric on $E$ by $d_\cL (x, y) = \sup_{\phi} | \phi (x) - \phi(y)|$ and the two pseudometrics coincide: $d_\cL = c d_\cC$.

\section{Proofs for Section \ref{sec:background}}
\label{app:proof-sec-background}

  \subsection{Lower semi-continuity}
  \label{app:semi-cont}

\begin{restatable}{lemma}{lemmasuplowersemicontinuous}
\label{lemma:sup-lower-semicontinuous}
Assume there is an arbitrary family of continuous functions $f_i : X \to \mathbb{R}$ ($i \in \cI$) and define $f(x) = \sup_{i \in \cI} f_i(x)$ for every $x \in X$.  Then $f$ is lower semi-continuous.
\end{restatable}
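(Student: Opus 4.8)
The statement to prove is Lemma~\ref{lemma:sup-lower-semicontinuous}: the pointwise supremum of an arbitrary family of continuous functions $f_i: X \to \mathbb{R}$ is lower semi-continuous.

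\medskip

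\noindent\textbf{Proof plan.} The plan is to use the characterization of lower semi-continuity via open sublevel complements, i.e.\ that $f$ is lower semi-continuous if and only if for every $y \in \mathbb{R}$ the set $f^{-1}((y,+\infty)) = \{x \mid f(x) > y\}$ is open in $X$. This is exactly the equivalent condition stated in the Definition of lower semi-continuity in the excerpt, so I may invoke it directly. First I would fix an arbitrary $y \in \mathbb{R}$ and analyze the set $\{x \mid f(x) > y\}$. The key observation is the set identity
\[
\{x \in X \mid f(x) > y\} = \bigcup_{i \in \cI} \{x \in X \mid f_i(x) > y\}.
\]
For the inclusion ``$\subseteq$'': if $f(x) = \sup_{i} f_i(x) > y$, then by the definition of supremum there exists some $i \in \cI$ with $f_i(x) > y$, so $x$ lies in the right-hand union. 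For ``$\supseteq$'': if $f_i(x) > y$ for some $i$, then $f(x) = \sup_j f_j(x) \geq f_i(x) > y$, so $x$ lies in the left-hand set. Then, since each $f_i$ is continuous, each set $\{x \mid f_i(x) > y\} = f_i^{-1}((y,+\infty))$ is open (preimage of the open set $(y,+\infty)$ under a continuous map). An arbitrary union of open sets is open, hence $\{x \mid f(x) > y\}$ is open. Since $y$ was arbitrary, $f$ is lower semi-continuous. (I should note that $f$ is implicitly assumed to take values in $\mathbb{R}$, i.e.\ the supremum is finite at each point; the argument goes through regardless, but the statement as phrased presupposes this.)

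\medskip

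\noindent\textbf{Main obstacle.} There is essentially no obstacle here — this is a routine argument. The only point requiring the slightest care is the direction of the supremum: one must use that $f$ is a \emph{supremum} (not an infimum) of continuous functions, so that ``$f(x) > y$'' propagates down to ``some $f_i(x) > y$'' and back up again. Had the family been an infimum, one would instead get upper semi-continuity, with sublevel sets $\{f < y\}$ open. The restriction to continuous $f_i$ is more than is needed: the same proof works verbatim if each $f_i$ is merely lower semi-continuous (since then each $\{f_i > y\}$ is still open by definition), but the stated version with continuous $f_i$ suffices for the application in Section~\ref{sec:def-pseudometrics-F}.
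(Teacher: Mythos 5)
Your proof is correct and follows essentially the same route as the paper's: both fix an arbitrary threshold, identify $f^{-1}((y,+\infty))$ with the union $\bigcup_{i\in\cI} f_i^{-1}((y,+\infty))$ of open sets, and conclude by openness of arbitrary unions. Your additional remarks (the supremum vs.\ infimum direction, and that lower semi-continuity of the $f_i$ would suffice) are accurate but not needed.
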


\begin{proof}
Pick $r \in \mathbb{R}$.  Then
\begin{align*}
f^{-1}((r, + \infty))
& = \{ x ~|~ \sup_{i \in \cI} f_i(x) > r\}\\
& = \{ x ~|~ \exists i \in \cI ~ f_i(x) > r\} = \bigcup_{i \in \cI} f_i^{-1} ((r, + \infty)).
\end{align*}
Since $f_i$ is continuous, the set $f_i^{-1}((r + \infty))$ is open and therefore $f^{-1}((r, + \infty))$ is open which concludes the proof. \hfill $\square$
\end{proof}

The converse is also true, as Baire's theorem states:
\begin{theorem}
\label{thm:baire}
If $X$ is a metric space and if a function $f: X \to \mathbb{R}$ is lower semi-continuous,  then $f$ is the limit of an increasing sequence of real-valued continuous functions on $X$.
\end{theorem}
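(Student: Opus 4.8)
The plan is to prove this Baire-type approximation theorem by the classical \emph{infimal convolution} (Moreau–Yosida) construction, preceded by a reduction to the case of a bounded function. Fix a metric $d$ generating the topology of $X$. The naive candidate $f_n(x) = \inf_{y \in X}\bigl(f(y) + n\, d(x,y)\bigr)$ does not work directly when $f$ is unbounded below (already $f(x) = -x^2$ on $\mathbb{R}$ makes every $f_n$ identically $-\infty$), so I would first fix an increasing homeomorphism $\psi : \mathbb{R} \to (0,1)$ (e.g.\ the logistic function) and set $g := \psi \circ f$. Since $\psi$ is continuous and nondecreasing, $g$ is again lower semi-continuous, and now $0 < g < 1$. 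It then suffices to produce an increasing sequence of continuous functions $g_n : X \to (0,1)$ converging pointwise to $g$: the functions $f_n := \psi^{-1} \circ g_n$ will be real-valued, continuous, increasing in $n$ (as $\psi^{-1}$ is increasing), and will converge pointwise to $\psi^{-1}\circ g = f$.

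For the bounded case, I would set $g_n(x) = \inf_{y \in X}\bigl(g(y) + n\, d(x,y)\bigr)$ and verify five properties. First, $g_n$ is $n$-Lipschitz — from $g(y) + n\,d(x,y) \le \bigl(g(y) + n\,d(x',y)\bigr) + n\,d(x,x')$ and taking infima over $y$ — hence continuous. Second, $g_n \le g_{n+1}$, since $n\,d(x,y)$ increases with $n$. Third, $g_n \le g$ (take $y = x$) and $g_n \ge 0$ (as $g \ge 0$), so in particular $g_n < 1$. Fourth, $g_n(x) > 0$ whenever $n \ge 1$: if the infimum were $0$, a minimizing sequence $(y_k)$ would force $g(y_k) \to 0$ and $d(x,y_k) \to 0$, hence $y_k \to x$, and lower semi-continuity would give $\liminf_k g(y_k) \ge g(x) > 0$, a contradiction; so $g_n$ takes values in $(0,1)$, which is what lets us apply $\psi^{-1}$. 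Fifth, $g_n(x) \to g(x)$: given $\varepsilon > 0$, lower semi-continuity provides $\delta > 0$ with $g(y) > g(x) - \varepsilon$ whenever $d(x,y) < \delta$, while for $d(x,y) \ge \delta$ one has $g(y) + n\,d(x,y) \ge n\delta$, which exceeds $g(x)$ once $n\delta \ge 1$; combining the two regimes yields $g_n(x) \ge g(x) - \varepsilon$ for all $n \ge 1/\delta$, so $\liminf_n g_n(x) \ge g(x)$, and together with the third property this gives $g_n(x) \to g(x)$.

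Combining the two parts, $f_n = \psi^{-1}\circ g_n$ is the desired increasing sequence of real-valued continuous functions with pointwise limit $f$. The main obstacle is concentrated in the fourth and fifth verifications above, both of which genuinely use lower semi-continuity: the strict positivity (needed so that $\psi^{-1}$ is defined on the range of $g_n$) and the monotone convergence $g_n \uparrow g$. The reduction in the first paragraph is easy to overlook but essential, since the regularization is not finite for functions unbounded below; if one only needs the statement for bounded (e.g.\ $[0,1]$-valued) functions, as in every application in this paper, that step can be skipped and the infimal convolution applied directly to $f$.
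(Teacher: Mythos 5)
Your proof is correct. Note that the paper itself offers no proof of this statement: it is quoted as Baire's classical theorem immediately after Lemma \ref{lemma:sup-lower-semicontinuous}, so there is no in-paper argument to compare against. Your infimal-convolution (Moreau--Yosida) construction is the standard proof, and every step checks out: the $n$-Lipschitz bound, monotonicity in $n$, $g_n \le g$, and the two places where lower semi-continuity is genuinely used (strict positivity of $g_n$, so that $\psi^{-1}$ applies, and the lower estimate $\liminf_n g_n(x) \ge g(x)$). The preliminary conjugation by a homeomorphism $\psi:\mathbb{R}\to(0,1)$ correctly handles functions unbounded below, and you rightly observe that it is unnecessary for the paper's only use of the theorem (approximating the $[0,1]$-valued indicator $\mathds{1}_U$ in Proposition \ref{prop:two}, where one can convolve $f$ directly and, as the paper does, truncate with $\max\{0,\cdot\}$ if nonnegativity is wanted).
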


    \subsection{Couplings}

In order to prove Lemma \ref{lemma:couplings-compact}, we will need to prove some more results first.

\begin{proposition}
\label{prop:two}
If $P$ and $Q$ are two measures on a Polish space $X$ such that for every continuous and bounded function
$f: X \to \mathbb{R}$, we have \(\int f ~\dee P = \int f~\dee Q\), then \(P = Q\).  
\end{proposition}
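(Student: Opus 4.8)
The plan is to reduce the equality of measures to the equality of their values on closed sets, exploiting the fact that on a metric space the indicator function of a closed set is a decreasing pointwise limit of continuous bounded functions. First I would observe that taking $f$ to be the constant function $1$ yields $P(X) = Q(X)$, so both measures are finite with the same total mass (this is the natural reading in context, where $P, Q$ arise as probability or sub-probability measures).

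Next, fix a closed set $C \subseteq X$ and, writing $d$ for a metric generating the topology of $X$ and $d(x, C) = \inf_{z \in C} d(x, z)$, define $f_n(x) = \max\{0,\, 1 - n\, d(x, C)\}$. Since $x \mapsto d(x, C)$ is $1$-Lipschitz, each $f_n$ is continuous (indeed $n$-Lipschitz) and takes values in $[0,1]$, so the hypothesis gives $\int f_n \, \dee P = \int f_n \, \dee Q$ for every $n$. Because $C$ is closed, $d(x, C) = 0$ if and only if $x \in C$, whence $f_n \downarrow \mathds{1}_C$ pointwise as $n \to \infty$. Applying the dominated convergence theorem on both sides, with the constant function $1$ (integrable since the measures are finite) as dominating function, gives $P(C) = \lim_n \int f_n \, \dee P = \lim_n \int f_n \, \dee Q = Q(C)$.

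Finally, the family of closed subsets of $X$ is stable under finite intersections, i.e. it is a $\pi$-system, and it generates the Borel $\sigma$-algebra $\mathcal{B}(X)$. Since $P$ and $Q$ are finite measures that agree on this $\pi$-system and satisfy $P(X) = Q(X)$, the uniqueness theorem for measures (Dynkin's $\pi$--$\lambda$ theorem) yields $P = Q$ on all of $\mathcal{B}(X)$.

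The argument is essentially routine; the only points requiring care are checking that the $f_n$ are genuinely continuous and bounded (which is exactly the $1$-Lipschitz continuity of $x \mapsto d(x, C)$) and that the measures are finite, so that both dominated convergence and the $\pi$--$\lambda$ uniqueness theorem apply. The metrizability of $X$ — guaranteed here since $X$ is Polish — is what makes the explicit approximating sequence available.
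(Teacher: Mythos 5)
Your proof is correct and follows essentially the same strategy as the paper's: approximate the indicator function of each set in a generating class by continuous bounded functions, pass to the limit using the hypothesis, and then invoke a uniqueness theorem for measures. The only (cosmetic) differences are that the paper uses open sets, an increasing approximation supplied by Baire's theorem, monotone convergence and Carath\'eodory's extension theorem, while you use closed sets, the explicit Lipschitz approximants $\max\{0,\,1-n\,d(x,C)\}$, dominated convergence and the $\pi$--$\lambda$ theorem --- and your extra step checking finiteness of the measures (needed for dominated convergence) is a sensible precaution that the paper's monotone-convergence route avoids.
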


\begin{proof}{}
For any open set $U$, its indicator function $\mathds{1}_U$ is lower semicontinuous, which means that there exists an increasing sequence of continuous functions $f_n$ converging pointwise to $\mathds{1}_U$ (see Theorem \ref{thm:baire}).  Without loss of generality, we can assume that the functions $f_n$ are non negative (consider the sequence $\max \{ 0, f_n\}$ instead if they are not non-negative).  Using the monotone convergence theorem, we know that
\[ \lim_{n \to \infty} \int f_n ~ \dee P = \int \mathds{1}_U ~\dee P = P(U) \]
and similarly for $Q$.  By our hypothesis on $P$ and $Q$, we obtain that for every open set $U$, $P(U) = Q(U)$.

Since $P$ and $Q$ agree on the topology, they agree on the Borel algebra by Caratheodory's extension theorem. \hfill $\square$
\end{proof}

\begin{lemma}
\label{lemma:coupling-closed}
Given two probability measures $P$ and $Q$ on a Polish space $X$, the set of couplings $\Gamma(P, Q)$ is closed in the weak topology.
\end{lemma}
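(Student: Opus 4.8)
The goal is to show that $\Gamma(P,Q)$, the set of couplings of two probability measures on a Polish space $X$, is closed in the topology of weak convergence. The plan is to exhibit $\Gamma(P,Q)$ as an intersection of sets each of which is closed, and then argue closedness of each such set via the portmanteau theorem.

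\medskip

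\textbf{Plan.} First I would fix the ambient space: couplings live in $\mathcal{P}(X\times X)$ (probability measures on the product, with the product Borel $\sigma$-algebra, which coincides with the Borel $\sigma$-algebra of the product topology since $X$ is Polish, hence second countable), equipped with the weak topology. A probability measure $\gamma$ on $X\times X$ belongs to $\Gamma(P,Q)$ precisely when it satisfies the two families of marginal constraints: $\gamma(B_X\times X)=P(B_X)$ for all Borel $B_X$ and $\gamma(X\times B_Y)=Q(B_Y)$ for all Borel $B_Y$. The key reduction is that it suffices to test these constraints against continuous bounded functions rather than all Borel sets: by Proposition~\ref{prop:two}, $\gamma$ has first marginal $P$ iff $\int f(x)\,\gamma(\dee x,\dee y)=\int f\,\dee P$ for every continuous bounded $f:X\to\mathbb{R}$, and symmetrically for the second marginal. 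So
\[
\Gamma(P,Q)=\bigcap_{f}\Big\{\gamma : \textstyle\int f\circ\pi_1\,\dee\gamma=\int f\,\dee P\Big\}\ \cap\ \bigcap_{g}\Big\{\gamma : \textstyle\int g\circ\pi_2\,\dee\gamma=\int g\,\dee Q\Big\},
\]
the intersections ranging over continuous bounded $f,g$, where $\pi_1,\pi_2$ are the coordinate projections.

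\medskip

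\textbf{Main step.} Now I would show each set in the intersection is closed. Fix a continuous bounded $f:X\to\mathbb{R}$; then $f\circ\pi_1:X\times X\to\mathbb{R}$ is continuous and bounded, so by the very definition of the weak topology the map $\gamma\mapsto\int f\circ\pi_1\,\dee\gamma$ is continuous on $\mathcal{P}(X\times X)$. Hence the preimage of the single point $\{\int f\,\dee P\}$ under this continuous map is closed. The same argument applies to each $g$ for the second marginal. An arbitrary intersection of closed sets is closed, so $\Gamma(P,Q)$ is closed. I expect the only subtlety — and the one place where I would be careful — is the justification that the marginal conditions can be reformulated in terms of continuous bounded test functions; this is exactly where Proposition~\ref{prop:two} is used, and it relies on $X$ being Polish (so that indicator functions of open sets are increasing pointwise limits of continuous functions and monotone convergence applies). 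Once that reformulation is in hand the closedness is immediate from the definition of weak convergence, so there is no real "hard part" beyond setting up the right characterization; the work is in invoking the earlier proposition correctly rather than in any estimate.
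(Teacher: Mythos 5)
Your proof is correct and rests on the same two ingredients as the paper's: the continuity of $\gamma \mapsto \int f\circ\pi\,\dee\gamma$ for continuous bounded $f$ (which is how marginal information is extracted in the weak topology) and Proposition~\ref{prop:two} to upgrade agreement on continuous bounded test functions to equality of measures. The only difference is packaging: the paper verifies sequential closedness by taking a weakly convergent sequence of couplings and showing the limit's marginals are $P$ and $Q$, whereas you exhibit $\Gamma(P,Q)$ directly as an intersection of preimages of points under continuous evaluation maps, which yields closedness without any appeal to metrizability of the weak topology on probability measures over $X\times X$.
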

\begin{proof}
Consider a sequence $(\gamma_n)_{n \in \mathbb{N}}$ of couplings of $P$ and $Q$ weakly converging to a measure $\mu$ on $X \times X$, meaning that for every bounded continuous functions $f : X \times X \to \mathbb{R}$, $\lim_{n \to \infty} \int f ~ \dee \gamma_n = \int f~ \dee \mu$.  Note that $\mu$ is indeed a probability measure.  We only have to prove that the marginals of $\mu$ are $P$ and $Q$.  We will only do it for the first one, i.e.  showing that for every measurable set $A$, $\mu(A \times X) = P(A)$ since the case for $Q$ works the same.

Let $\pi:X\times X\to X$ be the first projection map.  The first marginal of $\gamma_n$ and $\mu$ are obtained as the pushforward measures $\pi_* \gamma_n$ and $\pi_*\mu$.  Indeed, $\pi_* \mu$ is defined for all measurable set $A \subset X$ as $\pi_* \mu (A) = \mu (\pi^{-1} (A)) = \mu(A \times X)$ (and similarly for the $\gamma_n$'s).  This means that for every $n$,  $\pi_* \gamma_n = P$.

For an arbitrary continuous bounded function $f: X \to \mathbb{R}$,  define the function $g: X \times X \to \mathbb{R}$ as $g = f \circ \pi$, i.e.  $g(x, y) = f(x)$.  The function $g$ is also continuous and bounded, so by weak convergence of the sequence $(\gamma_n)_{n \in \mathbb{N}}$ to $\mu$, we get that
\[ \lim_{n \to \infty} \int f \circ \pi ~ \dee  \gamma_n = \int f \circ \pi ~ \dee \mu. \]
Using the change of variables formula, we obtain that for any continuous bounded function $f$
\[ \int f ~ \dee P =  \lim_{n \to \infty} \int f ~ \dee  (\pi_*\gamma_n) = \int f ~ \dee  (\pi_*\mu). \]
Using  Proposition \ref{prop:two}, we get that $\pi_*\mu = P$. \hfill $\square$
\end{proof}

\begin{lemma}
\label{lemma:couplings-tight}
Consider two probability measures $P$ and $Q$ on Polish spaces $X$ and $Y$ respectively.   Then the set of couplings $\Gamma (P, Q)$ is tight: for all $\epsilon > 0$, there
exists a compact set $K$ in $X \times Y$ such that for all coupling $\gamma \in \Gamma(P, Q)$, $\gamma(K) > 1- \epsilon$
\end{lemma}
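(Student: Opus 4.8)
The plan is to reduce tightness of the whole family $\Gamma(P,Q)$ to tightness of the two fixed marginals $P$ and $Q$, which is automatic because $X$ and $Y$ are Polish (every finite Borel measure on a Polish space is inner regular by compact sets, i.e.\ tight). First I would fix $\epsilon > 0$ and invoke tightness of $P$ on $X$ to obtain a compact set $K_X \subseteq X$ with $P(X \setminus K_X) < \epsilon/2$, and similarly a compact $K_Y \subseteq Y$ with $Q(Y \setminus K_Y) < \epsilon/2$.

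Next I would set $K = K_X \times K_Y$, which is compact in $X \times Y$ as a finite product of compact spaces. The key elementary observation is the set-theoretic inclusion
\[ (X \times Y) \setminus (K_X \times K_Y) \subseteq \big((X \setminus K_X) \times Y\big) \cup \big(X \times (Y \setminus K_Y)\big). \]
Then for an arbitrary coupling $\gamma \in \Gamma(P,Q)$, using subadditivity of $\gamma$ together with the defining marginal conditions $\gamma\big((X \setminus K_X) \times Y\big) = P(X \setminus K_X)$ and $\gamma\big(X \times (Y \setminus K_Y)\big) = Q(Y \setminus K_Y)$, I get
\[ \gamma\big((X \times Y) \setminus K\big) \leq P(X \setminus K_X) + Q(Y \setminus K_Y) < \epsilon, \]
hence $\gamma(K) > 1 - \epsilon$. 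Since $K$ was chosen independently of $\gamma$, this is exactly the uniform tightness statement required.

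There is no real obstacle here: the only nontrivial input is Ulam's theorem (tightness of a single Borel probability measure on a Polish space), and the argument is otherwise just the union bound applied uniformly over all couplings. This lemma, together with Lemma \ref{lemma:coupling-closed} (closedness of $\Gamma(P,Q)$) and Prokhorov's theorem, is what will subsequently yield compactness of $\Gamma(P,Q)$ in the weak topology (Lemma \ref{lemma:couplings-compact}).
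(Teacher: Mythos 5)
Your proposal is correct and follows essentially the same route as the paper's own proof: Ulam's theorem for tightness of the marginals, the product set $K_X \times K_Y$, the decomposition of its complement into two strips, and the union bound combined with the marginal conditions. No gaps.
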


\begin{proof}
First note that by Ulam's tightness theorem, the probability measures $P$ and $Q$ are tight.

Consider $\epsilon > 0$.  Since $P$ and $Q$ are tight, there
exist two compact sets $K$ and $K'$ such that $P(X \setminus K) \leq \epsilon/2$ and $Q(Y \setminus K') \leq \epsilon/2$.  Define the set $C = K \times K'$.  This set is compact as a product of two compact sets.  Furthermore, for every $\gamma \in \Gamma(P,Q)$,
\begin{align*}
 \gamma((X \times Y) \setminus C) 
& = \gamma \left( [(X \setminus K) \times Y] \cup [X \times ( Y \setminus K')] \right)\\
& \leq \gamma[(X \setminus K) \times Y] + \gamma [X \times ( Y \setminus K')]\\
& = P(X \setminus K) + Q(Y \setminus K') \leq \frac{\epsilon}{2} + \frac{\epsilon}{2} = \epsilon.
\end{align*}
This shows that the set $\Gamma(P, Q)$ is tight.\hfill $\square$
\end{proof}

We can now prove Lemma \ref{lemma:couplings-compact}. Let us start by restating it.

\lemmacouplingscompact*

\begin{proof}
Using Lemma \ref{lemma:couplings-tight}, we know that the set $\Gamma(P, Q)$ is tight.
Applying Prokhorov's theorem (see Theorem \ref{thm:prokhorov}),  we get that the set $\Gamma(P, Q)$ is precompact.

Since the set $\Gamma(P, Q)$ is also closed (see Lemma \ref{lemma:coupling-closed}), then it is compact.\hfill $\square$
\end{proof}

We have used Prokhorov's theorem in the previous proof.  Here is the version cited in \cite{Villani08}
\begin{theorem}
\label{thm:prokhorov}
Given a Polish space $\cX$, a subset $\cP$ of the set of probabilities on $\cX$ is precompact for the weak topology if and only if it is tight.
\end{theorem}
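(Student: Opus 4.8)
The statement is Prokhorov's theorem, and the plan is to prove the two implications separately, using only ingredients already available: that $\cX$ is Polish, the Portmanteau characterisation of weak convergence, and the fact that the weak topology on the space $\cM_1(\cX)$ of Borel probability measures on $\cX$ is metrizable (e.g.\ by the Prokhorov metric), so that ``$\cP$ is precompact'' is equivalent to ``every sequence in $\cP$ has a subsequence converging weakly to some element of $\cM_1(\cX)$''.

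\emph{Tight $\Rightarrow$ precompact.} First I would embed $\cX$ homeomorphically into the Hilbert cube $Q=[0,1]^{\mathbb N}$, a compact metric space in which $\cX$ sits as a Borel subset. On $Q$ the set of Borel probability measures is weakly compact: $C(Q)$ is a separable Banach space, so by the Riesz representation theorem together with Banach--Alaoglu the unit ball of its dual --- which contains the probability measures as a weak-$*$ closed subset --- is weak-$*$ sequentially compact. Given a sequence $(\mu_n)$ in $\cP$, I push it forward to $Q$ and extract a subsequence with $\mu_{n_k}\to\nu$ weakly on $Q$. Now tightness enters: pick compact $K_m\subseteq\cX$ with $\mu_n(K_m)\ge 1-\tfrac1m$ for all $n$; each $K_m$ is compact, hence closed, in $Q$, so the Portmanteau inequality gives $\nu(K_m)\ge\limsup_k\mu_{n_k}(K_m)\ge 1-\tfrac1m$. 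Hence $\nu$ is concentrated on $\bigcup_m K_m\subseteq\cX$, and its restriction $\mu$ to $\cX$ is a Borel probability measure on $\cX$. To finish, I verify that $\mu_{n_k}\to\mu$ weakly \emph{on $\cX$}: any open $G\subseteq\cX$ equals $U\cap\cX$ for some open $U\subseteq Q$, and Portmanteau on $Q$ yields $\liminf_k\mu_{n_k}(G)=\liminf_k\mu_{n_k}(U)\ge\nu(U)\ge\nu(U\cap\cX)=\mu(G)$, which is precisely weak convergence on $\cX$.

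\emph{Precompact $\Rightarrow$ tight.} Fix $\epsilon>0$. Since $\cX$ is separable, for each $n$ it is covered by countably many open balls of radius $1/n$; I claim that for each $n$ some finite subunion $U_n$ satisfies $\mu(U_n)>1-\epsilon 2^{-n}$ for every $\mu\in\cP$. Otherwise, for each finite union the bound fails for some $\mu\in\cP$, and a weakly convergent subsequence of such measures has, by Portmanteau applied to these open finite unions, a limit of total mass $\le 1-\epsilon 2^{-n}<1$, contradicting that it is a probability measure. Then $K:=\overline{\bigcap_n U_n}$ is closed and totally bounded, hence compact because $\cX$ is a complete metric space, and $\mu(\cX\setminus K)\le\sum_n\epsilon 2^{-n}=\epsilon$ for all $\mu\in\cP$, which is tightness.

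The main obstacle is the first implication, and specifically two delicate points in it: realising $\cX$ inside a compact metrizable space so that compactness of the space of measures comes for free, and then transporting the weak limit $\nu$ back to $\cX$. It is exactly tightness that prevents mass from escaping into the ``boundary'' $Q\setminus\cX$; and since a bounded continuous function on $\cX$ need not extend continuously to $Q$, the transfer of weak convergence must be carried out through the open-set form of the Portmanteau theorem rather than directly through integrals of test functions.
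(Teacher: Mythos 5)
Your proof is correct, but note that the paper does not actually prove this statement: Theorem~\ref{thm:prokhorov} is quoted as the version of Prokhorov's theorem cited in \cite{Villani08} and is used as a black box (in the proof of Lemma~\ref{lemma:couplings-compact}), so there is no in-paper argument to compare yours against. What you give is the classical self-contained proof (essentially Billingsley's): for tight $\Rightarrow$ precompact, embed $\cX$ in the Hilbert cube $Q$, extract a weak-$*$ limit $\nu$ via Riesz representation, separability of $C(Q)$ and Banach--Alaoglu, use tightness together with the closed-set Portmanteau inequality to force $\nu$ onto a $\sigma$-compact subset of $\cX$, and transfer weak convergence back to $\cX$ via the open-set Portmanteau inequality; for precompact $\Rightarrow$ tight, the finite-subcover-of-balls contradiction argument. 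Both directions are sound, and you correctly identify the genuinely delicate points (preventing mass from escaping to $Q\setminus\cX$, and the fact that bounded continuous functions on $\cX$ need not extend to $Q$, which is why the open-set form of Portmanteau is the right tool). Two minor points to make explicit in a full write-up: $\cX$ sits in $Q$ as a $G_\delta$, hence Borel, subset, which is what makes the restriction of $\nu$ a legitimate Borel probability measure on $\cX$; and in the converse direction, the balls and the total boundedness of $\bigcap_n U_n$ should be taken with respect to a \emph{complete} metric compatible with the topology (which exists since $\cX$ is Polish), so that the closed totally bounded set $K$ is indeed compact.
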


   \subsection{Optimal transport theory}
\label{app:transport}

\begin{remark}
\label{rem:dual-pseudometric}
The expression that we provided for the Kantorovich duality is not the exact expression in Theorem 5.10 of \cite{Villani08} but the one found in Particular Case 5.16.  The former expression also applies to our case.  Indeed, according to Theorem 5.10, the dual expression is
\[
\min_{\gamma \in \Gamma(\mu, \nu)} \int c ~ \dee \gamma
= \max_{\phi, \psi} \left( \int \phi ~ \dee  \mu - \int \psi ~ \dee \nu \right) \]
where $\phi$ and $\psi$ are such that  $\forall x, y~ |\phi(x) - \psi(y)| \leq c(x,y)$.
However, since $c(x,x) = 0$ for every $x \in \cX$, then for any pair of functions $\phi$ and $\psi$ considered, for all $x, \in X$,
$ |\phi(x) - \psi(x)| \leq c(x,x) = 0,$ 
which implies that $\phi = \psi$.
\end{remark}

\lemmaoptimaltransportpseudometric*

\begin{proof}
The first expression $W(c)(\mu, \nu) =  \min_{\gamma \in \Gamma(\mu, \nu)} \int c ~ \dee \gamma$ immediately gives us that $W(c)$ is 1-bounded (since we are working on the space of probability distributions on $\cX$) and that $W(c)$ is symmetric (by the change of variable formula with the function on $\cX \times \cX$, $s(x,y) = (y,x)$).

The second expression $W(c)(\mu, \nu) = \max_{h \in \cH(c)} \left| \int h ~ \dee \mu - \int h ~\dee \nu  \right|$ immediately gives us that $W(c)(\mu, \mu) = 0$ and for any three probability distributions $\mu, \nu, \theta$ and for any function $h \in \cH(c)$,
\begin{align*}
\left| \int h ~ \dee \mu - \int h ~\dee \theta \right|
& \leq \left| \int h ~ \dee \mu - \int h ~\dee \nu \right| + \left| \int h ~ \dee \nu - \int h ~\dee \theta \right|\\
& \leq W(c) (\mu, \nu) + W(c)(\nu, \theta).
\end{align*}
Since this holds for every $h \in \cH(c)$, we get the triangular inequality. \hfill $\square$
\end{proof}

\lemmawassersteinlimit*

\begin{proof}
For each $c_k$, the optimal transport cost $W(c_k)(P, Q)$ is attained for a coupling $\pi_k$.  Using Lemma \ref{lemma:couplings-compact}, we know that the space of couplings $\Gamma(P, Q)$ is compact. We can thus extract a subsequence that we will still denote by $(\pi_k)_{k \in \mathbb{N}}$ which converges weakly to some coupling $\pi \in \Gamma(P, Q)$. We will show that this coupling $\pi$ is in fact an optimal transference plan.

By Monotone Convergence Theorem,
\[ \int c ~ \dee \pi = \lim_{k \rightarrow \infty } \int c_k ~ \dee \pi. \]
Consider $\epsilon > 0$. There exists $k$ such that
\begin{align}
\label{eq:numberg}
\int c ~ \dee \pi & \leq \epsilon  \int c_k ~ \dee \pi
\end{align}

Since $\pi_k$ converges weakly to $\pi$ and since $c_k$ is continuous and bounded,
\[ \int c_k ~ \dee \pi = \lim_{n \rightarrow \infty} \int c_k ~ \dee \pi_n,\]
which implies that there exists $n_k \geq k$ such that
\begin{align}
\label{eq:numberh}
\int c_k ~ \dee \pi & \leq \epsilon +  \int c_k ~ \dee \pi_{n_k}.
\end{align}
Putting Equations (\ref{eq:numberg}) and (\ref{eq:numberh}) together, we get
\begin{align*}
\int c ~ \dee \pi
& \leq 2 \epsilon +  \int c_k ~ \dee \pi_{n_k}\\
&  \leq 2 \epsilon +  \int c_{n_k} ~ \dee \pi_{n_k} \quad \text{ since $(c_n)_{n \in \mathbb{N}}$ is increasing}\\
 & = 2 \epsilon + W(c_{n_k})(P, Q).
\end{align*}
This implies that
\[ \int c ~ \dee \pi \leq \lim_{k \rightarrow \infty} W(c_k)(P, Q). \]

The other inequality is trivial since $c_k \leq c$. \hfill $\square$
\end{proof}

\section{Details for Section \ref{sec:CTsystems}}
\label{app:CTsystems}

   \subsection{Feller-Dynkin processes}

Under the conditions of FD-semigroups, strong continuity is equivalent to the apparently weaker condition (see Lemma III.  6.7 in \cite{Rogers00a} for the proof):
\[ \forall f \in C_0(E)~~ \forall x\in E, ~~ \lim_{t\downarrow 0} (\hat{P}_t f) (x) = f(x)\]

The authors also offer the following useful extension (Theorem III.6.1):
\begin{theorem}
\label{thm:Riesz}
A bounded linear functional $\phi$ on $C_0(E)$ may be written uniquely in the form
\[ \phi(f) = \int_E f(x)~ \mu(\dee x) \]
where $\mu$ is a signed measure on $E$ of finite total variation.
\end{theorem}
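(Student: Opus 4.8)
The plan is to reduce the statement to the classical Riesz representation theorem on a \emph{compact} Hausdorff space by passing to the one-point compactification. Let $E^{*} = E \cup \{\infty\}$ be the one-point compactification of $E$; since $E$ is locally compact, Hausdorff and second countable, $E^{*}$ is a compact Hausdorff space. The map sending $f \in C_0(E)$ to its extension $\bar f \in C(E^{*})$ defined by $\bar f(\infty) = 0$ is an isometric embedding of $C_0(E)$ onto the closed hyperplane $H = \{g \in C(E^{*}) : g(\infty) = 0\}$ of the Banach space $C(E^{*})$. Given a bounded linear functional $\phi$ on $C_0(E)$, I would first transport it along this isometry to a bounded linear functional on $H$ of the same norm, then extend it by the Hahn--Banach theorem to a bounded linear functional $\tilde\phi$ on all of $C(E^{*})$, still of norm $\|\phi\|$.

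Next I would invoke the classical Riesz representation theorem for the compact space $E^{*}$: there is a finite signed Borel measure $\nu$ on $E^{*}$ with $\tilde\phi(g) = \int_{E^{*}} g \,\dee\nu$ for all $g \in C(E^{*})$ and $|\nu|(E^{*}) = \|\tilde\phi\| = \|\phi\|$. Defining $\mu$ to be the restriction of $\nu$ to the Borel $\sigma$-algebra $\cE = \cB(E)$ of $E$, one has $|\mu|(E) \leq |\nu|(E^{*}) < \infty$, so $\mu$ has finite total variation; and for every $f \in C_0(E)$, since $\bar f$ vanishes at $\infty$,
\[ \phi(f) = \tilde\phi(\bar f) = \int_{E^{*}} \bar f \,\dee\nu = \int_{E} f \,\dee\mu , \]
which is the desired representation.

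For uniqueness, suppose $\mu_1$ and $\mu_2$ are finite signed Borel measures on $E$ both representing $\phi$, so that $\int f \,\dee\mu_1 = \int f \,\dee\mu_2$ for every $f \in C_0(E)$, in particular for every bounded continuous $f$ of compact support. Writing the signed measure $\mu_1 - \mu_2$ via its Hahn--Jordan decomposition and using that finite Borel measures on the Polish space $E$ are outer regular, I would argue as in the proof of Proposition~\ref{prop:two} — approximating the indicator of an open set from below by an increasing sequence of nonnegative continuous functions and applying monotone convergence — to conclude that $\mu_1$ and $\mu_2$ agree on all open sets, hence on $\cE$ by Caratheodory's extension theorem.

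The main obstacle is that the genuine content is packed into the classical compact-space Riesz theorem invoked in the second step; a fully self-contained argument would have to carry out the Riesz--Markov construction itself — decomposing $\tilde\phi = \tilde\phi^{+} - \tilde\phi^{-}$ into positive functionals via the lattice formula $\tilde\phi^{+}(g) = \sup\{\tilde\phi(h) : 0 \leq h \leq g\}$, building the measure by $\nu(U) = \sup\{\tilde\phi^{+}(g) : g \in C(E^{*}),\ 0 \leq g \leq \mathds{1}_U\}$ on open sets, extending to an outer measure, and verifying countable additivity and the integral formula — which is exactly the technical heart. The one-point-compactification reduction is what buys us a clean route here, since it sidesteps the extra bookkeeping (distinguishing $C_0$, $C_b$ and $C_c$, and Radon versus Borel measures) that arises when working directly over a merely locally compact base space.
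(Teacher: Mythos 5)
Your proposal is correct, but there is nothing in the paper to compare it against: the paper does not prove this statement at all. It is quoted verbatim as Theorem~III.6.1 of \cite{Rogers00a} and used as a black box (the text immediately after the statement merely records that the representing measure is inner regular and that the result is also known as the Riesz--Markov--Kakutani theorem). What you have written is the standard textbook derivation of the $C_0(E)$ duality from the compact-space case: one-point compactification, isometric identification of $C_0(E)$ with the hyperplane $\{g\in C(E^{*}) : g(\infty)=0\}$, Hahn--Banach extension, the classical Riesz theorem on the compact metrizable space $E^{*}$, and restriction of the representing measure to $\cB(E)$. All of these steps are sound; in particular $E^{*}$ is compact, Hausdorff and second countable, hence metrizable, so every finite Borel measure on it (and on $E$) is automatically regular, which is what makes both the existence step and the uniqueness step go through. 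The one place where your sketch is slightly loose is uniqueness: Proposition~\ref{prop:two} of the paper is stated for positive measures and tests against all bounded continuous functions, whereas here you only know $\int f\,\dee\mu_1=\int f\,\dee\mu_2$ for $f\in C_0(E)$ and the measures are signed. The fix is exactly what you indicate --- pass to the Hahn--Jordan decomposition of $\mu_1-\mu_2$ and approximate $\mathds{1}_U$ from below by an increasing sequence in $C_c(E)\subseteq C_0(E)$, which exists because $E$ is locally compact, $\sigma$-compact and metrizable --- so this is a matter of spelling out details rather than a gap. Your closing caveat is also fair: the analytic content lives entirely in the compact-space Riesz theorem, and your argument, like the paper's citation, ultimately defers to it.
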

As stated in the Riesz representation theorem, the above measure $\mu$ is inner regular.   Theorem \ref{thm:Riesz} is also known as the Riesz-Markov-Kakutani representation theorem in some other references.

Theorem \ref{thm:Riesz} has the following corollary (Theorem III.6.2) where $b\mathcal{E}$ denotes the set of bounded, $\cE$-measurable functions $E \to \mathbb{R}$.
\begin{restatable}{corollary}{cormarkovfromsemigroup}
\label{cor:markov-from-semigroup}
Suppose that $V: C_0(E) \to b\mathcal{E}$ is a (bounded) linear operator that is sub-Markov in the sense that $0 \leq f \leq 1$ implies $0 \leq Vf \leq 1$.  Then there exists a unique sub-Markov kernel (also denoted by) $V$ on $(E, \mathcal{E})$ such that for all $f \in C_0(E)$ and $x \in E$
\[ Vf(x) = \int f(y) ~ V(x, \dee y). \]
\end{restatable}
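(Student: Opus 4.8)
The plan is to build the kernel ``one state at a time'' using the Riesz representation theorem, and then promote the measurability that comes for free on continuous functions to measurability on all Borel sets. First I would fix $x \in E$ and look at the map $\phi_x : C_0(E) \to \mathbb{R}$ defined by $\phi_x(f) = (Vf)(x)$. Since $V$ is linear and bounded, $\phi_x$ is a bounded linear functional on $C_0(E)$, so Theorem~\ref{thm:Riesz} provides a unique finite signed measure, which I will call $V(x, \cdot)$, with $\phi_x(f) = \int f(y)\, V(x, \dee y)$ for every $f \in C_0(E)$. The sub-Markov hypothesis forces $V(x, \cdot)$ to be an honest sub-probability measure: for $f \in C_0(E)$ with $f \ge 0$ one applies $0 \le V g \le 1$ to $g = f/\|f\|_\infty$ to get $(Vf)(x) \ge 0$, so $\phi_x$ is a positive functional and hence (by inner regularity, which is part of Theorem~\ref{thm:Riesz}) $V(x,\cdot) \ge 0$; and taking an increasing sequence $f_n \in C_0(E)$ with $0 \le f_n \uparrow \mathds{1}_E$ — available because $E$ is locally compact and $\sigma$-compact — monotone convergence gives $V(x, E) = \lim_n (Vf_n)(x) \le 1$.

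The substantive step is measurability in $x$: I must show $x \mapsto V(x, A)$ is $\mathcal{E}$-measurable for every $A \in \mathcal{E}$. For $f \in C_0(E)$ this map equals $x \mapsto (Vf)(x)$, which is measurable because $Vf \in b\mathcal{E}$. For an open set $U$ I would pick an increasing sequence $f_n \in C_0(E)$ with $f_n \uparrow \mathds{1}_U$ (possible since $\mathds{1}_U$ is lower semicontinuous, cf.\ Theorem~\ref{thm:baire}, and the local/$\sigma$-compactness of $E$ lets the $f_n$ be taken in $C_0(E)$), so that $V(x, U) = \lim_n (Vf_n)(x)$ is a pointwise limit of measurable functions. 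Then I would set $\mathcal{D} = \{ A \in \mathcal{E} : x \mapsto V(x,A) \text{ is measurable} \}$: since each $V(x,\cdot)$ is a finite measure, $\mathcal{D}$ is a Dynkin system, it contains the $\pi$-system of open sets, and the open sets generate $\mathcal{E}$, so Dynkin's $\pi$--$\lambda$ theorem gives $\mathcal{D} = \mathcal{E}$. Hence $V : E \times \mathcal{E} \to [0,1]$ is a sub-Markov kernel.

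Finally, the representation $Vf(x) = \int f(y)\, V(x, \dee y)$ on $C_0(E)$ holds by construction, and uniqueness of the kernel follows from the uniqueness clause of Theorem~\ref{thm:Riesz} applied state by state (equivalently, from the argument of Proposition~\ref{prop:two} applied to the finite measures $V(x,\cdot)$ and $V'(x,\cdot)$, since agreement on $C_0(E)$ forces agreement on all Borel sets). The only place where anything beyond a routine invocation of Riesz is needed is the measurability extension in the second paragraph: one must approximate indicators of open sets from below by an \emph{increasing} sequence taken from $C_0(E)$ — which genuinely uses that $E$ is locally compact and $\sigma$-compact, not merely Polish — and then run the $\pi$--$\lambda$ argument. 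I expect this to be the main (though not difficult) obstacle.
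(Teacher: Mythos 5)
Your proposal is correct, and it follows the paper's first step exactly: fix $x$, apply the Riesz representation theorem (Theorem~\ref{thm:Riesz}) to the functional $f \mapsto (Vf)(x)$, and verify that the resulting measure is a sub-probability. Where you diverge is in establishing measurability of $x \mapsto V(x,B)$. The paper fixes an arbitrary Borel set $B$, intersects with a $\sigma$-compact exhaustion to get $B_n$, and asserts that $\mathds{1}_{B_n}$ is a pointwise limit of a sequence in $C_0(E)$, whence $V(\cdot,B_n)$ is a limit of the measurable functions $Vf_j^n$. You instead prove measurability only for open sets $U$ (where the increasing approximation $f_n \uparrow \mathds{1}_U$ by $C_0$ functions genuinely exists, via lower semicontinuity and local/$\sigma$-compactness, and monotone convergence applies cleanly) and then extend to all of $\mathcal{E}$ by the $\pi$--$\lambda$ theorem, using that each $V(x,\cdot)$ is a finite measure so that the good sets form a Dynkin system. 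Your route is the more robust one: a pointwise limit of continuous functions is of Baire class one, and indicators of arbitrary Borel sets need not be of Baire class one, so the paper's direct approximation step is at best under-justified for general $B$, while your monotone-class argument sidesteps the issue entirely at the cost of one standard lemma. Your treatment of positivity (positive functional $\Rightarrow$ positive representing measure) and of $V(x,E)\le 1$ (monotone convergence along $f_n \uparrow \mathds{1}_E$) matches the paper's in substance, and the uniqueness claim via Proposition~\ref{prop:two} is fine.
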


While the proof is left as an exercise in \cite{Rogers00a}, we choose to explicitly write it down for clarity


\begin{proof}
For every $x$ in $E$, write $V_x$ for the functional $V_x(f) = Vf(x)$.  This functional is bounded and linear which enables us to use Theorem \ref{thm:Riesz}: there exists a signed measure $\mu_x$ on $E$ of finite total variation such that
\[ V_x (f) = \int_E f(y) ~\mu_x(\dee y) \]
We claim that $V: (x, B) \mapsto \mu_x(B)$ is a sub-Markov kernel, i.e.
\begin{itemize}
\item For all $x$ in $E$, $V(x, -)$ is a subprobability measure on $(E, \mathcal{E})$
\item For all $B$ in $\mathcal{E}$, $V( -, B)$ is $\mathcal{E}$-measurable.
\end{itemize}

The first condition directly follows from the definition of $V$: $V(x, -) = \mu_x$ which is a measure and furthermore $V(x, E) = \mu_x(E) = V_x (1)$ (where $1$ is the constant function over the whole space $E$ which value is $1$).  Using the hypothesis that $V$ is Markov, we get that $V 1 \leq 1$.  We have to be more careful in order to prove that $V(x, B) \geq 0$ for every measurable set $B$: this is a consequence of the regularity of the measure $\mu_x$ (see Proposition 11 of section 21.4 of \cite{Royden10}).  This shows that $\mu_x$ is a subprobability measure on $(E, \mathcal{E})$.

Now, we have to prove that for every $B \in \mathcal{E}$, $V(-, B)$ is measurable.  Recall that the set $E$ is $\sigma$-compact: there exists countably many compact sets $K_k$ such that $E = \bigcup_{k \in \mathbb{N}} K_k$.  For $n \in \mathbb{N}$, define $E_n =  \bigcup_{k = 0}^n K_k$ and $B_n = B \cap E_n$.

 For every $n \in \mathbb{N}$, there exists a sequence of  functions $(f_j^n)_{j \in \mathbb{N}} \subset C_0(E)$  that converges pointwise to $\mathds{1}_{B_n}$, i.e.  for every $x \in E$, $\mu_x(B_n) = \lim_{j \to +\infty} Vf_j^n(x)$.  Since the operator $V: C_0(E) \to b\cE$, the maps $Vf_j^n$ are measurable which means that $V(-, B_n): x \mapsto \mu_x(B_n)$ is measurable.

Since for every $x \in E$, $\mu_x(B) = \lim_{n \to +\infty} \mu_x(B_n)$, this further means that $V(-, B): x \mapsto \mu_x(B)$ is measurable. \hfill $\square$
\end{proof}

We can then use this corollary to derive Proposition \ref{prop:Riesz-use} which relates these FD-semigroups with Markov
kernels.  This allows one to see the connection with familiar
probabilistic transition systems.  

\section{Some background on Brownian motion}
\label{app:BM}

   \subsection{Definition of Brownian motion}

Brownian motion is a stochastic process describing the irregular motion of
a particle being buffeted by invisible molecules.  Now its range of
applicability extends far beyond its initial application~\cite{Karatzas12}.
The following definition is from~\cite{Karatzas12}.
\begin{definition}
A standard one-dimensional Brownian motion is a Markov process:
\[ B = (B_t,\cF_t), 0 \leq t < \infty  \]
where $(\cF_t)_{t \geq 0}$ is the natural filtration adapted to $(B_t)_{t \geq 0}$ and whe stochastic process $(B_t)_{t \geq 0}$ is
defined on a probability space $(\Omega,\cF,P)$ with the properties
\begin{enumerate}
\item $B_0 = 0$ almost surely,
\item for $0\leq s < t$, $B_t - B_s$ is independent of $\cF_s$ and is
  normally distributed with mean $0$ and variance $t-s$.
\end{enumerate}
\end{definition}

In this very special process, one can start at any place, there is an
overall translation symmetry which makes calculations more tractable. We denote $(B_t^x)_{t \geq 0}$ the standard Brownian motion on the real line starting from $x$ in section \ref{sec:BM-ex}.

In order to do any calculations we use the following fundamental formula:  If
the process is at $x$ at time $0$ then at time $t$ the probability that it
is in the (measurable) set $D$ is given by
\[ P_t(x,D) = \int_{y\in D} \frac{1}{\sqrt{2\pi t}}
  \exp\left(-\frac{(x-y)^2}{2t}\right)\mathrm{d}y. \]
The associated FD semigroup is the following: for $f \in C_0(\mathbb{R})$ and $x \in \mathbb{R}$,
\[ \hat{P}_t (f) (x) = \int_{y} \frac{f(y)}{\sqrt{2\pi t}}
  \exp\left(-\frac{(x-y)^2}{2t}\right)\mathrm{d}y. \]

   \subsection{Hitting time for stochastic processes}

We follow the definitions of Karatzas and Shreve in~\cite{Karatzas12}.

Assume that $(X_t)_{t \geq 0}$ is a stochastic process on $(\Omega, \cF)$ such that $(X_t)_{t \geq 0}$ takes values in a state space $E$ equipped with its Borel algebra $\cB (\cO))$, has right-continuous paths (for every time $t$, $\omega(t) = \lim_{s \to t, s > t} \omega (s)$) and is adapted to a filtration $(\cF_t)_{t \geq 0}$.

\begin{definition}
Given a measurable set $C \in \cB(\cO)$, the \emph{hitting time} is the random time
\[ T_C (\omega) = \inf \{ t \geq 0 ~|~ X_t(\omega) \in C \}. \]
\end{definition}
Intuitively $T_C(\omega)$ corresponds to the first time the trajectory $\omega$ ``touches'' the set $C$. 

   \subsection{Hitting time for Brownian motion}

Consider $(B_t^x)_{t \geq 0}$ the standard Brownian motion on the real line starting from $x > 0$. With probability 1, the trajectories of Brownian motion are continuous, which means that
\[ \mathbb{P}^x (H_{(- \infty, 0]} < t) = \mathbb{P}^x (H_{\{0\}} < t).  \]
Here we denote $\mathbb{P}^x$ for the probability on trajectories for Brownian motion starting from $x$. We refer the reader to our previous papers \cite{Chen19a,Chen20,Chen23} for a full description.

A standard result is that
\[ \mathbb{P}^x (H_{\{0\}} < t) = \sqrt{\frac{2}{\pi}} \int^{+ \infty}_{x / \sqrt t} e^{- s^2 / 2} ds. \]
We refer the reader to \cite{Borodin97} for formulas regarding various hitting times of standard Brownian motion and many of its variants.

In section \ref{sec:BM-ex}, we introduced the hitting time $\tau:=\inf\{t\geq0:B_{t}^{x}=0\text{ or }B_{t}^{x}=1\}$, with $x \in [0,1]$. This corresponds to the hitting time of the set $(- \infty, 0] \cup [1, + \infty)$.  We can then define the process $B_{t\wedge\tau}^{x}$ on the state space $[0,1]$: the intuition is that this process behaves like Brownian motion until it reaches one of the ``boundaries'' (either 0 or 1) where it becomes ``stuck''.

Note that there are two main types of boundaries that one can consider for Brownian motion on the real line: when the process hits the boundary, it can either ``get stuck'' / vanish (boundary with absorption) or bounce back (boundary with reflection). 

\section{Proofs for Section \ref{sec:gen-CT}}
\label{app:gen-CT}

\lemmaordermFc*

\begin{proof}
Consider a pair of states $x,y$. Then
\begin{align*}
\cF_c(m)(x,y)
& = \sup_{t \geq 0} c^t W(m)(P_t(x), P_t(y))\\
& \geq W(m)(P_0(x), P_0(y))\\
& = \inf_{\gamma \in \Gamma(P_0(x), P_0(y))} \int m ~ \dee \gamma.
\end{align*}
Since $P_0(x)$ is the dirac distribution at $x$ and similarly for $P_0(y)$, the only coupling $\gamma$ between $P_0(x)$ and $P_0(y)$ is the product measure $P_0(x) \times P_0(y)$ and thus $W(m)(P_0(x), P_0(y)) = m(x, y)$, which concludes the proof. \hfill $\square$
\end{proof}

\thmmfixpointF*

\begin{proof}
We will omit writing $c$ as an index for the pseudometrics $\delta_n$ and $\overline{\delta}$ throughout this proof. Fix two states $x,y$ and a time $t$.

The space of finite measures on $E \times E$ is a linear topological space.
Using Lemma \ref{lemma:couplings-compact}, we know that the set of couplings $\Gamma (P_t(x), P_t(y))$ is a compact subset.  It is also convex.

The space of bounded pseudometrics on $E$ is also a linear topological space.  We have defined a sequence $\delta_0, \delta_1, ...$ on that space.  We define $Y$ to be the set of linear combinations of pseudometrics $\sum_{n \in \mathbb{N}} a_n \delta_n$ such that for every $n$, $a_n \geq 0$ (and finitely many are non-zero) and $\sum_{n \in \mathbb{N}} a_n = 1$.  This set $Y$ is convex.

Define the function
\begin{align*}
\Xi : \Gamma (P_t(x), P_t(y)) \times Y & \to [0,1]\\
(\gamma, m) & \mapsto\int m ~ \dee \gamma.
\end{align*}

For $\gamma \in \Gamma (P_t(x), P_t(y))$, the map $\Xi(\gamma, \cdot)$ is continuous by the dominated convergence theorem and it is monotone and hence quasiconcave.  For a given $m \in Y$, by definition of the Lebesgue integral, $\Xi(\cdot, m)$ is continuous and linear.

We can therefore apply Sion's minimax theorem:
\begin{align}
\label{eq:minimax-F}
\min_{\gamma \in \Gamma (P_t(x), P_t(y))} \sup_{m \in Y} \int m ~ \dee \gamma =  \sup_{m \in Y} \min_{\gamma \in \Gamma (P_t(x), P_t(y))}  \int m ~ \dee \gamma.
\end{align}

Now note that for an arbitrary functional $\Psi: Y \to \mathbb{R}$ such that for any two pseudometrics $m$ and $m'$,  $m \leq m' \Rightarrow \Psi(m) \leq \Psi(m')$, 
\begin{align}
\label{eq:numberd}
\sup_{m \in Y} \Psi(m) = \sup_{n \in \mathbb{N}} \Psi (\delta_n).
\end{align}

Indeed since $\delta_n \in Y$ for every $n$,  we have that $\sup_{m \in Y} \Psi(m) \geq \sup_{n \in \mathbb{N}} \Psi (\delta_n)$.  For the other direction, note that for every $m \in Y$, there exists $n$ such that for all $k \geq n$, $a_k = 0$.  Thus $m \leq \delta_n$ and therefore $\Psi(m) \leq \Psi(\delta_n)$ which is enough to prove the other direction.

The right-hand side of Equation (\ref{eq:minimax-F}) is
\begin{align*}
\sup_{m \in Y} \min_{\gamma \in \Gamma (P_t(x), P_t(y))}  \int m ~ \dee \gamma 
&= \sup_{m \in Y} W(m)(P_t(x), P_t(y))\\
& = \sup_{n \in \mathbb{N}} W(\delta_n)(P_t(x), P_t(y))\quad \text{(previous result in Equation (\ref{eq:numberd})).}
\end{align*}

The left-hand side of Equation (\ref{eq:minimax-F}) is
\begin{align*}
& \min_{\gamma \in  \Gamma (P_t(x), P_t(y))} \sup_{m \in Y} \int m ~ \dee \gamma\\
& = \min_{\gamma \in  \Gamma (P_t(x), P_t(y))} \sup_{n \in \mathbb{N}} \int \delta_n ~ \dee \gamma \quad \text{(previous result in Equation (\ref{eq:numberd}))}\\
& = \min_{\gamma \in  \Gamma (P_t(x), P_t(y))}  \int \sup_{n \in \mathbb{N}} \delta_n ~ \dee \gamma \quad \text{(dominated convergence theorem)}\\
& = \min_{\gamma \in  \Gamma (P_t(x), P_t(y))}  \int \overline{\delta} ~ \dee \gamma \\
& = W(\overline{\delta})(P_t(x), P_t(y)).
\end{align*}

We thus have that
\begin{align*}
\cF_c(\overline{\delta}) (x, y) & = \sup_{t \geq 0} c^t W(\overline{\delta})(P_t(x), P_t(y))\\
& = \sup_{t \geq 0} c^t \sup_{n \in \mathbb{N}} W(\delta_n)(P_t(x), P_t(y)) \quad \text{using Sion's minimax theorem}\\
& = \sup_{n \in \mathbb{N}}  \sup_{t \geq 0} c^t W(\delta_n)(P_t(x), P_t(y))\\
& = \sup_{n \in \mathbb{N}}  \cF_c(\delta_n)(x,y)\\
& =  \sup_{n \in \mathbb{N}} \delta_{n+1} (x,y) \quad \text{by definition of } (\delta_n)_{n \in \mathbb{N}}\\
& = \overline{\delta}(x,y) \quad \text{by definition of } \overline{\delta}.
\end{align*} \hfill $\square$
\end{proof}

\section{Proofs for Section \ref{sec:logics}}
\label{app:sec-logics}

\lemmaalphaf*

\begin{proof}
This proof is done by induction on the structure of $f$:
\begin{itemize}
\item If $f = q$, then $f(x) - f(y) = 0 \leq \delta^c_0(x,y)$.
\item If $f = obs$, then $f(x) - f(y) = obs(x) - obs(y) \leq  \delta^c_0(x,y)$.
\item If $f = 1 - g$ and there exists $n$ such that for every $x,y$, $ g(x) - g(y) \leq \delta^c_n(x,y)$, then $f(x) - f(y) = g(y) - g(x) \leq \delta^c_n(y,x) =  \delta^c_n(x,y)$.
\item If $f = g \ominus q$  and there exists $n$ such that for every $x,y$, $ g(x) - g(y) \leq \delta^c_n(x,y)$,  then it is enough to study the case when $f(x) = g(x) - q \geq 0$ and $f(y) = 0 \geq g(y) -q$. In that case,
\begin{align*}
f(x) - f(y) & = g(x) - q \leq g(x) - q - (g(y) - q) = g(x) - g(y) \leq \delta_n^c(x,y)\\
f(y) - f(x) &=  q - g(x) \leq 0 \leq \delta^c_n(x,y).
\end{align*}
\item If $f = \min \{ f_1, f_2\}$ and for $i = 1,2$ there exists $n_i$ such that for every $x,y$, $ f_i(x) - f_i(y) \leq \delta^c_{n_i}(x,y)$. Then write $n = \max\{ n_1, n_2\}$. There really is only one case to consider: $f(x) = f_1(x) \leq f_2(x)$ and $f(y) = f_2(y) \leq f_1(y)$. In that case
\[ f(x) - f(y) = f_1(x) - f_2(y) \leq f_2(x) - f_2(y) \leq \delta^c_{n_2}(x,y) \leq \delta^c_n(x,y). \]
\item If $f = \langle t \rangle g$ and there exists $n$ such that for every $x,y$, $ g(x) - g(y) \leq \delta^c_n(x,y)$,  then
\begin{align*}
f(x) - f(y)
& = c^t \hat{P}_t g(x) - c^t \hat{P}_t g(y) = c^t \left( \hat{P}_t g(x) -\hat{P}_t g(y) \right)\\
& \leq c^t W(\delta_n^c)(P_t(x), P_t(y))\\
& \leq \delta_{n+1}^c (x,y).
\end{align*}
\end{itemize} \hfill $\square$
\end{proof}

\lemmaprepmagieF*

\begin{proof}
WLOG $h(z) \geq h(z')$ and $f(z) \geq f(z')$ (otherwise consider $1-f$ instead of $f$).

Pick $p,q,r \in \mathbb{Q} \cap [0,1]$ such that
\begin{align*}
p & \in [f(z') - \epsilon, f(z')],\\
q & \in [h(z) - h(z') - \epsilon, h(z) - h(z')],\\
r & \in [h(z'), h(z') + \epsilon].
\end{align*}
Define $g = ( \min \{ f \ominus p, q \} ) \oplus r$. Then,
\begin{align*}
(f \ominus p)(z)
& \in [f(z) - f(z'), f(z) - f(z') + \epsilon] ~~ \text{since } f(z) \geq f(z'), \\
(\min \{ f \ominus p, q \})(z)
& = q ~~ \text{since } q \leq h(z) - h(z') \leq f(z) - f(z'),\\
g(z) & = \min \{ 1, q + r\} \in [h(z) - \epsilon, h(z) + \epsilon)] ~~ \text{as } h(z) \leq 1,
\end{align*}
which means that $|h(z) - g(z)| \leq  \epsilon$.
\begin{align*}
(f \ominus p)(z')
& = \max \{ 0, f(z') - p\} \in [0, \epsilon],\\
(\min \{ f \ominus p, q \})(z')
& \in [0, \epsilon],\\
g(z')& \in [h(z'), h(z') +2 \epsilon],
\end{align*}
which means that $|h(z') - g(z')| \leq 2 \epsilon$. \hfill $\square$
\end{proof}

\section{Computations for Section \ref{sec:example}}
\label{app:example}

\begin{itemize}
\item $\overline{\delta}(0, \partial)$: First of all, note that $\delta_0 (0, \partial) = 1$ which means that 
\[1 = \delta_0 (0, \partial) \leq \overline{\delta}(0, \partial) \leq 1\]
 and thus $\overline{\delta}(0, \partial) = 1$.
 \item $\overline{\delta}(0, x)$: First of all, $\delta_0 (0, x) = 1 - r$. We can now compute $\delta_1 (0, x):$
\begin{align*}
 \delta_1 (0, x) & = \sup_{t \geq 0} e^{-\lambda t} \left( ( 1 - e^{- \lambda t}) \times \delta_0(0,0) +  e^{- \lambda t} \times \delta_0(0, x) \right) \\
 &= \sup_{t \geq 0} e^{-2 \lambda t} \times(1 - r) = 1-r 
\end{align*}
Since $\delta_1 (0, x) = \delta_0 (0, x) = 1 -r$, we also have that $\overline{\delta}(0, x) = 1 - r$.
\item $\overline{\delta}(0, y)$: First of all, $\delta_0 (0, y) = 1 - r$. We can now compute $\delta_1 (0, y)$:
\begin{align*}
 \delta_1 (0,y) & = \sup_{t \geq 0} e^{-\lambda t} \delta_0 (0,y) \\
 &= \sup_{t \geq 0} e^{- \lambda t} \times(1 - r) = 1-r 
\end{align*}
Since $\delta_1(0, y) = \delta_0(0, y) = 1 - r$, we also have that $\overline{\delta}(0, y) = 1 - r$.
\item $\overline{\delta}(y, \partial)$: First of all, $\delta_0 (y, \partial) = r$. We can now compute $\delta_1 (y, \partial)$:
\begin{align*}
 \delta_1 (y, \partial) & = \sup_{t \geq 0} e^{-\lambda t} \delta_0 (y, \partial) \\
 &= \sup_{t \geq 0} e^{- \lambda t} r = r
\end{align*}
Since $\delta_1 (y, \partial) = \delta_0(y, \partial) = r$, we also have that $\overline{\delta}(y, \partial) = r$.

\item $\overline{\delta}(x, y)$: First, by induction on $n$, we prove that $\delta_n(x,y) = \epsilon_n (1 - r)$ where the sequence $(\epsilon_n)_{n \in \mathbb{N}}$ is defined as
\[ \epsilon_0 = 0 \qquad \epsilon_{n + 1} = \frac{1}{4( 1 - \epsilon_n)}\]
and satisfies $0< \epsilon_n \leq \frac{1}{2}$ for every $n \in \mathbb{N}$.
Indeed, $\delta_0(x, y) = r - r = 0$ and assuming the result holds at rank $n$:
\begin{align*}
\delta_{n+1} (x, y) & = \sup_{t \geq 0} e^{- \lambda t} \left( (1 - e^{- \lambda t} ) \delta_n (0, y) + e^{- \lambda t}  \delta_n(x, y) \right)\\
& = \sup_{0< \theta \leq 1} \theta \left( (1 - \theta )(1 - r) + \theta (1 - r)\epsilon_n \right)\\
& = (1 - r) \sup_{0< \theta \leq 1} \theta \left( 1 - \theta(1 - \epsilon_n) \right).
\end{align*}
We are now left looking for the supremum of the function $\phi: \theta \mapsto  \theta \left( 1 - \theta(1 - \epsilon_n) \right)$ on $(0,1]$.  We have that
\[ \phi'(\theta) =  1 - 2 \theta(1 - \epsilon_n) \]
and we can thus conclude that its maximum is attained at $\frac{1}{2 (1 - \epsilon_n)} \in \left(\frac{1}{2}, 1\right]$ and thus
\[ \delta_{n+1} (x, y) = (1 - r) \phi \left( \frac{1}{2 (1 - \epsilon_n)} \right) = \frac{1}{2 (1 - \epsilon_n)} \times \frac{1}{2} \in \left( \frac{1}{4}, \frac{1}{2} \right]. \]

We are now left to study the limit of the sequence $(\epsilon_n)_{n \in \mathbb{N}}$. We know that this limit exists so we only have to compute it.
This limit $l$ satisfies $l =  \frac{1}{4( 1 - l)}$
and we can thus conclude that $\overline{\delta}(x,y) = \frac{1 - r}{2}$.

\item $\overline{\delta}(y, z)$: Similarly to what we did before, we obtain that $\delta_0 = 0$ and
\[ \delta_{n+1} (y ,z) = \sup_{0 < \theta \leq 1} \theta \left( \frac{1 - \theta}{2} + \theta \delta_n(y, z) \right) \leq \frac{1}{4}. \]
This supremum is attained for $\theta = \frac{1}{2(1 - 2\delta_n(y, z))} \leq 1$ and thus
\[ \delta_{n+1} (y ,z) =\frac{1}{8(1 - 2\delta_n(y, z))} \]
and finally $\overline{\delta}(y, z) = \frac{1}{4}$.

\item $\overline{\delta}(x, \partial)$: First, we have that $ \delta_0(x, \partial) = r - 0 = r$ and for every $n \geq 0$,
\begin{align*}
\delta_{n+1} (x, \partial) & = \sup_{t \geq 0} e^{- \lambda t} \left( (1 - e^{- \lambda t} ) \delta_n (0, \partial) + e^{- \lambda t}  \delta_n(x, \partial) \right)\\
& = \sup_{0< \theta \leq 1} \theta \left( (1 - \theta ) + \theta  \delta_n(x, \partial) \right)
\end{align*}
We are now left looking for the supremum of the function 
\[\phi: \theta \mapsto \theta \left( (1 - \theta ) + \theta  \delta_n(x, \partial) \right) \text{ on }(0,1].\] 
Its derivative $\phi'$ has value $0$ at $\theta_0 = \frac{1}{2(1 - \delta_n (x, \partial))}$.  From the theory, we know that $\delta_n(x, \partial) \leq 1$ and we therefore know that $\phi$ reaches its supremum in $\theta_0 \geq \frac{1}{2}$. There are now two cases to consider: either $\theta_0 > 1$ or $\theta_0 \leq 1$. We get that
\[ \overline{\delta}(x, \partial) = \begin{cases}
r \qquad \text{if } r \geq \frac{1}{2}\\
\frac{1}{2} \qquad \text{otherwise}.
\end{cases} \]

\item $\overline{\delta}(z, \partial)$: This case is similar to the previous ones: $ \delta_0(z, \partial) = r$ and 
\begin{align*}
\delta_{n+1} (x, \partial) 
& = \sup_{0< \theta \leq 1} \theta \left( \frac{1 - \theta }{2} + \theta  \delta_n(z, \partial) \right)
\end{align*}
Similarly to the case of $\overline{\delta}(0, z)$. we have three cases to study, and we end up with
\[ \overline{\delta}(z, \partial)= \begin{cases}
r \qquad  \text{if } r \geq \frac{1}{4}\\
\frac{1}{4} \qquad \text{otherwise}.
\end{cases} \]

\item $\overline{\delta}(x, z)$: due to a conflict of notation, we will write $\mathfrak{d}_b$ for the Dirac measure centered in $b$.

First, we note that $\overline{\delta}(x, z) \leq \frac{1}{2}$. Similarly to what we have done so far, we have that $\delta_0(x, z) = 0$ and
\[ \delta_{n+1}(x, z) = \sup_{0 < \theta \leq 1} \theta W (\delta_n) (\theta \mathfrak{d}_x + (1 - \theta) \mathfrak{d}_0, \theta \mathfrak{d}_z + \frac{1- \theta}{2} (\mathfrak{d}_0 + \mathfrak{d}_\partial)). \]
We can now show by induction that for every $n$, $\delta_n(x, z) \leq \frac{1}{4}$: consider the coupling $\gamma_\theta$ defined by
\[ \gamma(0,0)\footnote{Due to how heavy the notations are already, we write $\gamma(0,0)$ instead of $\gamma(\{ (0, 0)\})$} = \gamma (0, \partial) = \frac{1- \theta}{2} \text{ and } \gamma (x, z) = \theta. \]
We then have that
\begin{align*}
& W (\delta_n) \left(\theta \mathfrak{d}_x + (1 - \theta) \mathfrak{d}_0, \theta \mathfrak{d}_z + \frac{1- \theta}{2} (\mathfrak{d}_0 + \mathfrak{d}_\partial)\right)\\
& \leq \int \delta_n ~ \dee \gamma_\theta\\
& = \frac{1- \theta}{2} \delta_n(0, \partial) + \theta \delta_n(x, z) \\
& = \frac{1}{2} - \theta \left( \frac{1}{2} - \delta_n(x, z) \right)
\end{align*}
This means that
\begin{align*}
\delta_{n +1} (x, z) & \leq  \sup_{0 < \theta \leq 1} \theta \left[  \frac{1}{2} - \theta \left( \frac{1}{2} - \delta_n(x, z) \right) \right]
\end{align*}
And we get that $\delta_{n +1} (x, z) \leq \frac{1}{4}$ as usual.

We also have that
\begin{align*}
\overline{\delta}(x, z)
& \geq \sup_{0 < \theta \leq 1} \theta \left| \left( \theta obs(x) + (1 - \theta) obs (0) \right) - \left( \theta obs(z) + \frac{1 - \theta}{2} (obs(\partial) + obs (0)) \right) \right|\\
& = \sup_{0 < \theta \leq 1} \theta \left| \left( \theta r + (1 - \theta)  \right) - \left( \theta r + \frac{1 - \theta}{2} \right) \right|\\
& =  \frac{1}{2}\sup_{0 < \theta \leq 1} \theta (1 - \theta) = \frac{1}{8}
\end{align*}

From this we conclude that $ \frac{1}{8} \leq \overline{\delta}(x, z) \leq  \frac{1}{4}$.

\end{itemize}

\end{document}